\documentclass[pra,aps,twocolumn,longbibliography,nofootinbib,showpacs]{revtex4-1}

\usepackage[utf8]{inputenc}      
\usepackage[T1]{fontenc}         
\usepackage[english]{babel}      

\usepackage[colorlinks=true, citecolor=blue, linkcolor=orange, urlcolor=blue]{hyperref}

\usepackage{amsmath, amssymb, amsfonts, mathtools, bm, bbm}
\usepackage{amsthm}
\usepackage{physics}
\usepackage{braket}
\usepackage{esvect}
\usepackage{latexsym}
\usepackage{mathrsfs}
\usepackage[mathscr]{euscript}

\usepackage{graphicx}
\usepackage{wrapfig}
\usepackage{float}
\usepackage{adjustbox}
\usepackage{graphics, epstopdf}
\usepackage{epsfig}
\usepackage{tabularx}
\usepackage{booktabs}
\usepackage{dcolumn}

\usepackage{xcolor}
\usepackage{soul}
\usepackage[normalem]{ulem}
\usepackage{enumitem}
\usepackage{csquotes}
\usepackage{fmtcount}
\usepackage{verbatim}
\usepackage{appendix}
\usepackage{amstext} 
\let\oldEuScript\EuScript
\renewcommand{\EuScript}[1]{\text{\ensuremath{\oldEuScript{#1}}}}

\theoremstyle{plain}
\newtheorem{theorem}{Theorem}
\newtheorem{lemma}{Lemma}
\newtheorem{corollary}{Corollary}
\newtheorem{proposition}{Proposition}

\newtheorem{remark}{Remark}
\newtheorem{example}{Example}

\def\Tr{\operatorname{Tr}}

\def\bea{\begin{eqnarray}}
\def\eea{\end{eqnarray}}
\def\ba{\begin{array}}
\def\ea{\end{array}}

\def\beq{\begin{equation}}
\def\eeq{\end{equation}}

\renewcommand{\[}{\left[}

\usepackage{orcidlink}

\begin{document}



\title{Towards minimal conditions for ergotropy injection in  open quantum systems}

\author{Deependra Singh\,\orcidlink{0009-0009-0552-2506}} 
\email{deependrasingh@hri.res.in}

\author{Debarupa Saha\,\orcidlink{0009-0006-3226-7799}} 
\email{debarupa73@gmail.com}

\author{Ujjwal Sen\,\orcidlink{0000-0002-0091-5847}} 
\email{ujjwal@hri.res.in}

\affiliation{Harish-Chandra Research Institute, Chhatnag Road, Jhunsi,
Prayagraj 211 019, India\\
Homi Bhabha National Institute, Training School Complex,
Anushakti Nagar, Mumbai 400 094, India}

\begin{abstract}
Interactions between a quantum system and its environment can inject ergotropy into the system, raising the question of the minimal dimensionality and physical resources required for such injection. 
We first show that ergotropy injection is impossible under thermal operations when both the system and environment are qubits, whereas it becomes possible when the environment is enlarged to a qutrit. We further show that already in the qubit–qubit setting, relaxing environmental thermality and allowing interactions between system and environment allows ergotropy increment under energy-conserving unitaries. 
To elucidate the role of interactions, we consider a two-qubit isotropic XY interaction Hamiltonian and identify its distinct degeneracy regimes. We show that, in the central-block regime, when both the initial system and environmental states are incoherent, no ergotropic gain is possible when the environment is initially thermal, irrespective of the interaction strength. In contrast, environmental athermality in the form of population inversion, while retaining incoherence, enables ergotropic injection.
 We derive the optimal ergotropic gain and show that environmental coherence can enhance it, while system coherence alone need not be beneficial and can even reduce the gain. We further consider the double-degenerate regime, characterized by a finite interaction strength, and demonstrate positive ergotropic gain even for a thermal environment.

\end{abstract}
\maketitle
\section{Introduction}
Energy extraction from quantum systems is a central theme in quantum thermodynamics~\cite{Kosloff2013,Goold2016,Vinjanampathy2016,BinderBook2018,DeffnerCampbell2019}, with fundamental connections to both the operation of quantum batteries~\cite{AlickiFannes2013, Campaioli2024, Binder2015, Campaioli2017, Ferraro2018, Barra2022, GhoshMal2021, GhoshSen2022, Konar2022, Bhattacharyya2024prl,  Shukla2025, Chaki2025pra, Sarkar2025, Chaki2026jpa}
and the performance of microscopic heat engines~\cite{Scovil1959, Alicki1979, Kieu2004, AllahverdyanJohal2008, bhattacharjee2021, Biswas2022, Saha2026}.
For a quantum system in a given state, the maximum amount of energy that can be extracted by cyclic unitary driving is quantified by the ergotropy~\cite{Allahverdyan2004}, {which measures the departure of a state from passivity~\cite{Pusz1978, Lenard1978, Skrzypczyk2015,PerarnauLlobet2015passive, Brown2016, KSen2021}}. Quantum states with nonzero ergotropy, referred to as active states, can therefore serve as quantum batteries, providing a temporary storage of extractable energy that can subsequently be released to perform useful thermodynamic processing tasks. 

Beyond quantifying the amount of extractable energy, a central question has been to understand whether and how genuinely quantum features can enhance work extraction.  In particular, the roles of coherence, entanglement, and quantum correlations in determining the extractable work have been extensively investigated~\cite{Oppenheim2002, Hovhannisyan2013, Funo2013, PerarnauLlobet2015, Mukherjee2016, Korzekwa2016,
Francica2017, Alimuddin2019, Andolina2019, Francica2020, Touil2022,
SalviaGiovannetti2022, Ahuja2025, Mondal2025}.

Besides theoretical advances, recent experiments have begun to probe work extraction and energy storage in quantum systems across diverse platforms. Examples include superabsorption in an organic microcavity~\cite{Quach2022}, optimal charging of a superconducting qutrit quantum battery~\cite{Hu2022}, and the realization of star-topology quantum batteries using nuclear magnetic resonance techniques~\cite{Joshi2022},{photonic demonstrations of the charging process~\cite{Huang2023}, and the direct measurement of coherent
ergotropy in a single spin system~\cite{Niu2024}}. These experiments demonstrate the growing feasibility of implementing quantum energy-storage and extraction protocols in controllable physical platforms.

Quantum systems are generally not isolated and inevitably interact with their surrounding environments. Such interactions are commonly associated with decoherence and dissipation, but they can also increase the energetic usefulness of a quantum system by injecting ergotropy into it. This possibility has motivated a variety of mechanisms for charging quantum systems and manipulating stored ergotropy, including charger-mediated and dissipative protocols~\cite{Farina2019, Barra2019}, interaction-assisted local work extraction~\cite{Salvia2025}, collisional models with~\cite{Seah2021, Ciccarello2022}, environment-coherence-assisted charging in open-system dynamics~\cite{Cakmak2020, Francica2020, GhoshMal2021}, 
measurement-assisted extraction and protection of stored ergotropy~\cite{Malavazi2025, Chaki2026}, and many-body ergotropy fluctuations~\cite{Hovhannisyan2024}. 

These studies demonstrate that environmental interactions can provide viable mechanisms for increasing the extractable work of a quantum system.
However, they typically require resources that are far from minimal, such as external driving fields~\cite{Binder2015, Dou2022dicke} or repeated
interactions with streams of ancillas~\cite{Seah2021}. Others are constrained by asymptotic requirements: the optimal extractable
work per battery is reached only for asymptotically many copies~\cite{AlickiFannes2013}, and the free-energy bound on ergotropy extraction is saturated only for an infinite-dimensional working body~\cite{Biswas2022}. This raises a basic question: \emph{what is the minimal physical configuration in which an environment can inject ergotropy into a quantum system?}

This minimal-setting question is particularly relevant in the context of current quantum technologies. Qubits constitute the elementary building blocks of a wide range of quantum platforms, including superconducting circuits, trapped ions, and solid-state quantum devices~\cite{Leibfried2003, Doherty2013, Kjaergaard2020, Blais2021}. Thus, a qubit system interacting with a single qubit environment represents the smallest nontrivial system-environment configuration in which environmental charging can be investigated. Establishing whether ergotropy injection is possible in this setting, and determining the minimal modification required when it is not, provides a useful benchmark for the fundamental limits of environmental charging. 


A natural framework for addressing this question is provided by thermal operations (TO), the standard class of free operations in the resource theory of quantum thermodynamics~\cite{Janzing2000, Horodecki2013, Brandao2013, Brandao2015, Gour2015, Scharlau2018, Lostaglio2019}. In a thermal operation, the system is coupled to an environment initially prepared in a Gibbs state, and the joint system evolves under a unitary that conserves the sum of the bare system and environment Hamiltonians. {These constraints strongly restrict the dynamics of coherence~\cite{LostaglioPRX2015, Lostaglio2015, Cwiklinski2015} and imply free-energy bounds on the extractable work~\cite{Biswas2022}.} Although ergotropy injection is possible under thermal operations, the conditions under which it can occur in the smallest finite-dimensional setting remain largely unexplored.

Here we address this question by considering the minimal qubit-qubit configuration. We show that thermal operations can never inject ergotropy into the system in the qubit-qubit setting, irrespective of the environment temperature or energy gap. This protection is specific to the qubit environment: increasing the environment dimension by only one level is sufficient to enable ergotropy injection. In particular, we show that a qutrit environment can generate a positive ergotropic gain in a qubit system. Thus, within the thermal-operation framework, the qutrit is the minimal environmental dimension capable of injecting ergotropy into a qubit.

This result naturally raises the question of which assumption underlying thermal operations is responsible for the qubit-qubit no ergotropy gain result. Is it the requirement that the environment be thermal, or the restriction to energy conservation with respect to the sum of the bare Hamiltonians? To distinguish these possibilities, we go beyond thermal operations while retaining strict energy conservation. We allow an interaction term in the total Hamiltonian and consider an arbitrary, possibly athermal, initial state of the environment. The global unitaries are then required to commute with the \emph{full} Hamiltonian, including the interaction term, rather than only with the sum of the bare system and environment Hamiltonians. This naturally generalizes thermal operations by maintaining energy conservation while relaxing the assumptions of environmental thermality and a noninteracting total Hamiltonian. 

As a minimal realization of this generalized setting, we consider a qubit system coupled to a qubit environment through a isotropic XY exchange interaction~\cite{LSM1961, Jaynes1963}. The degeneracy structure of the conserved Hamiltonian determines the distinct regimes, the corresponding admissible energy-conserving unitaries, and, consequently, the mechanisms enabling ergotropy injection. In the central-block (CB) regime, energy-conserving unitaries act nontrivially within the single-excitation subspace, whereas in the double-degeneracy (DD) regime, the degeneracy extends between the zero- and two-excitation states and the single-excitation sector.

We first analyze the CB regime. We show that, unlike the thermal-operation setting, positive ergotropic gain can arise already in the minimal qubit--qubit configuration. For initially incoherent system and environment states, we find that the interaction by itself is insufficient to generate ergotropy, and that environmental athermality is essential for a positive gain. In fact we show that even in the absence of interactions, a population-inverted environment can generate the maximum possible ergotropic gain. In presence of interaction we obtain the corresponding optimal ergotropic gain and show that environmental coherence can further enhance it, while system coherence alone does not necessarily increase the gain and may instead be detrimental.

Furthermore, we demonstrate a positive ergotropic gain in the double-degeneracy regime and derive the optimal gain achievable in this regime. In contrast to the CB regime, we show that ergotropy injection is possible even when the environment is initially passive. In particular, positive ergotropic gain can be achieved even when the initial environment is more passive than the initial state of the system. {Closest to this setting, energy-conserving transport between two qubits has been shown to yield a net gain, but only when the two are initially correlated~\cite{Simon2025}; here no such correlation is required, the resource being the degeneracy structure of the conserved Hamiltonian.}

Our results therefore establish the minimal environmental dimension required for ergotropy injection under thermal operations and demonstrate how this limitation can be overcome in the minimal qubit-qubit setting by relaxing the assumptions of environmental thermality and a noninteracting total Hamiltonian. More broadly, they separate the roles of environment dimensionality, athermality, and system-environment interactions, providing a finite-dimensional setting in which the physical resources responsible for ergotropy injection can be identified. 

Figure~\ref{fig:schematic} provides a conceptual picture of the central idea of this work. The system $S$ is represented as attempting to pull an ``ergotropy box,'' symbolizing the acquisition of ergotropic gain by the system, while the environment $E$ can either remain unable to assist or actively aid this process, depending on the allowed operation and the resources it possesses. Under thermal operations, the environment is thermal and, despite interacting with the system, cannot assist in bringing the ergotropy box closer to $S$. In contrast, generalized energy-conserving operations  allow resources such as environmental athermality and system-environment interaction to contribute to the charging process. These resources effectively empower $E$ to assist $S$ in pulling the box closer, corresponding to ergotropic injection into the system.

The rest of the paper is organized as follows. In Sec.~\ref{2}, we briefly review the prerequisite concepts. In Sec.~\ref{3}, we present three theorems that together establish that no ergotropic gain is possible under thermal operations in the qubit–qubit setting. In Sec.~\ref{4}, we demonstrate, through an example, that increasing the bath dimension to a qutrit can activate a positive ergotropic gain. In Sec.~\ref{sec:model}, we introduce the XY-interaction model considered and classify the different types of admissible energy-conserving unitaries arising from the distinct degeneracy structures of the model. In Sec.~\ref{sec:ergotropy_interaction}, we begin our analysis of ergotropic injection in the classified regions. In Sec.~\ref{sec:CB}, we present the analysis of the CB regime, while in Sec.~\ref{sec:DD}, we investigate the double-degenerate (DD) regime. We present a summary of the results in Sec.~\ref{summary}.
Finally, we conclude in Sec.~\ref{sec:conclusion}. Appendix~\ref{app:DDsup} contains the proof of the optimal ergotropic gain in the DD regime.

\begin{figure}[t]
  \centering
  \makebox[\columnwidth][c]{
\includegraphics[width=0.47\textwidth,height=0.16\textwidth]
{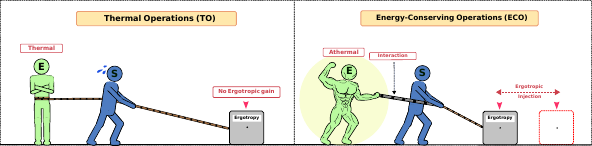}
}
  \caption{Conceptual schematic illustrating the central message of this work. The system $S$ is depicted as pulling an ``ergotropy box,'' representing ergotropic gain in the system. Under thermal operations (TO) (left), the environment $E$ is thermal and cannot assist $S$ in pulling the box closer, indicating the absence of ergotropic gain. Under generalized energy-conserving operations (ECO) (right), environmental athermality and system--environment interaction empower $E$ to assist $S$, thereby enabling ergotropic injection into the battery.}
  \label{fig:schematic}
\end{figure}

We begin by briefly introducing the prerequisites for our analysis, including ergotropy, ergotropic gain, and the definitions of the different types of energy-preserving operations considered in this work. These are defined as follows.

\section{Preliminaries}
\label{2}
\subsection{Ergotropy and Ergotropic Gain}
Let $\rho$ be the state of a quantum system with Hamiltonian $H_S^{d_S}$ acting on a $d_S$-dimensional Hilbert space $\mathcal{H}_S$. Let
$V\in\mathcal{U}(\mathcal{H}_S^{d_S})$ denote the unitary operator corresponding to a cyclic unitary process, where $\mathcal{U}(\mathcal{H}_S)$ is the set of all
unitary operators acting on $\mathcal{H}_S$. Then the ergotropy~\cite{Allahverdyan2004} is defined as the maximum amount of work that
can be extracted from the system through such a process and is given by
\begin{equation}
R(\rho)=\Tr(H_S\rho)-\min_{V\in\mathcal{U}(\mathcal{H}_S^{d_S})}
\Tr(H_SV\rho V^\dagger).
\label{eq:ergotropy}
\end{equation}

Let
$\rho=\sum_k r_k\ketbra{r_k}{r_k}$
and
$H_S=\sum_k\epsilon_k\ketbra{k}{k}$
be the spectral decompositions of the state and the Hamiltonian, respectively,
where the eigenvalues satisfy
$r_1\ge r_2\ge\cdots\ge r_k$
and
$\epsilon_1^S\le\epsilon_2^S\le\cdots\le\epsilon_k^S$. The minimum in Eq.~\eqref{eq:ergotropy} is attained by assigning the eigenvalues of
$\rho$ to the energy eigenstates in decreasing order of population and increasing order of energy, respectively~\cite{Pusz1978,Lenard1978}.
The resulting state,
$
\rho^{p}=\sum_k r_k\ketbra{k}{k},
$
is called the \emph{passive state}. It satisfies
$[\rho^{p},H_S]=0$ and possesses zero ergotropy. Consequently,
\begin{equation}
R(\rho)
=
\Tr(H_S\rho)-\sum_k r_k\epsilon_k^S
\equiv
\Tr(H_S\rho)-E_P(\rho),
\label{eq:ergotropy2}
\end{equation}
where
$E_P(\rho)=\sum_k r_k\epsilon_k^S$
denotes the passive energy.

In general, a quantum system is not isolated but interacts with its surrounding environment. Let the environment be described by a Hilbert space $\mathcal{H}_E^{d_E}$, an initial state $\rho_E$, and a Hamiltonian $H_E$, where $d_E$ denotes the dimension of the environment. The joint
system--environment evolution is governed by a unitary operator
$U_{SE}$ acting on
$\mathcal{H}_S\otimes\mathcal{H}_E$. The corresponding reduced evolution of the
system is described by the quantum channel
\begin{equation}
\Lambda(\rho)
=
\Tr_E\!\left[
U_{SE}
(\rho\otimes\rho_E)
U_{SE}^{\dagger}
\right],
\label{lam}
\end{equation}
where $\Tr_E$ denotes the partial trace over the environment.

Since the state of the system changes under $\Lambda$, its ergotropy generally
changes as well. For a given initial state $\rho$, we therefore define the
\emph{ergotropic gain} as
\begin{equation}
\Delta R
:=
R\!\left(\Lambda(\rho)\right)-R(\rho).
\label{eq:deltar}
\end{equation}
A positive value of $\Delta R$ indicates that the interaction has increased the
extractable work stored in the system, effectively charging it from the
viewpoint of ergotropy. Conversely, $\Delta R<0$ signifies a loss of
extractable work, corresponding to a discharge of the quantum battery.

In this work, we investigate quantum channels generated by
\emph{energy-conserving operations} (ECOs), encompassing both the standard
thermal operations (TOs) and a more general class of energy-preserving
dynamics. 

The different classes of energy-conserving operations considered in this work
are introduced below.
\medskip

\subsection{Energy-Conserving Operations}
\label{EC}
The total Hamiltonian of the system--environment composite, in the absence of any interaction between them, is given by
\begin{equation}
H_{SE}=H_S\otimes I_E+I_S\otimes H_E.
\label{eq:totalH}
\end{equation}
Now, consider the quantum channel $\Lambda(\rho)$, defined in Eq.~\ref{lam}. Such a channel is called a thermal operation (TO)~\cite{Janzing2000, Horodecki2013} if it satisfies the following conditions:
\begin{itemize}
\item The environment is initially prepared in a thermal state,
\begin{equation}
    \rho_E=\gamma_E=e^{-\beta H_E}/\Tr[e^{-\beta H_E}],
    \label{env}
\end{equation}
where $\beta$ is the inverse temperature of the environment.
\item The global unitary $U_{SE}$ conserves the total local energy, i.e.,
$[U_{SE},H_S\otimes I_E+I_S\otimes H_E]=0$.
\item Consequently, the thermal state of the system remains invariant under the operation,
$\Lambda(\gamma_S)=\gamma_S$, where
$\gamma_S=e^{-\beta H_S}/\Tr[e^{-\beta H_S}]$.
\end{itemize}

A more general class of energy-conserving quantum operations is obtained by relaxing the assumption that the environment is initially thermal and allowing an interaction between the system and the environment. In this case, the environment may be prepared in an arbitrary state, and the total Hamiltonian takes the form
\begin{equation}
H_{SE}=H_S\otimes I_E+I_S\otimes H_E+H_{\mathrm{int}},
\end{equation}
where $H_{\mathrm{int}}$ is the system--environment interaction Hamiltonian. The corresponding energy conservation condition becomes
$[U_{SE},H_{SE}]=0$. These operations strictly generalize thermal operations, since thermal operations correspond to the special case $H_{\mathrm{int}}=0$ together with a thermal environmental state.

In this work, we systematically investigate ergotropy injection in finite-dimensional systems. We first consider a qubit system interacting with a qubit environment and prove that thermal operations are incapable of increasing the system's ergotropy. We then extend our analysis to a qutrit system and uncover a striking dimensional transition: unlike the qubit case, thermal operations alone become sufficient to inject ergotropy into the system. Furthermore within the qubit-qubit setting we show that ergotropy injection becomes possible considering the broader class of energy-conserving operations. More remarkably, even in the absence of an interaction Hamiltonian ($H_{\mathrm{int}}=0$), merely preparing the environment in an athermal state is sufficient to increase the ergotropy of the system. This demonstrates that the non-thermal nature of the environment itself can act as a resource for charging a quantum system, without relying on direct system--environment interactions.


We present our findings for a qubit system interacting with a qubit environment in the following section.


\section{Impossibility of Ergotropy Injection under TO for a Qubit System and a Qubit Environment}
\label{3}

Before showing that generalized energy-conserving operations can inject
ergotropy, we first establish that this is impossible under standard
thermal operations (TOs). Throughout this section, we restrict our
attention to a qubit system, i.e., $d_S=2$. We prove this impossibility in two complementary scenarios. The first theorem addresses thermal operations in the non-degenerate regime, while the second and third consider thermal operations in the degenerate regime.

A thermal operation is said to be implemented in the {non-degenerate} regime if no Bohr frequency of the system coincides with any Bohr
frequency of the environment~\cite{Cwiklinski2015}. Equivalently, if the spectra
$\{\epsilon_i^S\}_{i=0}^{d_S-1}$ of $H_S$ and $\{\epsilon_k^E\}_{k=0}^{d_E-1}$ of $H_E$ satisfy
\begin{equation}
\epsilon_i^S+\epsilon_k^E=\epsilon_j^S+\epsilon_l^E
\quad\Longrightarrow\quad i=j,\;k=l,
\label{eq:nonres}
\end{equation}
then the system--environment Hamiltonian pair is said to satisfy the
non-resonance condition. Otherwise, if there exists at least one
quadruple $(i,j,k,l)$ with $i\neq j$ and $k\neq l$ satisfying
\begin{equation}
\epsilon_i^S+\epsilon_k^E=\epsilon_j^S+\epsilon_l^E,
\label{non}
\end{equation}
the system-environment Hamiltonian pair is said to satisfy the
degenerate condition.

Since we consider only a qubit system, we first derive a closed-form
expression for its ergotropy, which will serve as a key ingredient in
the proofs of the subsequent theorems.

\begin{lemma}
\label{l1}
Let the qubit Hamiltonian be $H_S=h\sigma_z, h>0$, where $\sigma_z$ is the usual Pauli matrix, with $|0\rangle$ and $|1\rangle$ denoted as the excited and ground states,
respectively. And let $\rho$, be an arbitrary single-qubit state of the form
\begin{equation}
\rho=
\begin{pmatrix}
1-p & a\\
a^{*} & p
\end{pmatrix},
\qquad
\Delta\equiv1-2p,
\label{eq:qubitparam}
\end{equation}
where $p\in[0,1]$ is the ground-state population, the ergotropy of
$\rho$ is given by
\begin{equation}
R(\rho)=h\,f(\Delta,|a|),
\label{eq:qubit_ergotropy}
\end{equation}
where
\begin{equation}
f(\Delta,r)
=
\Delta+\sqrt{\Delta^{2}+4r^{2}}
=
\frac{4r^{2}}
{\sqrt{\Delta^{2}+4r^{2}}-\Delta}.
\label{eq:fdef}
\end{equation}
\end{lemma}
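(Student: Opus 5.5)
The plan is to evaluate Eq.~\eqref{eq:ergotropy2} directly. That means computing the mean energy $\Tr(H_S\rho)$, then the passive energy $E_P(\rho)$ from the spectrum of $\rho$, and taking the difference. With $H_S=h\sigma_z$ in the stated convention, the excited state $|0\rangle$ has energy $+h$ and the ground state $|1\rangle$ has energy $-h$. Only the diagonal of $\rho$ in Eq.~\eqref{eq:qubitparam} contributes to the energy, so
\begin{equation*}
\Tr(H_S\rho)=h(1-p)-hp=h\Delta .
\end{equation*}

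Next I diagonalize $\rho$. Since $\Tr\rho=1$ and $\det\rho=p(1-p)-|a|^2$, the characteristic equation gives the eigenvalues
\begin{equation*}
r_{\pm}=\tfrac12\Bigl(1\pm\sqrt{(1-2p)^2+4|a|^2}\Bigr)=\tfrac12\Bigl(1\pm\sqrt{\Delta^2+4|a|^2}\Bigr).
\end{equation*}
Here I use $1-4p(1-p)+4|a|^2=(1-2p)^2+4|a|^2$. By the passivity prescription recalled after Eq.~\eqref{eq:ergotropy}, the larger eigenvalue $r_+$ is placed on the ground level $-h$ and $r_-$ on the excited level $+h$. This gives
\begin{equation*}
E_P(\rho)=-h\,r_+ + h\,r_- = -h\sqrt{\Delta^2+4|a|^2}.
\end{equation*}
Subtracting, $R(\rho)=h\bigl(\Delta+\sqrt{\Delta^2+4|a|^2}\bigr)=h\,f(\Delta,|a|)$, which is the first form in Eq.~\eqref{eq:fdef}.

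The second form in Eq.~\eqref{eq:fdef} follows by rationalizing. Multiplying and dividing by $\sqrt{\Delta^2+4r^2}-\Delta$ gives numerator $(\Delta^2+4r^2)-\Delta^2=4r^2$.

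The only delicate point, and the main thing to flag, is the domain of validity of this rewriting. The denominator $\sqrt{\Delta^2+4r^2}-\Delta$ vanishes exactly when $r=0$ and $\Delta\ge 0$. In that case the first form gives $f=2\Delta$, which is correct, since it is the ergotropy of a population-inverted diagonal state. The second form, however, is the indeterminate $0/0$. I would therefore state that the two expressions coincide whenever $r>0$ or $\Delta<0$. In the remaining incoherent, inverted case, the first expression should be used (equivalently, the second is understood as its continuous limit along $r>0$ when $\Delta=0$). I would also check that the expression is nonnegative, as any ergotropy must be: $\sqrt{\Delta^2+4r^2}\ge|\Delta|\ge-\Delta$, so $f\ge 0$, with equality iff $r=0$ and $\Delta\le0$, which is precisely the passive case.
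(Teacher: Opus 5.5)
Your proposal is correct and follows the paper's proof essentially step for step: mean energy $h\Delta$, eigenvalues $\tfrac12(1\pm\sqrt{\Delta^2+4|a|^2})$, passive energy $-h\sqrt{\Delta^2+4|a|^2}$ obtained by putting the larger eigenvalue on the ground level, and subtraction. Your remark that the rationalized form of $f$ becomes an indeterminate $0/0$ when $r=0$ and $\Delta\ge0$ is a valid refinement that the paper does not make; note also that the limit $r\to0^{+}$ recovers $2\Delta$ for every $\Delta\ge0$, not only at $\Delta=0$.
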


\begin{proof}
The eigenvalues of $\rho$ are $$\lambda_{\pm}=\frac12\left(1\pm\sqrt{\Delta^2+4|a|^2}\right),$$
and its average energy is
$\Tr(H_S\rho)=h\Delta$.
The passive state associated with $\rho$ is
$\rho_P=\lambda_{+}\ketbra{1}{1}+\lambda_{-}\ketbra{0}{0}$,
whose average energy is
$\Tr(H_S\rho_P)=-h\sqrt{\Delta^2+4|a|^2}$.
Using the definition of ergotropy [Eq.~\eqref{eq:ergotropy2}], we obtain
\begin{align*}
R(\rho)
&=\Tr(H_S\rho)-\Tr(H_S\rho_P)\\
&=h\left(\Delta+\sqrt{\Delta^2+4|a|^2}\right)\\
&=h\,f(\Delta,|a|),
\end{align*}
which completes the proof.
\end{proof}

Next we present the theorem for no ergotropy injection by thermal operations, in the non-degenerate regime.

\begin{theorem}
\label{th1}
Let $\Phi$ be the thermal operation defined in Sec.~\ref{EC}, acting on a
qubit state $\rho$ of the form given in Eq.~\eqref{eq:qubitparam}.
Assume that the spectra of $H_S$ and $H_E$ satisfy the
non-resonance condition~\eqref{eq:nonres}. Then the change in
ergotropy, $\Delta R \equiv R(\Phi(\rho))-R(\rho)$, satisfies
$\Delta R\le0$.
\end{theorem}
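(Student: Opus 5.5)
In the non-resonant regime the energy-conserving unitary is block diagonal with one-dimensional blocks, so it acts as a relative phase on each product energy eigenstate. The plan is to show that the resulting channel only scales the coherence and moves the population toward its thermal value, and then that both effects can only lower the ergotropy formula of Lemma~\ref{l1}.

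\textbf{Step 1: form of $U_{SE}$.} Write the product eigenbasis as $\ket{i}\ket{k}$ with energies $\epsilon_i^S+\epsilon_k^E$. By condition~\eqref{eq:nonres} these energies are pairwise distinct, so every eigenspace of $H_S\otimes I_E+I_S\otimes H_E$ is one-dimensional. Any $U_{SE}$ commuting with this Hamiltonian is therefore diagonal,
\[ U_{SE}=\sum_{i,k}e^{i\theta_{ik}}\ketbra{i}{i}\otimes\ketbra{k}{k}. \]

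\textbf{Step 2: the reduced channel.} Insert this $U_{SE}$ and the Gibbs state $\gamma_E=\sum_k g_k\ketbra{k}{k}$ into Eq.~\eqref{lam}. Because the environment state is diagonal, the populations of $\rho$ are unchanged, so $p\mapsto p$ and $\Delta\mapsto\Delta$. The coherence transforms as
\[ a\mapsto a\sum_k g_k e^{i(\theta_{0k}-\theta_{1k})}\equiv a\,\lambda, \]
and $|\lambda|\le\sum_k g_k=1$ by the triangle inequality.

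\textbf{Step 3: monotonicity.} By Lemma~\ref{l1}, $R=h f(\Delta,|a|)$. Since $f(\Delta,r)=\Delta+\sqrt{\Delta^2+4r^2}$ is nondecreasing in $r\ge0$ at fixed $\Delta$, and the map keeps $\Delta$ fixed while sending $|a|\mapsto|\lambda||a|\le|a|$, we get $R(\Phi(\rho))\le R(\rho)$, i.e.\ $\Delta R\le0$. In this regime the channel is a pure dephasing with no population transfer, so the general case where $p$ moves toward $\gamma_S$ does not arise.

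\textbf{Main obstacle.} The only real subtlety is Step 1, namely justifying that non-resonance makes the spectrum of the total Hamiltonian nondegenerate. This follows directly from Eq.~\eqref{eq:nonres}: $\epsilon_i^S+\epsilon_k^E=\epsilon_j^S+\epsilon_l^E$ forces $(i,k)=(j,l)$. The statement leaves open whether $H_E$ itself may have degenerate levels. If it can, Step 1 should be stated using energy eigenspaces $\mathcal{H}_{i}\otimes\mathcal{H}_E^{(k)}$ rather than single vectors. In that case $U_{SE}=\sum_{i,k}\ketbra{i}{i}\otimes W_{ik}$, where each $W_{ik}$ is a unitary on the $k$-th degenerate eigenspace of $H_E$. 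The contraction then reads $|\lambda|=\bigl|\sum_k g_k\Tr(W_{0k}\Pi_kW_{1k}^\dagger)/\Tr\Pi_k\bigr|\le1$, where $\Pi_k$ projects onto that eigenspace, and the same argument goes through.
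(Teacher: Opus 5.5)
Your proof is correct, but it takes a different route from the paper's. The paper also starts from the diagonal form of $U_{SE}$. It then treats the reduced map only structurally, as a random-unitary channel $\Lambda(\rho)=\sum_k t_kV_k\rho V_k^\dagger$, and notes two facts. First, the system energy is conserved, so $\Delta R$ is just a difference of passive energies. Second, $\Lambda$ is unital, so by Uhlmann's theorem the output spectrum is majorized by the input spectrum, $r_0'\le r_0$. Together these give $\Delta R=(r_0-r_0')(\epsilon_0^S-\epsilon_1^S)\le0$.

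You instead compute the channel explicitly in the energy basis: populations are exactly preserved and the coherence is multiplied by $\lambda$ with $|\lambda|\le1$. You then conclude from the closed form of Lemma~\ref{l1} and the monotonicity of $f$ in $r$. This is precisely the argument the paper itself uses later for the non-interacting case $h_1\neq h_2$, where $a'=Ka$ with $|K|\le1$.

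Your version is more elementary and self-contained, and it makes the physical content, a pure dephasing, explicit. It also needs no assumption on $\rho_E$, since only its diagonal survives the partial trace. The populations are fixed because $U_{SE}$ is diagonal, not because $\rho_E$ is.

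The paper's majorization route uses only unitality plus energy conservation, not the qubit ergotropy formula. It therefore extends to any $d_S$, since the passive energy is Schur-concave. It is also what the paper reuses verbatim for the type-1 degenerate case (Theorem~2).

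Your ``main obstacle'' paragraph is unnecessary for the statement as written. Taking $i=j$ in condition~\eqref{eq:nonres} already forces $\epsilon_k^E=\epsilon_l^E\Rightarrow k=l$, so $H_E$ is automatically nondegenerate. Your block-unitary extension is nonetheless correct: $|\Tr(W_{1k}^\dagger W_{0k}\Pi_k)|/\Tr\Pi_k\le1$. It essentially reproves the type-1 degeneracy theorem by direct computation.
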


\begin{proof}
Let the spectral decomposition of the environment Hamiltonian be
$H_E=\sum_{k=0}^{d_E-1}\epsilon_k^E\ketbra{\tilde{k}}{\tilde{k}}$, and let
the initial state of the environment be
$\rho_E=\sum_{k=0}^{d_E-1}t_k\ketbra{\tilde{k}}{\tilde{k}}$,
where $\{t_k\}$ is an arbitrary probability distribution. Consider an
energy-conserving unitary $U_{SE}$ acting on the joint state
$\rho\otimes\rho_E$, satisfying
$[U_{SE},H_S\otimes I_{d_E}+I_{d_S}\otimes H_E]=0$.

Since the spectra of $H_S$ and $H_E$ satisfy the non-resonance condition
[Eq.~\eqref{eq:nonres}], the total Hamiltonian
$H_T=H_S\otimes I_{d_E}+I_{d_S}\otimes H_E$
is nondegenerate. Consequently, every eigenspace of $H_T$ is
one-dimensional, and every energy-conserving unitary is necessarily
diagonal in the joint energy eigenbasis,
$
U_{SE}
=
\sum_{i,k}
e^{-i\theta_{ik}}
\ketbra{i,\tilde{k}}{i,\tilde{k}},
$
where $\theta_{ik}\in\mathbb{R}$.

The reduced dynamics of the system is therefore
\begin{equation}
\Lambda(\rho)
=
\Tr_E\!\left[
U_{SE}(\rho\otimes\rho_E)U_{SE}^{\dagger}
\right]
=
\sum_k
t_kV_k\rho V_k^{\dagger},
\end{equation}
where
$
V_k=\sum_i e^{-i\theta_{ik}}\ketbra{i}{i}
$
is a diagonal unitary acting on the system. Hence, $\Lambda$ is a random
unitary channel, i.e., the output state is a convex combination of
unitarily rotated copies of the input state.

Similarly, the reduced state of the environment is
$
\rho_E
=
\Tr_S\!\left[
U_{SE}(\rho\otimes\rho_E)U_{SE}^{\dagger}
\right]
=
\rho_E,
$
so the environment remains unchanged. Since both the system and the
environment preserve their average energies, the average energy of the
system is conserved,
$
\Tr[H_S\Lambda(\rho)]
=
\Tr[H_S\rho].
$

Therefore, the change in ergotropy is entirely determined by the passive
energies,
\begin{equation}
\Delta R
=
E(\rho_P)-E(\Lambda(\rho)_P),
\end{equation}
where $E(\sigma)=\Tr(H_S\sigma)$ and $\rho_P$ and
$\Lambda(\rho)_P$ denote the passive states corresponding to
$\rho$ and $\Lambda(\rho)$, respectively.

Furthermore, $\Lambda$ is unital since
$
\Lambda(I_S)=I_S.
$
By Uhlmann's theorem~{\cite{Bengtsson2006, NielsenChuang2010}}, every unital quantum channel is majorization-decreasing. Hence, if
$\mathbf{r}=(r_0,r_1)$ and
$\mathbf{r}'=(r_0',r_1')$
denote the eigenvalue vectors of $\rho$ and $\Lambda(\rho)$,
respectively, then
$
\mathbf{r}'\prec\mathbf{r}.
$
For a qubit this is equivalent to
$
r_0'\le r_0,
$
where the eigenvalues are arranged in non-increasing order.

Since the passive state is obtained by assigning the largest eigenvalue
to the ground state, we have
$
E(\rho_P)=r_0\epsilon_0^S+r_1\epsilon_1^S,
$
and
$
E(\Lambda(\rho)_P)
=
r_0'\epsilon_0^S+r_1'\epsilon_1^S.
$
Using $r_1=1-r_0$ and $r_1'=1-r_0'$, we obtain
\begin{equation}
\Delta R
=
(r_0-r_0')
(\epsilon_0^S-\epsilon_1^S).
\end{equation}
Since
$r_0-r_0'\ge0$
and
$\epsilon_0^S\le\epsilon_1^S$,
it follows that
$
\Delta R\le0.
$

The above proof holds for an arbitrary environment state diagonal in the
energy eigenbasis. Since the thermal state $\tau_E$ required in a
thermal operation is a particular instance of such a state, the result
applies directly to thermal operations $\phi(\rho)$. This completes the proof.
\end{proof}

Next we move on to the degenerate case.
In the following two theorems, we show that ergotopy injection even in the degenerate case is not possible if the system and environment are both qubits.

First, we consider the situation in which the degeneracy of the total Hamiltonian $H_T=H_S+H_E$ arises solely from degeneracies in the environment Hamiltonian $H_E$, while the system Hamiltonian $H_S$ is non-degenerate. In this case, we have
\begin{equation}
\epsilon_i^S+\epsilon_k^E=\epsilon_j^S+\epsilon_l^E
\quad\Longrightarrow\quad i=j.
\label{eq:nonres2}
\end{equation}
Throughout the paper, we refer to this as \emph{type-1 degeneracy}. If the above condition is satisfied without requiring $i=j$, we refer to it as \emph{type-2 degeneracy}.

The absence of ergotropy injection for type-1 degeneracy is established by the following theorem.

\begin{theorem}
Let $d_S=2$, and suppose that the spectra of $H_S$ and $H_E$ are such that the total Hamiltonian $H_T=H_S+H_E$ satisfies the type-1 degeneracy condition. Then no ergotropy can be injected into the system through a thermal operation, for an environment of arbitrary dimension $d_E$.
\end{theorem}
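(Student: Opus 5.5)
Type-1 degeneracy means that every eigenspace of $H_T=H_S\otimes I_E+I_S\otimes H_E$ lies inside a single system energy sector: if $\epsilon_i^S+\epsilon_k^E=\epsilon_j^S+\epsilon_l^E$ then $i=j$. So each eigenspace has the form $\ket{i}\otimes\mathcal{K}_{i,E}$, where $\mathcal{K}_{i,E}$ is spanned by the degenerate environment levels $\ket{\tilde k}$ of some common energy. The plan is to use this to reduce the argument to the one for Theorem~\ref{th1}. Any energy-conserving unitary is block diagonal with respect to the eigenspaces of $H_T$. Hence it has the form
\begin{equation}
U_{SE}=\sum_{i}\ketbra{i}{i}\otimes W_i ,
\end{equation}
where each $W_i$ is a unitary on $\mathcal{H}_E$ that commutes with $H_E$ (block diagonal on the eigenspaces of $H_E$). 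The first step is to state this form carefully, since it is exactly what type-1 degeneracy buys: no block mixes $\ket{0}$ with $\ket{1}$.

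Next we compute the reduced dynamics for an arbitrary system input $\rho$ with matrix elements $\rho_{ij}$, and the thermal environment $\gamma_E$. This gives
\begin{equation}
\Lambda(\rho)=\sum_{i,j}\rho_{ij}\,\Tr\!\left[W_i\gamma_E W_j^\dagger\right]\ketbra{i}{j}.
\end{equation}
The diagonal entries are unchanged, because $\Tr[W_i\gamma_E W_i^\dagger]=1$. The off-diagonal entries are multiplied by $c=\Tr[W_0\gamma_E W_1^\dagger]$, and Cauchy--Schwarz gives $|c|\le1$. Thus $\Lambda$ is a dephasing-type channel that preserves populations and only contracts coherence, $|a'|=|c||a|\le|a|$. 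This computation works for any environment state diagonal in the energy basis, in particular $\gamma_E$.

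The conclusion then follows from Lemma~\ref{l1}. Since $\Delta$ is unchanged and $f(\Delta,r)=\Delta+\sqrt{\Delta^2+4r^2}$ is nondecreasing in $r\ge0$, we get $R(\Lambda(\rho))=h f(\Delta,|c||a|)\le h f(\Delta,|a|)=R(\rho)$, so $\Delta R\le0$. Alternatively, $\Lambda$ is unital, since $\Lambda(I)=I$ when populations are preserved and $I$ has no coherences. Energy is conserved and the majorization argument of Theorem~\ref{th1} applies verbatim.

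The main obstacle is the first step, justifying the block form of $U_{SE}$. One must check that a $W_i$ commuting with $H_E$ is the correct general form, including when $H_E$ itself has degenerate levels. The key observation is that eigenspaces of $H_T$ in the $\ket{i}$ sector are exactly $\ket{i}\otimes$(eigenspaces of $H_E$), and the type-1 condition forbids any eigenspace from containing both $\ket{0}$ and $\ket{1}$ components. Once that is in place, the rest is a direct calculation valid for arbitrary $d_E$.
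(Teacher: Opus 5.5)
Your proof is correct and shares the paper's backbone: type-1 degeneracy makes $U_{SE}$ block diagonal in the system energy basis, so populations are preserved and the reduced map is unital. It differs in two useful ways.

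First, the parametrization of $U_{SE}$. The paper writes $U_{SE}$ as diagonal in one fixed product basis $\{\ket{i,\tilde{k}^{(a)}}\}$, using the same environment basis of each degenerate $H_E$-eigenspace for both system levels. That is only the special case of your form $U_{SE}=\sum_i\ketbra{i}{i}\otimes W_i$ in which $W_0$ and $W_1$ are simultaneously diagonalizable on those eigenspaces. Your parametrization is therefore the fully general one. The paper's conclusion still holds, but its written derivation does not cover all admissible unitaries.

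Second, the final step. The paper writes $\Lambda$ as a random-unitary channel, checks $\Lambda(I_S)=I_S$, and invokes the Uhlmann/majorization argument of Theorem~\ref{th1}. Your main line instead computes $\Lambda(\rho)_{ij}=\rho_{ij}\Tr[W_i\gamma_E W_j^\dagger]$ explicitly and concludes from the monotonicity of $f$ in Lemma~\ref{l1}. This is the same mechanism the paper uses later for the non-interacting case with $a'=Ka$.

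Your route is more explicit and needs no assumption on the environment state at all, since Cauchy--Schwarz gives $|\Tr[W_0\rho_E W_1^\dagger]|\le1$ for any $\rho_E$, coherent or not. The paper's route instead isolates the structural fact that a unital, energy-preserving qubit channel cannot raise ergotropy, which ties this theorem directly to Theorem~\ref{th1}.
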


\begin{proof}
Let $\mathcal{S}$ denote the collection of degenerate sets of eigenstate
labels of $H_E$. We denote by $\mathcal{N}$ the set of eigenstate labels corresponding to non-degenerate eigenvalues of $H_E$. For a given degenerate energy $\epsilon_k^E$, let
$s_k\in\mathcal{S}$ denote the corresponding set of eigenstate labels, and
let $g_k$ denote its degeneracy, i.e., the number of eigenstates belonging
to $s_k$. We choose an orthonormal basis
\begin{equation}
\left\{
\ket{\tilde{k}^{(a)}}
\right\}_{a=1}^{g_k}
\end{equation}
for the subspace spanned by the eigenstates whose labels belong to $s_k$,
where we identify $\ket{\tilde{k}^{(1)}}\equiv\ket{\tilde{k}'}$, while the
remaining $g_k-1$ vectors are orthogonal to $\ket{\tilde{k}'}$. In
particular, if $\epsilon_k^E=\epsilon_m^E$, then $s_k=s_m$; hence, they
correspond to the same degenerate set and the same orthonormal basis.

The eigenstates $\{\ket{\tilde{k}}:\tilde{k}\in\mathcal{N}\}$, together
with the vectors $\{\ket{\tilde{k}^{(a)}}:s_k\in\mathcal{S},
\,a=1,\ldots,g_k\}$, form a complete orthonormal basis of
$\mathcal{H}_E$.

Since the total Hamiltonian satisfies the type-1 degeneracy condition,
energy conservation, $[U_{SE},H_T]=0$, implies that the global unitary can
only couple eigenvectors within the same total-energy eigenspace. Under
the type-1 condition, this means that the system energy remains unchanged,
and hence the unitary can be written as
\begin{equation}
\begin{split}
U_{SE}
&=
\sum_{i,\tilde{k}\in\mathcal{N}}
e^{-i\theta_{i\tilde{k}}}
\ketbra{i,\tilde{k}}{i,\tilde{k}}\\
&\quad+
\sum_i
\sum_{s_k\in\mathcal{S}}
\sum_{a=1}^{g_k}
e^{-i\theta_{i\tilde{k}^{(a)}}}
\ketbra{i,\tilde{k}^{(a)}}{i,\tilde{k}^{(a)}}.
\end{split}
\end{equation}

Thus, the reduced state of the system after the action of $U_{SE}$ on
$\rho\otimes\rho_E$, with
\begin{equation}
\rho_E=\sum_{k=0}^{d_E-1}t_{\tilde{k}}\ketbra{\tilde{k}}{\tilde{k}},
\end{equation}
takes the form
\begin{equation}
\begin{split}
\Lambda(\rho)
&=
\sum_{\tilde{k}\in\mathcal{N}}
t_{\tilde{k}}V_{\tilde{k}}\rho V_{\tilde{k}}^\dagger\\
&+
\sum_{s_k\in\mathcal{S}}
\sum_{\tilde{k}\in s_k}
\sum_{a=1}^{g_k}
t_{\tilde{k}}
\left|\braket{\tilde{k}^{(a)}|\tilde{k}}\right|^2
V_{\tilde{k}^{(a)}}\rho
V_{\tilde{k}^{(a)}}^\dagger,
\end{split}
\end{equation}
where
\begin{equation}
V_{\tilde{k}^{(a)}}=
\sum_i
e^{-i\theta_{i\tilde{k}^{(a)}}}
\ketbra{i}{i}.
\end{equation}

For a given $\tilde{k}\in s_k$, the completeness of the orthonormal basis
of the degenerate subspace gives
\begin{equation}
\sum_{a=1}^{g_k}
\left|\braket{\tilde{k}^{(a)}|\tilde{k}}\right|^2=1.
\end{equation}
Consequently,
\begin{equation}
\Lambda(I_S)=I_S,
\end{equation}
and hence the reduced map is unital.

We have already shown in Theorem~\ref{th1} that a unital quantum channel
acting on a qubit, for which the initial and final energies of the system
remain unchanged, cannot inject ergotropy into the system. Therefore, no
ergotropy injection is possible through thermal operations for type-1
degeneracy, irrespective of the environment dimension $d_E$.
\end{proof}
Next, we move on to the type 2 degenerate regime. In the following theorem, we show that even in this regime, considering a qubit system and qubit environment, ergotropy injection is not possible via thermal operation.

\begin{theorem}
    Let $d_S=d_E=2$, and the system-environment Hamiltonian pair satisfy the type 2 degenerate condition; then no ergotropy injection into the system is possible under TO, for any inverse temperature of the environment $\beta$.
\end{theorem}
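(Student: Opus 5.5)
The plan is to parametrize every energy-conserving unitary in this resonant qubit--qubit setting explicitly, compute the output state $\Lambda(\rho)$ in closed form, and then compare ergotropies using the formula $R=h\,f(\Delta,|a|)$ of Lemma~\ref{l1}. The unitality argument of Theorem~\ref{th1} is no longer available, because the system energy can now change. With $H_S=h\sigma_z$ and $d_E=2$, type-2 degeneracy forces the environment gap to equal $2h$. The only nontrivial degenerate eigenspace of $H_T$ is then $\mathrm{span}\{\ket{0,\tilde 1},\ket{1,\tilde 0}\}$, where $\tilde 1$ is the environment ground state and $\tilde 0$ its excited state. Hence $[U_{SE},H_T]=0$ forces $U_{SE}$ to be a phase on $\ket{0,\tilde 0}$, a phase on $\ket{1,\tilde 1}$, and an arbitrary $2\times2$ unitary $W$ on that block. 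I will write $W$ in terms of $\alpha,\beta$ with $|\alpha|^2+|\beta|^2=1$ and a global phase, and set $c\equiv|\alpha|^2$.

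Next I compute the reduced state for $\rho$ as in Eq.~\eqref{eq:qubitparam} and the thermal environment $\gamma_E=\mathrm{diag}(t_e,t_g)$, where $t_g-t_e=\tanh(2\beta h\cdot\tfrac12\cdot 2)\equiv\tau\ge0$ (the exact form is irrelevant; only $\tau\ge 0$ matters).
\begin{itemize}
\item For the populations, only $\ket{0,\tilde 1}$ and $\ket{1,\tilde 0}$ mix. This should give $\Delta'=c\,\Delta-(1-c)\tau$.
\item For the coherence, the system off-diagonal element picks up one contribution from each environment level. Each contribution has modulus at most $|\alpha|$ times the corresponding $t$, so $|a'|\le\sqrt{c}\,|a|\,(t_g+t_e)=\sqrt{c}\,|a|$.
\end{itemize}
Since $f(\Delta,r)$ is increasing in both arguments (because $\partial_\Delta f=f/\sqrt{\Delta^2+4r^2}\ge0$), and since $\tau\ge0$, I obtain $R(\Lambda(\rho))\le h\,f(c\Delta,\sqrt{c}\,|a|)$. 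This is exactly where thermality enters: a population-inverted environment would have $\tau<0$, and the bound would fail, consistent with the later sections.

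The remaining and main step is the scalar inequality $f(c\Delta,\sqrt c\, r)\le f(\Delta,r)$ for $c\in[0,1]$, valid for either sign of $\Delta$. Writing $s=\sqrt c$, homogeneity gives $f(s^2\Delta,s r)=s\,f(s\Delta,r)\equiv\varphi(s)$. I then differentiate:
\[
\varphi'(s)=f(s\Delta,r)+s\Delta\,\partial_x f(s\Delta,r)=f(s\Delta,r)\Bigl(1+\tfrac{s\Delta}{\sqrt{s^2\Delta^2+4r^2}}\Bigr)\ge0 .
\]
So $\varphi$ is nondecreasing on $[0,1]$, and $\varphi(s)\le\varphi(1)=f(\Delta,r)$. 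This yields $\Delta R\le0$ for every admissible $U_{SE}$ and every $\beta$.

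I expect the main obstacle to be keeping the coherence bound $|a'|\le\sqrt c\,|a|$ tight, with no stray factors of $|\beta|$ from the phases on the nondegenerate states. I will check it carefully by writing out the matrix elements $\bra{0,\tilde k}U\rho\otimes\gamma_E U^\dagger\ket{1,\tilde k}$ for each $\tilde k$.
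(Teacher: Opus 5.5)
Your proof is correct and follows essentially the paper's route: the same block form of $U$, the same $\Delta'=c\Delta+(1-c)\Delta_\tau$ with $\Delta_\tau=-\tau\le0$ and $|a'|\le\sqrt{c}\,|a|$, and then a monotonicity argument in the transfer weight $c$ (the paper's $t$). Your final step is a slightly cleaner variant of the paper's. You discard the thermal shift first using $\partial_\Delta f\ge0$ and then use the degree-one homogeneity of $f$, which avoids the paper's separate cases $t=0$, $t=1$, $|a|=0$ and its direct differentiation of $g(t)=f(\Delta_\tau-ts,\sqrt{t}\,|a|)$. Two small points: for $r=0$ the function $f(\cdot,0)=2\max(\cdot,0)$ is not differentiable at $0$, so handle that case directly; and $t_g-t_e=\tanh(\beta h)$ rather than $\tanh(2\beta h)$, which does not affect the argument.
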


\begin{proof}
Without loss of generality, let the system and environment Hamiltonians be $H_S=h_1\sigma_z, H_E=h_1\sigma_z,$ with $h_1>0$, where $\sigma_z$ is the usual Pauli operator. We denote its
eigenstates by $\{\ket{1},\ket{0}\}$, with corresponding eigenenergies
$\{-1,1\}$. The environment is initially prepared in the thermal state
\begin{equation}
    \tau_\beta=\operatorname{diag}(\tau_0,\tau_1),
    \qquad
    \tau_0=\frac{e^{-\beta h_1}}{2\cosh(\beta h_1)}
    \leq\frac{1}{2}\leq\tau_1,
\end{equation}
where $\tau_1=e^{\beta h_1}/[2\cosh(\beta h_1)]$.

Since
\begin{equation*}
    E_1^S+E_0^E=E_0^S+E_1^E=0,
\end{equation*}
the total Hamiltonian $H=H_S+H_E$ exhibits a twofold degeneracy in the
zero-energy subspace. Its spectrum is
\begin{equation*}
    \operatorname{spec}(H)=\{2h_1,0,0,-2h_1\}.
\end{equation*}
Consequently, an energy-conserving unitary $U$, satisfying $[U,H]=0$, must
act trivially, up to phases, on the nondegenerate energy eigenspaces, while
it can act arbitrarily within the degenerate zero-energy subspace. Thus,
\begin{equation}
    U=e^{\iota\phi_0}\ketbra{00}{00}
    \oplus V
    \oplus e^{\iota\phi_1}\ketbra{11}{11},
\end{equation}
where $V$ is an arbitrary $2\times2$ unitary acting on
$\operatorname{span}\{\ket{01},\ket{10}\}$. Unitarity implies
\begin{equation*}
    |V_{11}|=|V_{22}|=\sqrt{t},
    \qquad t\in[0,1].
\end{equation*}

Now consider an initial system state of the form
\begin{equation}
    \rho
    =
    p\ketbra{1}{1}
    +(1-p)\ketbra{0}{0}
    +a\ketbra{0}{1}
    +a^*\ketbra{1}{0}.
    \label{in}
\end{equation}
The reduced state of the system after the thermal operation generated by
$U$ is
\begin{equation*}
    \sigma
    =
    \Lambda(\rho)
    =
    \operatorname{Tr}_E
    \left[
        U(\rho\otimes\tau_\beta)U^\dagger
    \right].
\end{equation*}
It retains the same general form,
\begin{equation}
    \sigma
    =
    p'\ketbra{1}{1}
    +(1-p')\ketbra{0}{0}
    +a'\ketbra{0}{1}
    +a'^*\ketbra{1}{0},
    \label{fin}
\end{equation}
where
\begin{equation}
    p'
    =
    \tau_1+t\left(p-\tau_1\right),
    \qquad
    a'
    =
    a\left(
        \tau_0 e^{\iota\phi_0}V_{22}^*
        +\tau_1 V_{11}e^{-i\phi_1}
    \right).
    \label{eq:qubitenvironmentmaps}
\end{equation}
hence 

Using Lemma~\ref{l1}, the change in ergotropy induced by the thermal operation can be expressed as
\begin{equation}
\Delta R= R(\sigma)-R(\rho)=f(\Delta',|a'|)-f(\Delta,|a|)),
\end{equation}
where $\Delta'=1-2p'$. Substituting the explicit form of $f$, we obtain
\begin{equation}
\Delta R
=
h_1\Big(\Delta'-\Delta
+\sqrt{\Delta'^2+4|a'|^2}
-\sqrt{\Delta^2+4|a|^2}\Big).
\label{eq:ergotropy_change}
\end{equation}

We define $s=\Delta_\tau-\Delta$. It then follows from the expression for $p'$ that $
\Delta'=\Delta_\tau-ts$.
Furthermore, Eq.~\eqref{eq:qubitenvironmentmaps} and the triangle inequality give $|a'|\leq \sqrt{t}|a|$.

We first consider three special cases separately: (i) $t=0$, (ii) $t=1$, and (iii) $|a|=0$.

For $t=0$, we have $\Delta'=\Delta_\tau$. Since the initial environment state is passive, $\Delta_\tau\leq0$, and hence
\begin{equation}
R(\sigma)=h_1f(\Delta_\tau,0)=0.
\end{equation}
Thus, the final state is passive, and there is no gain in ergotropy.

For $t=1$, we have $\Delta'=\Delta$ and $|a'|\leq|a|$. Since $f(\Delta,r)$ is non-decreasing in $r\geq0$, it follows that
\begin{equation}
f(\Delta',|a'|)
=f(\Delta,|a'|)
\leq f(\Delta,|a|),
\end{equation}
and consequently $\Delta R\leq0$.

We next consider the case $|a|=0$. Then $|a'|=0$, and hence
\begin{equation}
\Delta R
=h_1\left[f(\Delta',0)-f(\Delta,0)\right].
\end{equation}
Using
\begin{equation}
\Delta'=1-2\left[\tau_1+t(p-\tau_1)\right],
\end{equation}
we consider two cases. If $p\geq\tau_1$, then, since $\tau_1\geq1/2$, we have $\Delta'\leq0$, and therefore $R(\sigma)=0$, implying $\Delta R\leq0$.

If $p\leq\tau_1$, then
\begin{equation}
\Delta'-\Delta
=2(1-t)(p-\tau_1)\leq0,
\end{equation}
so that $\Delta'\leq\Delta$. If $\Delta\leq0$, then $\Delta'\leq0$ and consequently $R(\sigma)=0$. If instead $\Delta>0$, then $\Delta'<\Delta$. Again, for this case, if $\Delta'<0$, the final state is passive and $R(\sigma)=0$. If $\Delta'\geq0$, then
\begin{equation}
\Delta R=2h_1(\Delta'-\Delta)<0.
\end{equation}
Thus, no ergotropy injection is possible when $|a|=0$.

We now consider the remaining regime, with $|a|\neq0$ and $0<t<1$. Define
\begin{equation}
g(t)=f\left(\Delta_\tau-ts,\sqrt{t}|a|\right),
\qquad t\in[0,1].
\end{equation}
By the triangle inequality,
\begin{equation}
f(\Delta',|a'|)
\leq
f\left(\Delta',\sqrt{t}|a|\right)
=g(t).
\end{equation}
Moreover, since $\Delta'(1)=\Delta$, we have
\begin{equation}
g(1)=f(\Delta,|a|).
\end{equation}
The function $g(t)$ is continuous on $[0,1]$. Therefore, it is sufficient to show that $g(t)$ is non-decreasing on $(0,1)$, in which case its maximum on $[0,1]$ is attained at $t=1$.

Differentiating $g(t)$ gives
\begin{equation}
\partial_tg(t)
=
\frac{
2|a|^2\left(A_t-\Delta'-2ts\right)
}{
A_t\left(A_t-\Delta'\right)
},
\label{eq:gt_derivative}
\end{equation}
where
\begin{equation}
A_t=\sqrt{\Delta'^2+4t|a|^2}.
\end{equation}
For $|a|\neq0$ and $t>0$, we have $A_t>|\Delta'|$, and hence $A_t-\Delta'>0$. Thus, Eq.~\eqref{eq:gt_derivative} is well defined in the present regime.

For completeness, the only exceptional points at which the denominator in Eq.~\eqref{eq:gt_derivative} can vanish satisfy $A_t=\Delta'\geq0$. This can occur only if $t|a|^2=0$. The cases $t=0$ and $|a|=0$ have already been treated above. 

We may therefore restrict attention to the generic case in which $|a|\neq0$, $0<t<1$. For this case since $A_t>-\Delta'$, we obtain
\begin{equation}
A_t-\Delta'-2ts>-2\Delta_\tau.
\end{equation}
For a thermal state, $\Delta_\tau\leq0$, and consequently
\begin{equation}
A_t-\Delta'-2ts>0.
\end{equation}
Together with $|a|^2>0$ and $A_t(A_t-\Delta')>0$, this yields
\begin{equation}
\partial_tg(t)>0,
\qquad t\in(0,1).
\end{equation}
Hence, $g(t)$ is strictly increasing on $(0,1)$. Since $g(t)$ is continuous on $[0,1]$, its maximum is attained uniquely at $t=1$, giving
\begin{equation}
f\left(\Delta',\sqrt{t},|a|\right)
\leq
f(\Delta,|a|).
\end{equation}
Combining this with
\begin{equation}
f(\Delta',|a'|)
\leq
f\left(\Delta',\sqrt{t},|a|\right),
\end{equation}
we obtain
\begin{equation}
\Delta R\leq0.
\end{equation}
Thus, no ergotropy can be injected in the generic regime either.

\end{proof}

\begin{remark}
The proof of $\Delta R\leq0$ in the above theorem relies on the passivity of the environment state. If the environment is not passive, the sign of $\Delta R$ cannot, in general, be determined from the above theorem. This suggests that an athermal environment is necessary for ergotropy injection in the absence of an interaction term in the Hamiltonian. In the subsequent section, we prove that this is indeed the case.
\end{remark}

Before exploring the role of athermality, we first show that, unlike Theorem 1 and theorem 2, "no egotropy injection" into the system does not hold for arbitrary environment dimension. In fact, considering a qutrit environment, we can inject ergotropy into the qubit system. The analysis of which is presented in the section below.

\section{Injection of ergotropy considering qutrit environment}
\label{4}
In the following example, we show that, in the presence of type-2 degeneracy, increasing the environment dimension to $d_E=3$ is sufficient to enable ergotropy injection into a qubit system through a thermal operation.

\begin{example}[Ergotropy injection from a qutrit environment]
\label{ex:qutrit}
Consider a qutrit environment with Hamiltonian
\begin{equation}
    H_E
    =
    \sum_{n=0}^{2}2h(n-1)
    \ketbra{\tilde{n}}{\tilde{n}},
\end{equation}
and a qubit system with Hamiltonian $H_S=h\sigma_Z$.

The total Hamiltonian $H=H_S+H_E$ exhibits type-2 degeneracy, since
\begin{equation}
    E_1^S+E_1^E=E_0^S+E_0^E=-h,
    \qquad
    E_1^S+E_2^E=E_0^S+E_1^E=h.
\end{equation}
Thus, the total Hilbert space contains two doubly degenerate energy
subspaces, in addition to two nondegenerate energy levels.

An energy-conserving unitary must act trivially, up to a phase, on the
nondegenerate subspaces, while it can act arbitrarily within each
degenerate subspace. We consider the unitary
\begin{equation}
    U
    =
    e^{\iota\phi_0}\ketbra{1\tilde{0}}{1\tilde{0}}
    \oplus U_1
    \oplus U_2
    \oplus
    e^{\iota\phi_1}\ketbra{0\tilde{2}}{0\tilde{2}},
\end{equation}
where $U_1$ and $U_2$ are swap unitaries acting, respectively, on the
degenerate subspaces
$\operatorname{span}\{\ket{1\tilde{1}},\ket{0\tilde{0}}\}$ and
$\operatorname{span}\{\ket{1\tilde{2}},\ket{0\tilde{1}}\}$. In particular,
\begin{equation}
    \ket{1\tilde{1}}\longleftrightarrow\ket{0\tilde{0}},
    \qquad
    \ket{1\tilde{2}}\longleftrightarrow\ket{0\tilde{1}}.
\end{equation}

Let the initial state of the system be the ground state
$\rho=\ketbra{1}{1}$, which is passive, and let the environment be
initially prepared in the thermal state $\gamma_E=
    \sum_{n=0}^{2}
    \tau_n\ketbra{\tilde{n}}{\tilde{n}}$,
with respect to $H_E$ at inverse temperature $\beta$. The corresponding
thermal populations are
\begin{equation}
    \tau_n
    =
    \frac{x^n}{1+x+x^2},
    \qquad
    x=e^{-2\beta h}.
\end{equation}

After the action of $U$ and tracing out the environment, the reduced state
of the system is
\begin{equation}
    \sigma
    =
    (\tau_1+\tau_2)\ketbra{0}{0}
    +\tau_0\ketbra{1}{1}.
\end{equation}
Hence, $\sigma$ becomes active whenever
$\tau_1+\tau_2>\tau_0$. In terms of $x$, this condition is equivalent to
\begin{equation}
    x+x^2-1>0.
\end{equation}
Since the energy gap of the system is $2h$, the resulting ergotropy is
\begin{equation}
    \Delta R
    =
    2h\max\left(
        0,\frac{x+x^2-1}{1+x+x^2}
    \right).
    \label{eq:qutritgain}
\end{equation}
Therefore, ergotropy injection occurs whenever
\begin{equation}
    x>\frac{\sqrt{5}-1}{2}
    =\frac{1}{\varphi},
\end{equation}
or, equivalently,
\begin{equation}
    2\beta h<\ln\varphi\approx0.481,
\end{equation}
where $\varphi=(1+\sqrt{5})/2$ is the golden ratio. Thus, in contrast to
the qubit-environment case, a three-dimensional thermal environment is
already sufficient to inject ergotropy into a qubit system under a
thermal operation.
\end{example}
As discussed in the preliminaries, TOs constitute a subclass of generalized energy-conserving operations, characterized by the absence of system-environment interaction terms in the total Hamiltonian and by an environment initially prepared in a thermal state. In Theorems~1-3, we considered the non-interacting setting while allowing the initial environment state to be an arbitrary diagonal state; only Theorem~3 required the additional assumption of passivity to establish the impossibility of ergotropy injection. This suggests two natural routes for moving beyond the restrictions underlying TOs: introducing system-environment interactions and allowing the environment to possess athermality. In the following, we investigate how these two resources affect the possibility of ergotropy injection into a qubit system. To this end, we consider the following Hamiltonian model.

\section{Two-qubit model with exchange coupling}
\label{sec:model}
Let $d_S=d_E=2$, and consider the total Hamiltonian of the system-environment pair to be
\begin{equation}
\begin{split}
H &= h_1\sigma_z\otimes\mathbb{I}+h_2\mathbb{I}\otimes \sigma_z+H_{SE},\\
H_{SE} &= J\left(\sigma_+\otimes\sigma_-+\sigma_-\otimes\sigma_+\right),
\end{split}
\label{eq:JCH}
\end{equation}
where $h_1,h_2>0$ and, without loss of generality, $J\geq0$. The exchange interaction $H_{SE}$ is the two-qubit isotropic XY coupling~\cite{LSM1961}, which exactly conserves the total excitation number. The same excitation-conserving structure underlies the Jaynes--Cummings model~\cite{Jaynes1963}, where the second party is a bosonic mode rather than a qubit.

The spectrum of $H$ is
\begin{equation}
\{h_1+h_2,E_c,-E_c,-(h_1+h_2)\},
\end{equation}
with the corresponding eigenstates
\begin{equation}
\{\ket{00},\ket{+},\ket{-},\ket{11}\},
\end{equation}
where the dressed states are given by
\begin{equation}
\label{eq:dressed}
\begin{split}
\ket{+} &= \cos\theta_c\ket{01}+\sin\theta_c\ket{10},\\
\ket{-} &=-\sin\theta_c\ket{01}+\cos\theta_c\ket{10},
\end{split}
\end{equation}
with
\begin{equation}
E_c=\sqrt{\kappa^2+J^2},
\sin 2\theta_c=\frac{J}{E_c},
\cos 2\theta_c=\frac{\kappa}{E_c},
\kappa=h_1-h_2,
\label{eq:thetac}
\end{equation}
and $\theta_c\in[0,\pi/2]$.

Depending on the values of $h_1$, $h_2$, and $J$, the spectrum exhibits two distinct types of degeneracy:
\begin{enumerate}
\item[(a)] \emph{Central-block (CB) degeneracy :} This occurs when $E_c=0$, which is possible if and only if
\begin{equation}
J=0,\qquad h_1=h_2.
\end{equation}

\item[(b)] \emph{Cross-sector degeneracy:} This occurs when
\begin{equation}
E_c=h_1+h_2,
\end{equation}
which is equivalent to
\begin{equation}
J=J_c\equiv2\sqrt{h_1h_2}.
\label{eq:Jc}
\end{equation}
At this point, the dressed state $\ket{+}$ becomes degenerate with $\ket{00}$, while $\ket{-}$ becomes degenerate with $\ket{11}$. Furthermore,
\begin{equation}
\sin2\theta_c
=\frac{J_c}{h_1+h_2}
=\frac{2\sqrt{h_1h_2}}{h_1+h_2}.
\end{equation}
\end{enumerate}
No other types of degeneracy occur.

We now consider generalized energy-preserving operations on the two-qubit system, having the total Hamiltonian $H$. We analyze the situation when ergotropy can be injected into the system using such kind of operations. If the global unitary corresponding to the operation is $U$, then the admissible unitaries satisfying $[U,H]=0$ are completely determined by the degeneracy structure. Below we classify such admissible unitaries.

\subsection{Classification of admissible unitaries}
\label{subsec:classification}

For the Hamiltonian~\eqref{eq:JCH}, the admissible unitaries are those
satisfying $[U, H]=0$. Their form is determined by the degeneracy structure
of the spectrum of $H$. In the basis $\{\ket{00},\ket{01},\ket{10},\ket{11}\}$, the different cases are classified as follows.

\begin{enumerate}
    \item[(i)] \textbf{Non-interacting case} ($J=0$). This case can be
    further divided into two subcases:
    
    \begin{enumerate}
        \item $h_1\neq h_2$: The spectrum is nondegenerate, and hence the
        admissible unitary is diagonal in the computational basis,
        \begin{equation}
            U=
            \operatorname{diag}\left(
                e^{\iota\phi_{00}},
                e^{\iota\phi_{01}},
                e^{\iota\phi_{10}},
                e^{\iota\phi_{11}}
            \right),
         \label{eq:U_nonint}
        \end{equation}
        
        \item The central block degeneracy, with $h_1=h_2$: The two single-excitation states
        $\ket{01}$ and $\ket{10}$ are degenerate, corresponding to the
        type-2 degeneracy discussed in Sec.~\ref{3}. Consequently,
        the admissible unitary is diagonal on the nondegenerate
        eigenspaces and arbitrary within the degenerate subspace,
        \begin{equation}
            U=
            e^{\iota\phi_0}\ketbra{00}{00}
            \oplus V
            \oplus
            e^{\iota\phi_1}\ketbra{11}{11},
            \label{CBgenu}
        \end{equation}
        where $V\in U(2)$ acts on
        $\operatorname{span}\{\ket{01},\ket{10}\}$.
        
    \end{enumerate}

    \item[(ii)] \textbf{Interacting case} ($J>0$).
    \begin{enumerate}
      \item The central block non-degenerate regime with $J\neq J_c$:  The spectrum is nondegenerate.  Hence, the admissible unitary is diagonal in the dressed-state
        basis,
        \begin{equation}
            U=
            e^{\iota\phi_0}\ketbra{00}{00}
            \oplus V
            \oplus
            e^{\iota\phi_1}\ketbra{11}{11},
         \label{eq:U_CB}
        \end{equation}
        where
        \begin{equation}
            V=
            e^{\iota\phi_+}\ketbra{+}{+}
            +
            e^{\iota\phi_-}\ketbra{-}{-}.
            \label{V}
        \end{equation}
        
        \item \textbf{Cross-sector degeneracy} ($J=J_c$): At $J_c=2\sqrt{h_1h_2}$, the spectrum exhibits two independent
        doubly degenerate energy sectors. The admissible unitary therefore
        takes the form
        \begin{equation}
            U=W_+\oplus W_-,
            \label{eq:U_DD}
        \end{equation}
        where $W_+\in U(2)$ acts on
        $\operatorname{span}\{\ket{00},\ket{+}\}$ and
        $W_-\in U(2)$ acts on
        $\operatorname{span}\{\ket{-},\ket{11}\}$. The dressed-state
        mixing angle at the degeneracy point satisfies
        \begin{equation}
            \sin 2\theta_c
            =
            \frac{J_c}{h_1+h_2},
            \qquad
            \cos 2\theta_c
            =
            \frac{h_1-h_2}{h_1+h_2}.
        \end{equation}
  \end{enumerate}
\end{enumerate}
Note that $\phi_{ij}\in \mathbb{R}$. Next, we analyze these different regimes and see when ergotropy injection into the system is possible.
\section{Analysis of Ergotropy Injection in the Defined Regimes}
\label{sec:ergotropy_interaction}

We consider the initial state of the system $\rho$ to be of the form given in Eq.~\eqref{in}, while the initial state of the environment is taken to be
\begin{equation}
    \rho_E=\sum_{m,n}\tau_{mn}\ketbra{m}{n}.
    \label{env2}
\end{equation}
Unlike the case of thermal operations (TO), the initial state of the environment is not required to be thermal. We therefore refer to such generalized states of the environment as athermal states. The reduced state of the system after the energy-conserving operation is given by
\begin{equation}
    \sigma=\Tr_E[U(\rho\otimes\rho_E)U^\dagger],
\end{equation}
has the form as given in Eq.~\eqref{fin}. We first analyze ergotropy injection in the non-interacting regime.

\subsection{The non-interacting case with \texorpdfstring{$(h_1 \neq h_2)$}{h1 != h2}}

\textbf{\textit{No ergotropy injection is possible for case 1(a):}}
\label{cor:nonint}

For $U$ of the form given in Eq.~\eqref{eq:U_nonint}, the diagonal elements of $\sigma$ remain unchanged. Consequently, $p'=p$ and $\Delta'=\Delta$. On the other hand, the coherence transforms as
$a'=Ka$, where
\begin{equation}
    K=\sum_{m=0}^1\tau_{mm}\,e^{\iota(\phi_{0m}-\phi_{1m})}.
\end{equation}
Using the triangle inequality, we have $|K|\leq 1$. Therefore,
\begin{equation}
    \Delta R=h_1\big(f(\Delta,|K||a|)-f(\Delta,|a|)\bigr)\leq 0.
\end{equation}
Thus, the ergotropy cannot increase under the energy-conserving operation in this case. 

In the following section, we analyze the CB regime, which comprises both cases 1(b) and 2(a).

\section{Ergotropic Injection in CB regime}
\label{sec:CB}
With $U$ of the form~\eqref{CBgenu}, a direct computation of the elements of the final state yields
\begin{align}
  p'&= \tau_{11}+V_{22}^2(p-\tau_{11})
  +2\operatorname{Re}\!\left[V_{11}a\tau_{01}^{*}V_{21}^{*}\right],
  \label{eq:master_pop}\\[2pt]
  a'&=\lambda_1\,a+\lambda_2\,\tau_{01},
  \label{eq:master_coh}
\end{align}
where the \emph{retention} and \emph{conversion} amplitudes are given by
\begin{align}
  \lambda_1 &=
  (1-\tau_{11})e^{i\phi_0}V_{22}^{*}
  +\tau_{11}e^{-i\phi_1}V_{11},
  \nonumber\\
  \lambda_2 &=
  (1-p)e^{i\phi_0}V_{21}^{*}
  +p\,e^{-i\phi_1}V_{12}.
  \label{eq:lambdas}
\end{align}
Here, $V_{km}$, with $k,m\in\{1,2\}$, denotes the $(k,m)$-th matrix element of the
$2\times2$ unitary matrix $V$, i.e.,
\begin{equation}
V=
\begin{pmatrix}
V_{11} & V_{12}\\
V_{21} & V_{22}
\end{pmatrix}.
\end{equation}

From the orthogonality condition of the CB states, we have $|V_{11}|=|V_{22}|$, $|V_{12}|=|V_{21}|$.

We next consider different classes of initial system-environment states and
analyze ergotropy injection into the system.

\subsection{\textbf{Initially Incoherent System and Environment}}
\label{subsec:CBincoherent}

Note that, for $a=\tau_{01}=0$, the output state of the system takes a form similar to that obtained in Theorem~3. However, unlike in Theorem~3, the initial state of the environment is not required to be thermal, which was an essential ingredient in the proof of that theorem.

\textit{The CB-degenerate case:}
We first consider the CB-degenerate case. Here, the injected ergotropy can attain its maximum possible value. Let the initial states of the system and environment be $\rho=\ketbra{1}{1}$ and $\rho_E=\ketbra{0}{0}$, respectively. Since $V$ can be chosen as an arbitrary unitary in $U(2)$, we may choose it to implement a swap between $\ket{01}$ and $\ket{10}$. Consequently, the total unitary $U$ acts as a swap on the relevant subspace, transforming the initial system state $\ket{1}$ into $\ket{0}$. The resulting final state of the system is therefore $\ket{0}\!\bra{0}$, and the corresponding injected ergotropy is
\begin{equation}
    \Delta R=R'=2h_1.
\end{equation}

In contrast, for the CB-nondegenerate case, $V$ cannot be chosen arbitrarily, as its form is constrained by the Hamiltonian parameters $h_1$, $h_2$, and $J$. We therefore determine, in the following proposition, the maximum ergotropy injection attainable under these constraints.
\begin{proposition}
For the CB-nondegenerate case, the maximum ergotropy injection, optimized over all choices of the initial states $\rho$ and $\rho_E$ and over all energy-conserving unitaries $U$, is given by
\begin{equation}
    \Delta R_{\max}
    =2h_1\sin^2 2\theta_c
    =\frac{2h_1J^2}{J^2+\kappa^2}.
\end{equation}
\label{prop1}
\end{proposition}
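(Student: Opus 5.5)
The plan is to prove a matching upper bound and then exhibit inputs that attain it. The first step is to make the constraint on $V$ explicit. In the CB-nondegenerate regime, Eq.~\eqref{V} together with the dressed states~\eqref{eq:dressed} gives, in the basis $\{\ket{01},\ket{10}\}$,
\begin{equation*}
V_{11}=e^{\iota\phi_+}\cos^2\theta_c+e^{\iota\phi_-}\sin^2\theta_c,\qquad
V_{12}=V_{21}=\tfrac12\left(e^{\iota\phi_+}-e^{\iota\phi_-}\right)\sin2\theta_c .
\end{equation*}
Hence the transfer probability is $|V_{12}|^2=\sin^22\theta_c\,\sin^2(\delta/2)$, with $\delta=\phi_+-\phi_-$. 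It is therefore at most $\sin^22\theta_c$, with equality iff $\delta=\pi$. This replaces the free parameter $t$ of the CB-degenerate case by the constraint $1-t\le\sin^22\theta_c$, and it is the only place where $h_1$, $h_2$ and $J$ enter.

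\textbf{Upper bound.} I would feed Eqs.~\eqref{eq:master_pop}--\eqref{eq:lambdas} into Lemma~\ref{l1}. In Bloch language the qubit ergotropy is $R=h_1(z+|\vec r|)$, with $z=\Delta$. This expression is convex in the state and is bounded by $h_1(1+|\vec r|)\le 2h_1$.

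To decouple the two sources of change, I would write $V=e^{\iota\phi_-}\big[\tfrac{1+e^{\iota\delta}}{2}\,\mathbb I+\tfrac{1-e^{\iota\delta}}{2}\,(\mathbb I-2\ketbra{+}{+})\big]$. This separates a part that only applies local phases, which by the argument of case 1(a) cannot raise ergotropy, from a ``reflection'' part whose weight is controlled by $|V_{12}|^2$.

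I would then try to show that $z'-z$ together with the change in $|\vec r|$ is at most $2|V_{12}|^2$. The route is to bound the population shift in Eq.~\eqref{eq:master_pop} by $|V_{12}|^2(1-\ldots)$ plus the cross term $2|V_{11}||V_{21}||a||\tau_{01}|$, and to bound $|a'|$ through $|\lambda_1|\le1$ and $|\lambda_2|\le|V_{12}|$. I would finish with a monotonicity-in-$t$ argument analogous to the function $g(t)$ in the proof of Theorem~3. The difference is that passivity of the environment is no longer available, so the bound must instead saturate at the worst-case environment, namely the excited or coherent one.

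\textbf{Achievability.} I would first test $\rho=\ketbra{1}{1}$ and $\rho_E=\ketbra{0}{0}$ with $\delta=\pi$. This gives an incoherent output with excited population $\sin^22\theta_c$, but only the gain $2h_1\max(0,2\sin^22\theta_c-1)$. That is below the claim, so the optimum must exploit coherence in $\rho$ and/or $\rho_E$: via the $\lambda_2\tau_{01}$ term, a coherent environment converts its coherence into system coherence, and this lifts $|\vec r'|$. I would parametrize pure $\rho$ and pure $\rho_E$ and maximize $z'+|\vec r'|-z-|\vec r|$ at $\delta=\pi$, expecting the maximizer to reach $2h_1\sin^22\theta_c=2h_1J^2/(J^2+\kappa^2)$, using Eq.~\eqref{eq:thetac}.

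\textbf{Main obstacle.} The hardest step is the tight upper bound in the presence of both $a$ and $\tau_{01}$, because the cross term $2\operatorname{Re}[V_{11}a\tau_{01}^*V_{21}^*]$ couples the population and coherence changes. I would handle it first by reducing, via phase freedom, to real nonnegative $a$ and $\tau_{01}$. Then I would apply Cauchy--Schwarz, using $|V_{11}|^2+|V_{21}|^2=1$, so that the joint contribution is bounded by a single quantity proportional to $|V_{21}|^2$. The same computation should identify the equality conditions, and hence the explicit optimal inputs.
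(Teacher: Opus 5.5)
Your proposal has a genuine gap, and it starts with the scope of the statement. The proposition sits in the subsection on initially incoherent system and environment, and the paper's proof sets $a=\tau_{01}=0$ from the start. So ``all choices of $\rho$ and $\rho_E$'' means all \emph{incoherent} ones. Under your reading, which admits coherent inputs, the claimed value is not an upper bound, and your upper-bound program cannot be completed. Take $p=\tfrac12$, $a=0$ (the maximally mixed system), $\rho_E$ the pure state $(\ket{0}+\ket{1})/\sqrt2$ (so $\tau_{11}=\tau_{01}=\tfrac12$), $\delta=\pi$ and $\phi_0+\phi_1=2\phi_+$. Then Eq.~\eqref{eq:master_pop} gives $p'=\tfrac12$, i.e.\ $\Delta'=0$. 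Meanwhile $V_{12}=V_{21}=\sin2\theta_c\,e^{\iota\phi_+}$ gives $|\lambda_2|=\sin2\theta_c$ and $|a'|=\tfrac12\sin2\theta_c$. Hence $\Delta R=R(\sigma)=h_1\sin2\theta_c$, which exceeds $2h_1\sin^22\theta_c$ whenever $\sin2\theta_c<\tfrac12$, i.e.\ $3J^2<\kappa^2$. This is the coherence enhancement the paper itself reports in Proposition~\ref{prop:CBenvironmentcoh} and Fig.~\ref{fig:probemap}. The obstruction is structural. The conversion term $\lambda_2\tau_{01}$ creates output coherence at first order in the transfer amplitude $|V_{12}|$, whereas population transfer enters only at second order. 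So the inequality you aim for (change of $z+|\vec r|$ at most $2|V_{12}|^2$) is false, and no Cauchy--Schwarz treatment of the cross term can rescue it.

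Your achievability step also draws the wrong conclusion from a single trial. From $p=1$, $\tau_{11}=0$ you infer that coherence is needed. The missing idea is to lower the system's ground population to $p=\tfrac12$. The maximally mixed state is still passive, so $R(\rho)=0$, while $p'=\tau_{11}+|V_{22}|^2(p-\tau_{11})$ is affine in $p$ with slope $|V_{22}|^2\ge\cos^22\theta_c$. Lowering $p$ from $1$ to $\tfrac12$ therefore lowers $p'$ at no ergotropic cost. With $\tau_{11}=0$ and $\delta=\pi$ one gets $p'=\tfrac12\cos^22\theta_c$, $\Delta'=\sin^22\theta_c$, and $\Delta R=2h_1\sin^22\theta_c$ with incoherent inputs; this is the paper's optimizer. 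The matching bound is a short case analysis built on exactly your first step, $|V_{22}|^2=1-|V_{12}|^2\ge\cos^22\theta_c$, which is also the paper's key ingredient. Since $a'=0$, we have $R(\sigma)=2h_1\max(\Delta',0)$. For $p\ge\tfrac12$, $p'\ge|V_{22}|^2p\ge\tfrac12\cos^22\theta_c$ gives $\Delta R\le2h_1\sin^22\theta_c$. For $p<\tfrac12$, either $\Delta'\le0$ and $\Delta R<0$, or $\Delta R=4h_1|V_{21}|^2(p-\tau_{11})\le4h_1p\sin^22\theta_c<2h_1\sin^22\theta_c$. The paper reduces to passive $\rho$ somewhat loosely; this $p<\tfrac12$ case is what actually justifies that reduction.
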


\begin{proof}
The injected ergotropy is given by
\begin{equation*}
    \Delta R=R(\sigma)-R(\rho).
\end{equation*}
The first term depends on the choice of $\rho_E$ and the energy-conserving unitary $U$, whereas the second term depends only on $\rho$. Since
\begin{equation}
    \Delta R \leq R(\sigma)-\min_{\rho} R(\rho).
\end{equation}
we can choose $\rho$ to be a passive state, for which $R(\rho)=0$. Thus, maximizing $\Delta R$ is equivalent to maximizing $R(\sigma)$ over all choices of $U$, $\rho_E$, and passive initial states $\rho$.

Using the expression for $R(\sigma)$, and putting $R(\rho)=0$ we obtain
\begin{equation}
    \Delta R \leq h_1\left(\Delta'+|\Delta'|\right).
\end{equation}
Clearly, the upper bound is maximized for choices of parameters for which $\Delta'>0$. Hence,
\begin{equation}
    \Delta R \leq 2h_1(1-2p'),
\end{equation}
where $p'$ is given by Eq.~\eqref{eq:master_pop} with $\tau_{01},a=0$. Putting this in, we get,
\begin{equation}
    \Delta R \leq 2h_1\left[1-2\left(\tau_{11}+|V_{22}|^2(p-\tau_{11})\right)\right].
\end{equation}
In order to further optimize the bound we need to minimize
\begin{equation}
    \tau_{11}+|V_{22}|^2(p-\tau_{11}).
\end{equation}
From the expression for $V$ in Eq.~\eqref{V}, we have
\begin{equation}
    |V_{22}|^2
    =\cos^4\theta_c+\sin^4\theta_c
    +2\cos^2\theta_c\sin^2\theta_c
    \cos(\phi_+-\phi_-).
\end{equation}

Consider first the case $p<\tau_{11}$. In this case, minimizing the expression above requires maximizing $|V_{22}|^2$, which is attained for $|V_{22}|^2=1$. However, this choice yields $\Delta' \leq (1-2p)$, and since $p \geq \frac{1}{2}$, we get $\Delta R\leq0$. We therefore consider the case $p>\tau_{11}$. In this regime, minimizing the expression requires minimizing $|V_{22}|^2$, which is achieved for $\phi_+=n\pi+\phi_-$, where $n$ is any odd integer. The corresponding minimum value is $|V_{22}|^2=\cos^2{2\theta_c}$.
This gives
\begin{equation}
    \Delta R
    \leq 2h_1\Big[1-2\left(\sin^2 2\theta_c\,\tau_{11}
    +\cos^2 2\theta_c\,p\right)\Big].
\end{equation}
To further optimize over the initial states, we choose $\tau_{11}=0$. Since for the bound to be strict $\rho$ must be passive, $p$ cannot be zero; its minimum allowed value is $p=1/2$. Hence, the maximum ergotropy injection is
\begin{equation}
    \Delta R_{\max}
    =2h_1\sin^2 2\theta_c.
\end{equation}
\end{proof}

\begin{remark}
   \textbf{\textit{No ergotropy injection with initially passive environment states:}}
In Theorem~3, we have already shown that, in the CB-degenerate case, no ergotropy can be injected when the initial state of the {environment?} is thermal. For the CB-nondegenerate case, when the initial states are restricted to be passive, the structures of the initial and final states become identical to those considered in Theorem~3. The only difference is that, in the present case, $|V_{22}|^2$ (denoted by $t$ in Theorem~3) is fixed by the value of $\theta_c$. Nevertheless, the proof of Theorem~3 holds for arbitrary $t\in[0,1]$. Therefore, the same argument applies here, and we conclude that no ergotropy injection is possible in the CB-nondegenerate case when the initial states are passive.
\end{remark}

\begin{remark}
   From Remark~2, it follows that, for initially incoherent states and considering only the CB-nondegenerate case with finite interaction strength $J \neq 0$, the presence of interaction alone is insufficient to induce ergotropy injection when the environment is initially in a thermal state. Therefore, in this setting, population inversion, or more generally, athermality of the environment state, constitutes a necessary resource for ergotropy injection.
\end{remark}
Next we consider the situation when the initial state of the system poses some coherence, but the environment initial state is still incoherent.

\subsection{Initial States with  System Coherence Alone}
\begin{proposition}
\label{SCCB}
If the system has coherence $a\neq0$ while the environment is incoherent, $b=0$, but athermal, then, for a given pair of initial states $\rho,\rho_E$ and a fixed unitary of the form given in Eq.~\eqref{CBgenu}, the injected ergotropy satisfies
\end{proposition}
\begin{equation*}
\Delta R\leq 2h_1\max[\Delta'-\Delta,0].
\end{equation*}
\label{prop2}
\label{subsec:CBsyscoh}

\begin{proof}
For a given pair of initial states $\rho,\rho_E$ and a fixed unitary of the form given in Eq.~\eqref{CBgenu}, the ergotropic gain can be written as
\begin{equation*}
\Delta R=f(\Delta',|a'|)-f(\Delta,|a|).
\end{equation*}
For an incoherent environment, $b=0$, one has
$\Delta'-\Delta=2|V_{21}|^{2}(p-\tau_{11})$, which is independent of the system coherence $a$. Moreover,
$|a'|=|\lambda_1||a|\leq |a|$. Since $f(\Delta',|a|)$ is monotonically increasing with respect to $|a|$, it follows that
\begin{equation}
\Delta R\leq f(\Delta',|a|)-f(\Delta,|a|).
\end{equation}

For $a=0$, the desired bound follows directly from the triangle inequality,
\begin{equation}
|\Delta'|-|\Delta|\leq |\Delta'-\Delta|.
\end{equation}
For $a\neq0$, consider the function $f(x,|a|)$ as a function of $x$. By the mean value theorem, there exists a point $x$ lying between $\Delta$ and $\Delta'$ such that
\begin{equation*}
\left|\sqrt{\Delta'^{2}+4|a|^{2}}-\sqrt{\Delta^{2}+4|a|^{2}}\right|
=\left|f'(x,|a|)\right||\Delta'-\Delta|.
\end{equation*}
Since
\begin{equation*}
|f'(x,|a|)|=\frac{|x|}{\sqrt{x^{2}+4|a|^{2}}}\leq1,
\end{equation*}
we obtain
\begin{equation*}
\left|\sqrt{\Delta'^{2}+4|a|^{2}}-\sqrt{\Delta^{2}+4|a|^{2}}\right|
\leq|\Delta'-\Delta|.
\end{equation*}
Consequently,
\begin{equation*}
\Delta R\leq h_1\left[(\Delta'-\Delta)+|\Delta'-\Delta|\right]
=2h_1\max[\Delta'-\Delta,0].
\end{equation*}
This establishes the stated upper bound.

Since
\begin{equation*}
\Delta'-\Delta=2|V_{21}|^{2}(p-\tau_{11}),
\end{equation*}
the ergotropic gain can be positive only when $p>\tau_{11}$; for $p\leq\tau_{11}$, one has $\Delta'\leq\Delta$ and hence
\begin{equation*}
\Delta R\leq0.
\end{equation*}
Thus, in this regime, coherence in the system cannot induce a positive ergotropic injection.

More importantly, when $p>\tau_{11}$, the bound becomes
\begin{equation*}
\Delta R\leq2h_1(\Delta'-\Delta)
=4h_1|V_{21}|^{2}(p-\tau_{11}).
\end{equation*}
For $a=0$, the triangle inequality can be saturated when  $(\Delta'-\Delta)\Delta \geq 0$. Hence, for the same initial environment state and the same unitary, the incoherent system can attain the upper bound. In contrast, for $a\neq0$, one has
\begin{equation}
|f'(x,|a|)|=\frac{|x|}{\sqrt{x^{2}+4|a|^{2}}}<1,
\label{inprop2}
\end{equation}
for finite $x$. Therefore, whenever $\Delta'\neq\Delta$, the inequality arising from the mean value theorem is strict. In addition, $|a'|\leq|a|$ can only further decrease the ergotropic gain. Hence, for the same initial populations, the same environment state, and the same unitary, introducing coherence into the system cannot enhance the ergotropic injection and, whenever the incoherent case yields a positive gain, generally makes the gain strictly smaller.
This completes the proof.

\begin{remark}
For the same initial populations, environment state, and unitary, the present argument establishes that coherence cannot enhance the ergotropic injection whenever the incoherent case saturates the triangle inequality, i.e., $(\Delta'-\Delta)\Delta \geq0$. In this regime, the incoherent state attains the upper bound(An example is the optimal case in Proposition~\ref {prop1}), whereas for $a\neq0$ the mean-value-theorem inequality is strict (for $\Delta'\neq\Delta$), and $|a'|\leq |a|$, so that the presence of coherence can only decrease the ergotropic injection. When $(\Delta'-\Delta)\Delta <0$, however, the triangle inequality is not saturated for the incoherent state, and the above argument does not suffice to establish whether coherence is advantageous or disadvantageous. Thus, no general conclusion regarding the role of coherence can be drawn in that regime.
\end{remark}

\end{proof}
Next we consider the case when the environment initial state is coherent, and that of the system is incoherent.
\subsection{Initial States with Environment Coherence Alone}
\label{subsec:CBenvironmentcoh}
For $a=0$, and for a fixed set of initial parameters $b$, $\tau_1$, and $t$, the presence of environment coherence always enhances the change in ergotropy. In particular, denoting by $\Delta R$ the change in ergotropy for an incoherent environment, $b=0$, and by $\Delta\tilde{R}$ the corresponding quantity in the presence of environment coherence, we have
$\Delta\tilde{R}-\Delta R=h_1\left(\sqrt{\Delta'^2+4|\lambda_2|^2\tau_{01}^2}-|\Delta'|\right)$.
Hence, for $\tau_{01}\neq0$, one has $\Delta R\leq\Delta\tilde{R}$, with equality only when $|\lambda_2|=0$. Since the unitary is arbitrary, one can in general choose a unitary for which $|\lambda_2|\neq0$, demonstrating that environment coherence can provide an additional contribution to the ergotropic gain.

We next determine the maximum gain attainable when the environment is allowed to possess coherence while the initial system state remains incoherent. The optimization is performed over all unitaries of the form given in Eq.~\eqref{eq:U_CB}.

\begin{proposition}
For the choice of initial states such that $a=0$, $\tau_{01}\neq0$ and $p\geq\tau_{11}$, the ergotropic gain in the CB regime is bounded by
\begin{equation}
    \Delta\tilde{R}\leq f\left(\tilde{\Delta},\sin(2\theta_c)\tau_{01}\right),
\end{equation}
where
$\tilde{\Delta}=1-2\Big(\tau_{11}+\cos^2(2\theta_c)(p-\tau_{11})\Big)$.
\label{prop:CBenvironmentcoh}
\end{proposition}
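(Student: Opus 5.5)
The plan is to specialise the CB-regime maps \eqref{eq:master_pop}--\eqref{eq:master_coh} to $a=0$ and then bound the two arguments of $f$ separately, using that $f(\Delta,r)$ is non-decreasing in each argument. With $a=0$ the interference term in \eqref{eq:master_pop} vanishes, so
\[
p'=\tau_{11}+|V_{22}|^2(p-\tau_{11}),\qquad a'=\lambda_2\tau_{01},\qquad \lambda_2=(1-p)e^{i\phi_0}V_{21}^*+p\,e^{-i\phi_1}V_{12}.
\]
Since $R(\rho)\ge 0$, we have $\Delta\tilde R\le R(\sigma)=f(\Delta',|\lambda_2||\tau_{01}|)$ in units of $h_1$. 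It then suffices to show $\Delta'\le\tilde\Delta$ and $|\lambda_2|\le\sin 2\theta_c$.

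\emph{Bound on $\Delta'$.} From \eqref{V} in the computational basis,
\[
V_{11}=\cos^2\theta_c e^{i\phi_+}+\sin^2\theta_c e^{i\phi_-},\qquad V_{22}=\sin^2\theta_c e^{i\phi_+}+\cos^2\theta_c e^{i\phi_-},
\]
\[
V_{12}=V_{21}=\sin\theta_c\cos\theta_c\,(e^{i\phi_+}-e^{i\phi_-}).
\]
As in the proof of Proposition~\ref{prop1}, $|V_{22}|^2\ge\cos^2 2\theta_c$. Because $p\ge\tau_{11}$, the quantity $p'$ is non-decreasing in $|V_{22}|^2$, so $p'\ge\tau_{11}+\cos^2 2\theta_c(p-\tau_{11})$. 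Hence $\Delta'=1-2p'\le\tilde\Delta$, and monotonicity of $f$ in its first argument ($\partial_\Delta f=1+\Delta/\sqrt{\Delta^2+4r^2}\ge0$) handles this argument.

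\emph{Bound on $|\lambda_2|$.} By the triangle inequality, $|\lambda_2|\le(1-p)|V_{21}|+p|V_{12}|=|V_{12}|$. Then
\[
|V_{12}|=\tfrac12\sin 2\theta_c\,|e^{i\phi_+}-e^{i\phi_-}|\le\sin 2\theta_c .
\]
Combining the two bounds with monotonicity in $r$ gives $\Delta\tilde R\le f(\tilde\Delta,\sin(2\theta_c)|\tau_{01}|)$, reading $\tau_{01}$ in the statement as its modulus.

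The main obstacle is that the two bounds are saturated at the same point, $\phi_+-\phi_-=\pi$, which is encouraging but not needed for the inequality. The step requiring care is that $p\ge\tau_{11}$ is exactly what orients the population monotonicity in the right direction. I would also check that the unitary is \eqref{CBgenu} restricted to \eqref{V}; in the CB-degenerate case with $\theta_c$ undefined, one sets $\sin 2\theta_c=1$.
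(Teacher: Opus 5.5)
Your proposal is correct and follows essentially the same route as the paper: specialise the CB maps to $a=0$, bound $|\lambda_2|\le|V_{12}|\le\sin 2\theta_c$, use $|V_{22}|^2\ge\cos^2 2\theta_c$ (equivalent to the paper's bound on $|V_{12}|$ by unitarity) together with $p\ge\tau_{11}$ to get $\Delta'\le\tilde\Delta$, and conclude from the monotonicity of $f$ in both arguments after dropping $R(\rho)\ge 0$. The one difference is that you bound $|\lambda_2|$ with the triangle inequality, whereas the paper computes $|\lambda_2|$ exactly with an explicit phase; your version is slightly cleaner, and the paper's extra computation only shows when the bound is attained.
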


\begin{proof}
For $a=0$, the ergotropic gain in the presence of environment coherence can be written as
\begin{equation*}
\Delta\tilde{R}=h_1\left[f(\Delta',|\lambda_2|\tau_{01})-f(\Delta,0)\right].
\end{equation*}
Using the explicit form of $f$, this becomes
$\Delta\tilde{R}=h_1\left(\Delta'+\sqrt{\Delta'^2+4|\lambda_2|^2\tau_{01}^2}-\Delta-|\Delta|\right)$.

Let $V_{12}=|V_{12}|e^{\iota c_1}$ and $V_{21}=|V_{12}|e^{\iota c_2}$. From the expression for $\lambda_2$, we obtain
\begin{equation*}|\lambda_2|=|V_{12}|\sqrt{(1-p)^2+p^2+2p(1-p)\cos(m)}.\end{equation*}
Where $m=\phi_0+\phi_1-c_1-c_2.$ Consequently,
$|\lambda_2|\leq |V_{12}|$,
with equality when $\phi_0+\phi_1=c_1+c_2$ modulo $2\pi$.

Furthermore, using $|V_{11}|^2=1-|V_{12}|^2$, we have
$\Delta'=1-2\left[\tau_{11}+(1-|V_{12}|^2)(p-\tau_{11})\right]$.
For $p\geq\tau_{11}$, both $\Delta'$ and $|\lambda_2|$ are nondecreasing functions of $|V_{12}|$. Since the quantity
$f\left(\Delta',|\lambda_2|\tau_{01}\right)
=\Delta'+\sqrt{\Delta'^2+4|\lambda_2|^2\tau_{01}^2}$
is monotonically increasing in both arguments over the relevant parameter regime; the ergotropic gain is maximized by maximizing $|V_{12}|$ and $|\lambda_2|$.

For the unitary in Eq.~\eqref{eq:U_CB}, the magnitude of the off-diagonal matrix element satisfies
$|V_{12}|\leq\sin(2\theta_c)$,
with equality for $\phi_+-\phi_-=n\pi$, where $n$ is an odd integer. Therefore,
$|\lambda_2|\leq |V_{12}|\leq\sin(2\theta_c)$.

The maximum is consequently obtained by setting
$|V_{12}|=|\lambda_2|=\sin(2\theta_c)$.
Substituting these values into $\Delta'$ gives
$\tilde{\Delta}=1-2\left[\tau_{11}+\cos^2(2\theta_c)(p-\tau_{11})\right]$,
and hence
$\Delta\tilde{R}\leq f\left(\tilde{\Delta},\sin(2\theta_c)\tau_{01}\right)$.
\end{proof}
\begin{figure}[t]
  \includegraphics[width=0.48\textwidth]{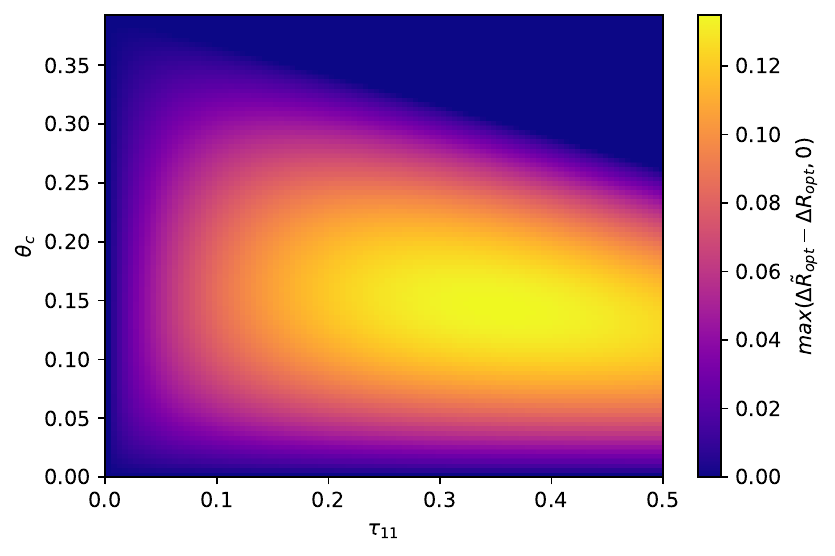}
  \caption{Difference between the ergotropic gain in the presence of environment coherence and the optimal gain for an incoherent environment, $\max(\Delta\tilde{R}{\mathrm{opt}}-\Delta R{\mathrm{opt}}, 0)$, for the initial conditions $p=1/2$ and $a=0$, as a function of $\tau_{11}\in[0,0.5]$ and $\theta_c\in[0,0.35]$. Here, the environment coherence is taken to be maximal, $|\tau_{01}|=\sqrt{\tau_{11}(1-\tau_{11})}$. The enhancement is largest for intermediate values of $\tau_{11}$ and $\theta_c$, demonstrating that environment coherence can substantially enhance the optimal ergotropic gain compared with the incoherent-environment case.}
  \label{fig:probemap}
\end{figure}
\begin{remark}
The proposition applies to both the CB-degenerate and CB-nondegenerate regimes. The difference lies only in the range of $|V_{12}|$ accessible in the two regimes. In the CB-nondegenerate regime, the maximum value of $|V_{12}|$ is determined by $\sin(2\theta_c)$, which is itself constrained by the Hamiltonian parameters $J$, $h_1$, and $h_2$. In the CB-degenerate regime, there is no such additional restriction, and one can in particular choose $|V_{12}|=1$.
\end{remark}

The above proposition allows us to compare the maximum ergotropic gain in the presence of environment coherence with the optimal gain attainable for an incoherent environment. From Proposition~\ref{prop1}, the latter is maximized for $p=\frac{1}{2}$ and $\tau_{11}=0$, yielding
$\Delta R_{\mathrm{opt}}=2h_1\sin^2(2\theta_c)$.

For a fixed population $\tau_{11}$, positivity of the environmental density matrix imposes the constraint
$|\tau_{01}|^2\leq\tau_{11}(1-\tau_{11})$.
Thus, the maximum allowed coherence is
$|\tau_{01}|=\sqrt{\tau_{11}(1-\tau_{11})}$.
Using this value in the bound of Proposition~\ref{prop:CBenvironmentcoh}, and setting $p=1/2$, gives
$\Delta\tilde{R}_{\mathrm{opt}}
=f\left(\tilde{\Delta},\sin(2\theta_c)\sqrt{\tau_{11}(1-\tau_{11})}\right)$.

Figure~\ref{fig:probemap} shows the difference $\max(\Delta\tilde{R}_{\mathrm{opt}}-\Delta R_{\mathrm{opt}}, 0)$ as a function of $\tau_{11}$ and $\theta_c$. The enhancement is concentrated in an extended interior region of the parameter space, with the largest values occurring for intermediate $\tau_{11}$ and moderate values of $\theta_c$. In particular, the enhancement is most pronounced around $\theta_c\approx0.15$ and $\tau_{11}\approx0.3$--$0.4$, where the color map reaches its maximum intensity. The enhancement gradually decreases as either parameter approaches the boundary of the allowed region. For $\theta_c\rightarrow0$, the factor $\sin(2\theta_c)$ vanishes, and consequently the contribution of environment coherence disappears. Similarly, the enhancement becomes small when $\tau_{11}$ approaches the endpoints, since the maximum physically allowed coherence $\sqrt{\tau_{11}(1-\tau_{11})}$ vanishes as $\tau_{11}\rightarrow0$ or $\tau_{11}\rightarrow1$. The figure therefore demonstrates that environment coherence can provide a substantial enhancement of the ergotropic gain over the optimal incoherent-environment value, although the magnitude of the enhancement depends sensitively on both the environment coherence and the coherent coupling parameter $\theta_c$.

This behavior is qualitatively different from the case of an initially coherent system coupled to an incoherent environment. In that case, Proposition~\ref{prop2} shows that system coherence does not enhance the optimal ergotropic gain; rather, the presence of system coherence reduces the maximum gain attainable from the corresponding incoherent initial state.

Once we have analyzed the CB regime. In the following subsection we analyze the ergotropic gain in DD regime.

\section{Ergotropic Injection at the Double Degeneracy}
\label{sec:DD}

At $J=J_c=2\sqrt{h_1h_2}$, the Hamiltonian spectrum collapses into two doubly degenerate levels, $\{E,E,-E,-E\}$, with $E=h_1+h_2$. The positive-energy eigenspace is spanned by ${\ket{00},\ket{+}}$, while the negative-energy eigenspace is spanned by ${\ket{-},\ket{11}}$, where $\ket{\pm}$ are the dressed states of the central block defined in Eq.~\eqref{eq:dressed}. The corresponding angle satisfies
\begin{equation}
\cos^2\theta_c=\frac{h_1}{h_1+h_2},
\qquad
\sin^2\theta_c=\frac{h_2}{h_1+h_2}.
\label{eq:DDangle}
\end{equation}

As discussed in Sec.~\ref{subsec:classification}, an admissible energy-conserving unitary in this regime has the block form $U=W_+\oplus W_-$, where $W_+\in U(2)$ acts on the positive-energy subspace ${\ket{00},\ket{+}}$ and $W_-\in U(2)$ acts on the negative-energy subspace ${\ket{-},\ket{11}}$. Denoting the matrix elements of $W_+$ and $W_-$ by $A_{ij}=(W_+)_{ij}$ and $B_{kl}=(W_-)_{kl}$, respectively, and transforming back to the computational basis, we obtain
\begin{equation}
U=
\begin{pmatrix}
A_{11} & cA_{12} & sA_{12} & 0\\
cA_{21} & c^2A_{22}+s^2B_{11} & cs(A_{22}-B_{11}) & -sB_{12}\\
sA_{21} & cs(A_{22}-B_{11}) & s^2A_{22}+c^2B_{11} & cB_{12}\\
0 & -sB_{21} & cB_{21} & B_{22}
\end{pmatrix},
\label{eq:doubleU}
\end{equation}
where $c=\cos\theta_c$ and $s=\sin\theta_c$. Unlike the central-block dynamics, Eq.~\eqref{eq:doubleU} can couple different computational sectors while conserving energy. This is a consequence of the fact that the dressed state $\ket{+}$ is a superposition of $\ket{01}$ and $\ket{10}$. For example, the matrix element $\bra{01}U\ket{00}=cA_{21}$ couples the zero- and single-excitation sectors while remaining within the degenerate positive-energy subspace.

The reduced state of the system after the joint unitary evolution is given by
$\sigma_S=\Tr_E[U\rho_{SE}U^\dagger]$,
where $\rho_{SE}=\rho\otimes\rho_E$, with $\rho$ and $\rho_E$ parametrized as in Eqs.~\eqref{eq:qubitparam} and \eqref{env}, respectively. For convenience, we define the population weights of the initial joint state $\rho_{SE}$ in the computational basis as
\begin{align*}
w_{00} &= (1-p)(1-\tau_{11}), & w_{01} &= (1-p)\tau_{11},\\
w_{10} &= p(1-\tau_{11}), & w_{11} &= p\tau_{11}.
\end{align*}
In the following subsection, we consider initially incoherent input states and derive the exact form of $\sigma_S$, which we subsequently use to determine the change in ergotropy.

\subsection{Ergotropy injection in the DD regime when the initial states are incoherent}

The essential point is that, although the input state is diagonal in the computational basis, it is not diagonal in the eigenbasis ${\ket{00},\ket{+},\ket{-},\ket{11}}$. Using $\ket{01}=c\ket{+}-s\ket{-}$ and $\ket{10}=s\ket{+}+c\ket{-}$, the singly-excited populations populate the dressed levels as
\begin{equation}
\begin{split}
\langle+|\rho_{SE}|+\rangle =\rho_{++}=c^{2}w_{01}+s^{2}w_{10},\\
\langle-|\rho_{SE}|-\rangle =\rho_{--}=s^{2}w_{01}+c^{2}w_{10}.
\label{eq:dressedpops}
\end{split}
\end{equation}
Moreover, they generate a coherence between the two dressed states,
\begin{equation}
\eta \equiv\bra{+}\rho_{SE}\ket{-}
= cs(w_{10}-w_{01})
= \tfrac12\sin2\theta_c(p-\tau_{11}).
\label{eq:eta}
\end{equation}

Using the expression for $U$, the final reduced state $\sigma_{SE}$ of the system can be written in the computational basis as
\begin{equation}
\sigma =
\begin{pmatrix}
\sigma_{00} & \sigma_{01}\\
\sigma_{10} & \sigma_{11}
\end{pmatrix}.
\end{equation}
The population $\sigma_{00}$ is given by
\begin{equation}
\sigma_{00}=P_{00}+c^{2}P_{+}+s^{2}P_{-}-2cs\Re\eta',
\label{eq:DDsigma00}
\end{equation}
where
\begin{align}
P_{00} &= w_{00}|A_{11}|^{2}+\rho_{++}|A_{12}|^{2},
\nonumber\\
P_{+} &= w_{00}|A_{21}|^{2}+\rho_{++}|A_{22}|^{2},
\nonumber\\
P_{-} &= \rho_{--}|B_{11}|^{2}+w_{11}|B_{12}|^{2},
\nonumber\\
\eta' &\equiv \bra{+}\sigma_{SE}\ket{-}
= A_{22}\eta B_{11}^{}.
\label{eq:eigpops}
\end{align}
The off-diagonal element is
\begin{align}
\sigma_{01}=&s\bigl(A_{11}A_{21}^{}w_{00}
+A_{12}A_{22}^{}\rho_{++}\bigr)
\nonumber\\
&-s\bigl(B_{11}B_{21}^{}\rho_{--}
+B_{12}B_{22}^{}w_{11}\bigr)
+c\eta\zeta,
\label{eq:DDsigma01}
\end{align}
with $\zeta=A_{12}B_{11}^{}+A_{22}B_{21}^{*}$. Thus, both the population in Eq.~\eqref{eq:DDsigma00} and the coherence in Eq.~\eqref{eq:DDsigma01} depend on $\eta$. This gives rise to a \emph{population-to-coherence channel} that is active even when both input states are incoherent. Such a channel is absent in the central-block regime and is the key feature underlying the results below.

We now derive the maximum ergotropy gain in the DD regime. The maximization is performed over all incoherent initial states $\rho$ and $\rho_E$, as well as over all energy-conserving joint unitaries of the form given in Eq.~\eqref{eq:U_DD}.

\begin{figure}[t]
  \includegraphics[height=5.5cm, width=0.45\textwidth]{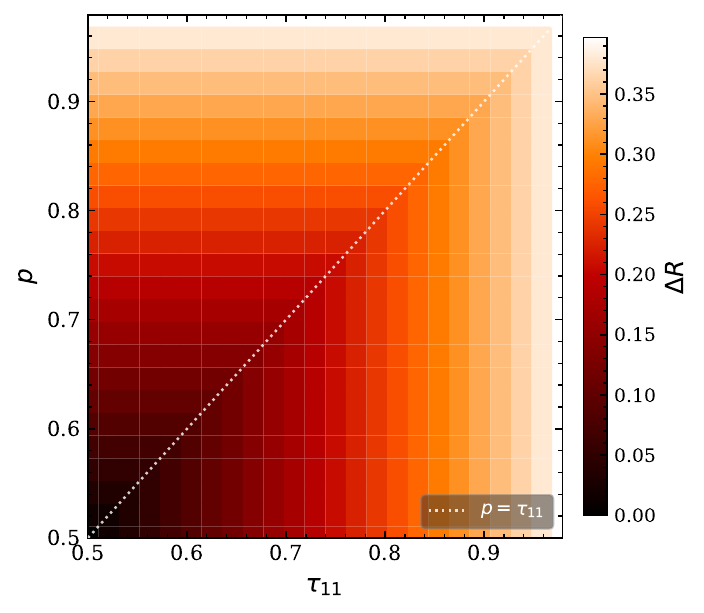}
  \caption{Optimal ergotropic gain $\Delta R$ in the double-degeneracy regime
($J=J_c$) for incoherent inputs, as a function of the environment ground
population $\tau_{11}$ and the system ground population $p$, maximized over
all admissible energy-conserving unitaries $U=W_+\oplus W_-$. The window
shown is the passive quadrant $p,\tau_{11}\ge\tfrac12$, where both qubits
start passive and no ergotropy is present initially. The gain is strictly
positive everywhere except at the single point $p=\tau_{11}=\tfrac12$, where
the joint state is maximally mixed and invariant under every unitary. The
dotted line marks $p=\tau_{11}$; injection persists above it, i.e.\ an
environment more highly populated in its ground state than the system can
still charge the system. }
  \label{fig:prop4map}
\end{figure}
\begin{theorem}
\label{thm:DDfulltransfer}
Suppose that the initial states of the system and environment are incoherent. Then the maximum ergotropy gain $\Delta R$, optimized over all choices of the initial states $\rho$ and $\rho_E$ of the forms given in Eqs.~\eqref{eq:qubitparam} and \eqref{env}, respectively, with $a=0$ and $\tau_{01}=0$, and over all energy-conserving joint unitaries of the form given in Eq.~\eqref{eq:U_DD}, is given by
\begin{equation}
  \max_{p,q,\,U}\Delta R=
  \begin{cases}
    2h_1, & h_2\ge h_1,\\[3pt]
    2h_1\sin^{2}2\theta_c, & h_2<h_1<3h_2,\\[3pt]
    2h_1\,s(1+s), & h_1\ge3h_2 .
  \end{cases}
  \label{eq:DDsup}
\end{equation}
The three branches are attained as follows.
\begin{enumerate}
\item[(i)] For $h_2\ge h_1$ the optimum sits at $(p,\tau_{11})=(1,0)$ and is
reached by full excitation transfer: the pure input $\ket{10}$ is
mapped to a state with the system fully excited, so
$\sigma_S=\ketbra{0}{0}$ and $\Delta R$ saturates the absolute ceiling
$2h_1$.
\item[(ii)] Inside the window $h_2<h_1<3h_2$ the optimum sits at
$(p,\tau_{11})=(\tfrac12,0)$ and is reached by the dressed-diagonal unitary
$W_+=\mathrm{diag}(e^{\iota\alpha},1)$, $W_-=\mathrm{diag}(-1,e^{\iota\beta})$,
which moves population only and creates no coherence.
\item[(iii)] For $h_1\ge3h_2$ the optimum again sits at
$(p,\tau_{11})=(\tfrac12,0)$, now reached by a two-block unitary that acts
identically in the two degenerate sectors: it rotates $\ket{+}$ towards
$\ket{00}$ and $\ket{-}$ towards $\ket{11}$ by the same amplitude, with
$|\bra{00}W_+\ket{+}|^{2}=|\bra{11}W_-\ket{-}|^{2}=1-\frac{1+s}{2c^{2}}$.
Here the gain comes from population transfer and coherence generation
together.
\end{enumerate}
\end{theorem}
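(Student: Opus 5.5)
We would prove Theorem~\ref{thm:DDfulltransfer} in two halves: an achievability part, giving explicit states and unitaries for each branch, and a converse part, showing that no admissible choice does better. The converse begins with the same reduction as in Proposition~\ref{prop1}. Since $R(\rho)\ge0$, we have $\Delta R\le R(\sigma)$. By Lemma~\ref{l1}, $R(\sigma)=h_1 f(1-2\sigma_{11},|\sigma_{01}|)$, and this is at most $2h_1$. The argument therefore amounts to maximizing $f$ over the parameters $(p,\tau_{11},W_+,W_-)$, using the explicit expressions \eqref{eq:DDsigma00} and \eqref{eq:DDsigma01}.

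\textbf{Branch (i).} We take $(p,\tau_{11})=(1,0)$, so the input is $\ket{10}=s\ket{+}+c\ket{-}$. We choose $W_+$ to rotate $\ket{+}$ into $\ket{00}$, which gives amplitude $s$ on $\ket{00}$. We choose $W_-$ to map $\ket{-}$ into the orthogonal direction that, together with the $\ket{+}$ remnant, lands on $\ket{01}$. The requirement that the system be fully excited is that the final state lie in $\mathrm{span}\{\ket{00},\ket{01}\}$. Writing $\ket{01}=c\ket{+}-s\ket{-}$, the energy-conserving image must have weight $s^2$ in the positive sector. The positive sector is spanned by $\ket{00}$ and $\ket{+}$, and the $\ket{+}$ component contributes a $\ket{10}$ part that must cancel. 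Cancelling it against the $\ket{-}$ image, which lies in $\mathrm{span}\{\ket{-},\ket{11}\}$, requires the $\ket{10}$ overlaps to balance. We expect this balance to reduce to the inequality $s^2\ge c^2$, i.e.\ $h_2\ge h_1$ by \eqref{eq:DDangle}. We would verify this by a short explicit parametrization. The bound $2h_1$ is the absolute ceiling, so this branch needs no converse.

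\textbf{Branches (ii) and (iii).} These are achieved at $(p,\tau_{11})=(\tfrac12,0)$ with the stated unitaries. For (ii), the dressed-diagonal choice with the relative sign $-1$ on $B_{11}$ flips the sign of $\eta'$ in \eqref{eq:DDsigma00}. This yields $\sigma_{00}=\tfrac12(1+\sin^2 2\theta_c)$ and $\sigma_{01}=0$, hence $\Delta R=2h_1\sin^22\theta_c$. For (iii), we parametrize $W_\pm$ by a common mixing weight $x=|\bra{00}W_+\ket{+}|^2=|\bra{11}W_-\ket{-}|^2$ with phases aligned. We then write $f$ as a function of $x$ and maximize in one variable; the stationary point should be $x=1-\frac{1+s}{2c^2}$, giving $2h_1 s(1+s)$. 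This point is admissible, i.e.\ $x\ge0$, exactly when $2c^2\ge 1+s$. Translating this condition into $h_1,h_2$ should give the threshold $h_1\ge 3h_2$, which matches the crossover with branch (ii).

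\textbf{Converse (main obstacle).} The hard step is the upper bound over all $U(2)\times U(2)$ and all $(p,\tau_{11})$ when $h_2<h_1$. We plan four steps.
\begin{itemize}
\item First, we would show that $f$ is jointly increasing in $\Delta'=1-2\sigma_{11}$ and $|\sigma_{01}|$ in the relevant region. This lets us align all phases so that every term in $\sigma_{00}$ and in $|\sigma_{01}|$ adds constructively.
\item Aligning the phases leaves two moduli $|A_{12}|^2=u$ and $|B_{21}|^2=v$, together with $(p,\tau_{11})$. The gain is then a closed-form function on $[0,1]^4$.
\item Next, we would argue that $\tau_{11}=0$ is optimal, since both the weights $w_{ij}$ and $|\eta|$ favour a pure excited environment. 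We would also argue that $p$ is pushed to $\tfrac12$ or $1$, since the objective is convex or monotone in $p$ once we restrict to passive inputs. If $p$ is allowed to be active, we instead subtract $R(\rho)$ and show this never helps.
\item Finally, we would compare the interior critical point in $(u,v)$, which gives the symmetric solution $u=v$, with the boundary $u=v=1$, which gives branch (ii), and with the $p=1$ transfer solution. This selects the three regimes.
\end{itemize}
Proving that the symmetric critical point is the global maximum, rather than merely a stationary point, is where we expect most of the work. We would first try a Cauchy--Schwarz bound on $|\sigma_{01}|$ in terms of $\sigma_{00}$, which would reduce the problem to a single variable. Details of this step could be deferred to Appendix~\ref{app:DDsup}.
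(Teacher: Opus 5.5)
Your achievability constructions match the paper's: full transfer at $(p,\tau_{11})=(1,0)$ for $h_2\ge h_1$; the dressed-diagonal unitary at $(\tfrac12,0)$, giving $\sigma_{00}=\tfrac12(1+\sin^22\theta_c)$ and $\sigma_{01}=0$; and the symmetric stationary point $|\bra{00}W_+\ket{+}|^2=1-\frac{1+s}{2c^2}$, admissible iff $s\le\tfrac12$, i.e.\ $h_1\ge3h_2$. The gap is in the converse, starting with the reduction over inputs. Bounding $\Delta R\le R(\sigma)$ and then maximizing $f$ is vacuous unless $p\ge\tfrac12$: at $p=0$ with $U=\mathbb{I}$ one already has $R(\sigma)=2h_1$. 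So the active-input case you defer to ``show this never helps'' is a real part of the proof. Your claim that $\tau_{11}=0$ is optimal because the $w_{ij}$ and $|\eta|$ ``favour a pure excited environment'' does not hold. First, $|\eta|=\tfrac12\sin2\theta_c\,|p-\tau_{11}|$ is not monotone in $\tau_{11}$; at $p=\tfrac12$ it takes the same value at $\tau_{11}=0$ and $\tau_{11}=1$. Second, a ground-state environment does charge the system, with gains $2h_1s(1-s)$ at $(\tfrac12,1)$ and $h_1(1-c)/c$ at $(1,1)$ (Corollary~\ref{thm:DDzeroT}). The missing idea is the paper's convexity reduction. Ergotropy is convex in the state, and for fixed $U$ the output is affine in $p$ at fixed $\tau_{11}$ and in $\tau_{11}$ at fixed $p$, so $\max_U R(\sigma_S)$ is convex along every such line. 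Since $R(\rho_S)$ is independent of $\tau_{11}$ and affine on each of $[0,\tfrac12]$ and $[\tfrac12,1]$, the maximum of $\Delta R$ lies among the six corners $\{0,\tfrac12,1\}\times\{0,1\}$, and the two $p=0$ corners give $\Delta R\le0$. This handles active and interior inputs at once, but it also obliges you to evaluate the $\tau_{11}=1$ corners and show they are dominated, rather than discard them.

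Within the unitary optimization, two of your steps would not work as stated. First, the phases cannot be aligned globally. The relative phase between the $W_+$ and $W_-$ entries enters $\sigma_{00}$ through $\eta'$ and $\sigma_{01}$ through $\zeta$, so at a generic $(p,\tau_{11})$ there is no reason both can be made constructive at once. (At $(1,0)$ the paper avoids this for the converse by using only a ceiling.) Alignment is legitimate at $(\tfrac12,0)$ because there $\sigma_{00}=\tfrac12+\tfrac12c^2s^2A^2$ depends only on $A=|y-u|$, while the free phases of $x$ and $v$ enter only $\sigma_{01}$. Second, bounding $|\sigma_{01}|$ by $\sigma_{00}$ alone via $|\sigma_{01}|^2\le\sigma_{00}\sigma_{11}$, hence $R(\sigma)\le2h_1\sigma_{00}$, is too weak at $(\tfrac12,0)$, where the output is mixed. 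It only gives $h_1(1+\sin^22\theta_c)$, which exceeds the true value ($2h_1\sin^22\theta_c$ or $2h_1s(1+s)$, according to regime) whenever $h_1\neq h_2$; for example, $1.75\,h_1$ versus $1.5\,h_1$ at $h_1=3h_2$. What works there is the corner-specific argument:
\begin{itemize}
\item the bound $|\sigma_{01}|\le\tfrac{sc^2}{2}AB$ with $B=|x|+|v|$;
\item monotonicity of the gain in $A$ at fixed $B$;
\item the fact that $(A,B)$ is a sum of two unit vectors in the first quadrant while $\Delta R$ is homogeneous of degree two in $(A,B)$, which pushes the optimum to $|(A,B)|=2$, i.e.\ equal rotation angles in the two blocks.
\end{itemize}
This is the proof of the symmetry that your plan only anticipates. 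Your Cauchy--Schwarz idea is exactly right at $(1,0)$ for $h_1>h_2$. There it gives the ceiling $8h_1c^2s^2$, which equals the $(\tfrac12,0)$ value for $h_1<3h_2$ and lies below it for $h_1\ge3h_2$, since $2s(1+s)-8s^2c^2=2s(1+s)(1-2s)^2\ge0$. Two smaller slips. In branch (i), $W_-$ should not rotate $\ket{-}$ at all: the working choice is $W_-\ket{-}=-\ket{-}$, with $W_+$ leaving amplitude $c^2/s^2$ on $\ket{+}$ (legal iff $h_2\ge h_1$), because any $\ket{11}$ component leaves weight on the system ground state. And the branch-(ii) unitary corresponds to $|A_{12}|^2=|B_{21}|^2=0$, not $1$; the latter gives $A=0$ and no gain.
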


\begin{proof}
Ergotropy is convex in the state, and, for fixed $U$, the output state $\sigma_S$ depends affinely on the input state. At fixed $\tau_{11}$, the input is affine in $p$, while at fixed $p$, it is affine in $\tau_{11}$. Hence, $\max_U R(\sigma_S)$ is convex along every horizontal and vertical line in the square $(p,\tau_{11})\in[0,1]^2$. The reference term $R(\rho_S)$ is independent of $\tau_{11}$ and is piecewise affine in $p$, with a kink at $p=1/2$. Consequently, $\max_U\Delta R$ is convex in $\tau_{11}$ on $[0,1]$, and convex in $p$ separately on $[0,1/2]$ and $[1/2,1]$. Since a convex function on a closed segment attains its maximum at an endpoint, it suffices to consider the six candidates $(p,q)\in\{0,\tfrac12,1\}\times\{0,1\}$.

The two candidates with $p=0$ have $R(\rho_S)=2h_1$, which is already the maximum ergotropy that a qubit with energy gap $2h_1$ can possess; hence, no ergotropy gain is possible for these points. The remaining four candidates are solved in closed form in Appendix~\ref{app:DDsup}, with the results summarized in Table~\ref{tab:DDcorners}. Comparing these candidates in the different regimes of $\frac{h_1}{h_2}$, as detailed in Appendix~\ref{app:DDsup}, yields Eq.~\eqref{eq:DDsup}. In particular, the corner $(1,0)$ is optimal for $h_2\ge h_1$, where it achieves the maximal possible gain of $2h_1$, while the corner $(\tfrac12,0)$ is optimal throughout the regime $h_1>h_2$.
\end{proof}

Earlier, in Sec.~\ref{subsec:CBincoherent}, we demonstrated that, in the CB regime, ergotropy injection is not possible even in the presence of interactions in the Hamiltonian, i.e., for $J\neq0$, as long as the environment is initially in a thermal state. In contrast, as a direct consequence of the proof of Theorem~\ref{thm:DDfulltransfer}, we obtain the following corollary, which shows that ergotropy injection with a thermal environment is possible in the DD regime.

\begin{corollary}[Ergotropy from a $T=0$ environment]
\label{thm:DDzeroT}
et $J=J_c$ and suppose that both qubits are initially in their respective ground states, $\rho=\rho_E=\ketbra{1}{1}$. Then, the maximum ergotropic gain, optimized over all admissible energy-conserving unitaries in the DD regime, is given by
\begin{equation}
  \Delta R=
  \begin{cases}
    \;h_1\,\dfrac{1-c}{c}, & c\ge\tfrac12\ \ (h_2\le3h_1),\\[8pt]
    \;2h_1\,(1-2c^{2}), & c<\tfrac12\ \ (h_2>3h_1),
  \end{cases}
  \label{eq:zeroT}
\end{equation}
which is strictly positive for all $h_1,h_2>0$.
\end{corollary}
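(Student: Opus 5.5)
Since $\rho\otimes\rho_E=\ketbra{11}{11}$ is pure and $\ket{11}$ lies entirely in the negative-energy eigenspace $\operatorname{span}\{\ket{-},\ket{11}\}$, the block $W_+$ plays no role. The plan is to parametrize the whole optimization by the output vector of $W_-$, compute $\sigma_S$ explicitly, and then maximize a one-variable function.

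\textbf{Step 1: reduced state.} Any admissible $U=W_+\oplus W_-$ maps $\ket{11}$ to $x\ket{-}+y\ket{11}$ with $|x|^2+|y|^2=1$. Conversely, every such pair $(x,y)$ is reachable by a suitable $W_-\in U(2)$, so the optimization is exactly over $u\equiv|x|^2\in[0,1]$ together with phases. Using Eq.~\eqref{eq:dressed}, the output is
\begin{equation*}
-sx\ket{01}+cx\ket{10}+y\ket{11}.
\end{equation*}
Tracing out the environment, and recalling that $\ket{0}$ is the excited state, I will read off
\begin{equation*}
\sigma_{00}=s^2u,\qquad \sigma_{11}=c^2u+|y|^2,\qquad |\sigma_{01}|=s|x||y|=s\sqrt{u(1-u)}.
\end{equation*}
The only nonzero coherence comes from the environment component $\ket{1}$, which is common to the terms $\ket{01}$ and $\ket{11}$. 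Hence $\Delta'=2s^2u-1$.

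\textbf{Step 2: closed-form ergotropy.} Since $R(\rho)=0$, Lemma~\ref{l1} gives $\Delta R=h_1 f(\Delta',|\sigma_{01}|)$. Expanding,
\begin{equation*}
\Delta'^2+4|\sigma_{01}|^2=1-4s^2c^2u^2,
\end{equation*}
so that
\begin{equation*}
\Delta R(u)=h_1\Big(2s^2u-1+\sqrt{1-4s^2c^2u^2}\Big).
\end{equation*}
This function is concave in $u$, because the square-root term is concave and the remaining term is linear.

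\textbf{Step 3: maximization.} Setting the derivative to zero gives $2c^2u=\sqrt{1-4s^2c^2u^2}$, hence $u^*=1/(2c)$. This critical point lies in $[0,1]$ if and only if $c\ge\tfrac12$; by Eq.~\eqref{eq:DDangle} this is equivalent to $h_2\le3h_1$. At $u^*$ we have $\sqrt{1-s^2}=c$, and the value becomes $h_1(s^2/c-1+c)=h_1(1-c)/c$. For $c<\tfrac12$, concavity forces the maximum to the boundary $u=1$. There $\sqrt{1-4s^2c^2}=|s^2-c^2|=s^2-c^2$, which yields $2h_1(s^2-c^2)=2h_1(1-2c^2)$.

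\textbf{Step 4: positivity.} In the first branch, $c^2=h_1/(h_1+h_2)<1$ gives $(1-c)/c>0$. In the second branch, $c<\tfrac12$ gives $1-2c^2>0$.

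The main obstacle I expect is Step 1, specifically getting the off-diagonal element right. One must check that $\ket{01}$ and $\ket{10}$ contribute no coherence, since their environment states are orthogonal, so that only $-sx\,y^*$ survives. With this bookkeeping done, the algebraic simplification to $1-4s^2c^2u^2$ in Step 2 is what makes the optimization tractable.
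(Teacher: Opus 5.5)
Your proposal is correct and follows essentially the same route as the paper's proof, which treats the corner $(p,\tau_{11})=(1,1)$ in the appendix: only $W_-$ acts, the gain reduces to $h_1\bigl(2s^2u-1+\sqrt{1-4s^2c^2u^2}\bigr)$ with $u=|\langle -|U|11\rangle|^2\in[0,1]$, and the maximum is at $u^*=1/(2c)$ when $c\ge\tfrac12$ and at $u=1$ otherwise. Your concavity argument is a clean way to justify the boundary maximum in the second branch; the paper reaches the same conclusion by noting that the derivative stays positive on $[0,1]$.
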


\begin{proof}
The input is the corner $(p,\tau_{11})=(1,1)$, whose value is the third row of
Table~\ref{tab:DDcorners}, derived in Appendix~\ref{app:DDsup} Sec.~\ref{app:DDsup:11}:
Eq.~\eqref{eq:zeroT}. 
\end{proof}

The above corollary is specific to the initial state with $(p,\tau_{11})=(1,1)$. In the following proposition, we generalize the statement concerning ergotropy injection with a passive environment state. In particular, we show that, for a passive initial system state with $p\geq \frac{1}{2}$, ergotropy injection is possible in the DD regime even when $\tau_{11}>p$. This is in contrast to the CB regime, where ergotropy injection is possible only when $p>\tau_{11}$. Indeed, in the latter case, we have $\Delta R\leq 2h_1(\Delta'-\Delta)$ (see the discussion for $a=0$ in Sec.~\ref{SCCB}). Since $\Delta'-\Delta=2(1-t)(p-\tau_{11})$, it follows that $\Delta R\leq0$ whenever $\tau_{11}\geq p$. The proposition is stated as follows.
\begin{figure*}[t]
  \includegraphics[width=\textwidth]{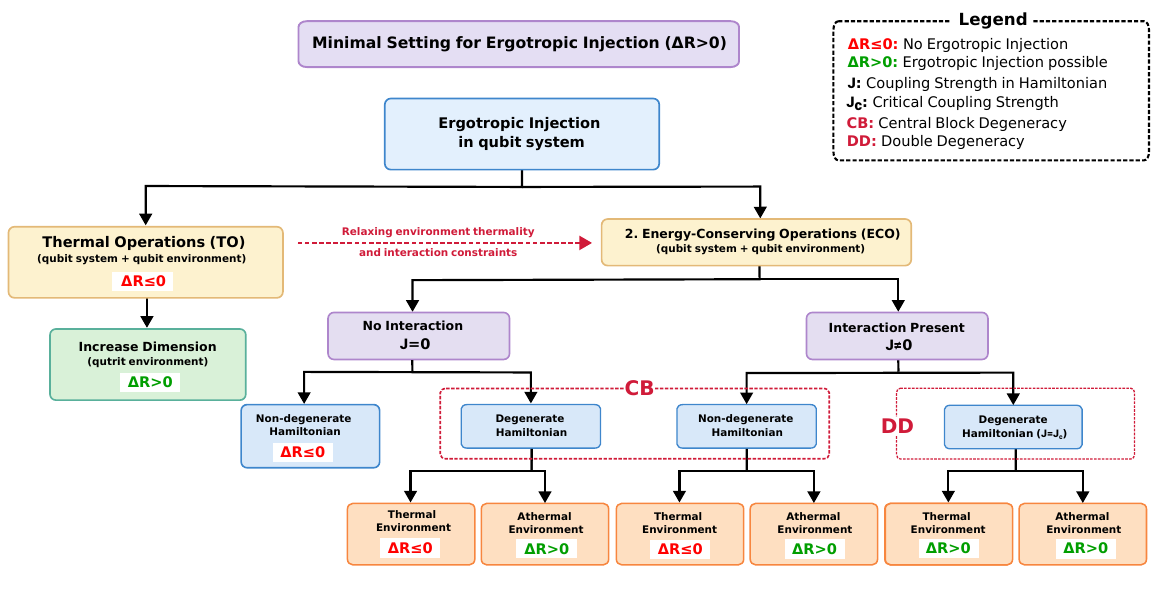}
  \caption{Summary of the ergotropic injection regimes considered in this work. Under thermal operations with a qubit system and a qubit environment, no injection is possible. Two routes circumvent this. Enlarging the environment to a qutrit restores injection within the thermal-operation framework. Alternatively, one may retain qubits and pass to the broader class of energy-conserving operations, where the unitary commutes with the full Hamiltonian including any interaction term and the environment need not be thermal. Injection is then governed by the degeneracy structure of the total Hamiltonian: for a nondegenerate non-interacting Hamiltonian it remains impossible; in the central-block regime, whether degenerate ($J=0$, $h_1=h_2$) or nondegenerate ($J\neq0$, $J\neq J_c$), it requires an athermal environment; and at the interaction-induced double degeneracy $J=J_c$ it occurs for a thermal environment as well.}
  \label{fig:flowchart}
\end{figure*}
\begin{proposition}[Injection for every passive system state]
\label{prop:DDpq}
For incoherent input states $\rho$ and $\rho_E$ with $p\ge\tfrac12$, such that $\rho$ is passive, ergotropy is injected for every $(p,\tau_{11})$ in the DD regime ($J=J_C$), except at the single trivial point $p=\tau_{11}=1/2$. In particular, ergotropy injection occurs even when $\tau_{11}>p$: an environment that is more highly populated in its ground state than the system can still charge it.
\end{proposition}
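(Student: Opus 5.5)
The plan is to prove activity of the output directly, without optimizing. Since $p\ge\tfrac12$ and $a=0$, the input $\rho$ is passive and $R(\rho)=0$, so $\Delta R=R(\sigma)$. By Lemma~\ref{l1}, $R(\sigma)=h_1 f(\Delta',|\sigma_{01}|)$, and $f(\Delta',r)>0$ whenever $r>0$. It therefore suffices, for each $(p,\tau_{11})\neq(\tfrac12,\tfrac12)$ with $p\ge\tfrac12$, to exhibit one admissible $U=W_+\oplus W_-$ for which the output coherence $\sigma_{01}$ of Eq.~\eqref{eq:DDsigma01} is nonzero. This is exactly the population-to-coherence channel of the DD regime. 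As a fallback for any isolated degenerate points, one can instead exhibit $\sigma_{00}>\tfrac12$ from Eq.~\eqref{eq:DDsigma00}.

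\textbf{Construction.} Take $W_-=\mathbb{I}$ and let $W_+$ be a small real rotation with $A_{11}=A_{22}=\cos\varepsilon$ and $A_{12}=-A_{21}=\sin\varepsilon$. Substituting into Eq.~\eqref{eq:DDsigma01}, using $B_{12}=B_{21}=0$ and $\zeta=A_{12}$, gives
\begin{equation*}
\sigma_{01}=\sin\varepsilon\,\bigl[s\cos\varepsilon\,(\rho_{++}-w_{00})+c\,\eta\bigr].
\end{equation*}
Symmetrically, take $W_+=\mathbb{I}$ and let $W_-$ be a small rotation by $\delta$. This gives $\sigma_{01}=\sin\delta\,[\,\pm s\cos\delta\,(\rho_{--}-w_{11})+c\,\eta\,]$, with the sign fixed by the convention for $B_{21}$. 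Ergotropy is injected as soon as either bracket is nonzero at $\varepsilon,\delta\to0$. The linear coefficients are $L_+=s(\rho_{++}-w_{00})+c\eta$ and $L_-=\mp s(\rho_{--}-w_{11})+c\eta$.

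\textbf{Checking that $L_+$ and $L_-$ cannot both vanish.} I would substitute $w_{ij}$, $\rho_{\pm\pm}$ and $\eta$ from Eqs.~\eqref{eq:dressedpops} and \eqref{eq:eta}, with $c^2=h_1/(h_1+h_2)$, and show that $L_+=L_-=0$ forces $p=\tau_{11}=\tfrac12$.
\begin{itemize}[leftmargin=*]
\item On the diagonal $p=\tau_{11}=q$, one has $\eta=0$. Then $\rho_{++}-w_{00}=q(1-q)-(1-q)^2=(1-q)(2q-1)$, which is nonzero for $\tfrac12<q<1$.
\item At $q=1$, $L_+=0$, but $\rho_{--}-w_{11}=1-1$ also vanishes there. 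This point is the $T=0$ corner already settled by Corollary~\ref{thm:DDzeroT}, which gives a strictly positive gain; there the population channel does the work.
\item Off the diagonal, $\eta\neq0$. The two linear equations $L_+=0=L_-$ in the two unknowns $(p,\tau_{11})$ then have only finitely many solutions. Those solutions can be checked individually, either by the combined rotation below or by the population criterion $\sigma_{00}>\tfrac12$.
\end{itemize}

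\textbf{Main obstacle and remedy.} The hardest step is ruling out accidental cancellations in $\sigma_{01}$ for every $(h_1,h_2)$. I would first try rotating both blocks simultaneously, with independent phases $A_{12}=\sin\varepsilon\,e^{\iota\alpha}$ and $B_{21}\propto\sin\delta\,e^{\iota\beta}$. At first order, $\sigma_{01}$ then becomes a sum of two terms with independently tunable phases. Such a sum vanishes identically only if each coefficient vanishes, which localizes any failure to the finite set identified above. For those leftover points, including $(1,1)$, I would invoke the closed-form corner values of Table~\ref{tab:DDcorners} (Appendix~\ref{app:DDsup}). Finally, I would note that at $p=\tau_{11}=\tfrac12$ the joint state is $\mathbb{I}/4$, so every $U$ leaves it invariant and no gain is possible. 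Since $\tau_{11}>p$ is explicitly allowed throughout, the claim that a ground-state-richer environment can charge the system follows.
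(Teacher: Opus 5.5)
Reducing the problem to exhibiting one admissible $U$ with $\sigma_{01}\neq0$ is sound. The case analysis meant to produce such a $U$ fails, however. Carrying out your substitution gives
\begin{equation*}
L_+=s(\rho_{++}-w_{00})+c\eta=s(1-\tau_{11})(2p-1),\qquad L_-=-s(\rho_{--}-w_{11})+c\eta=s\,\tau_{11}(2p-1).
\end{equation*}
The relative sign in $L_-$ is forced by unitarity of $W_-$; only the overall sign depends on the convention for $B_{21}$.

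These coefficients are bilinear, not linear, in $(p,\tau_{11})$, and they share the factor $2p-1$. Their common zero set is therefore the whole segment $p=\tfrac12$, $\tau_{11}\in[0,1]$, not a finite set.

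The combined two-block rotation cannot rescue this. For any family $U=e^{-i\varepsilon G}$ starting at the identity, the first-order change of the reduced state is $-i\varepsilon\operatorname{Tr}_E[G,\rho\otimes\rho_E]$. This vanishes identically for $\rho=\mathbb{I}/2$, because $\operatorname{Tr}_E[G(\mathbb{I}\otimes\rho_E)]=\operatorname{Tr}_E[(\mathbb{I}\otimes\rho_E)G]$.

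The population fallback fails as well. Interpolating Eqs.~\eqref{eq:app_half_hot_sig00} and \eqref{eq:app_half_cold_sig00} affinely in $\tau_{11}$ gives $\sigma_{00}=\tfrac12+\tfrac12c^2s^2|Z|^2(1-2\tau_{11})\le\tfrac12$ for every admissible $U$ when $p=\tfrac12\le\tau_{11}$.

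Table~\ref{tab:DDcorners} covers only the endpoints $(\tfrac12,0)$ and $(\tfrac12,1)$. So the points $(\tfrac12,\tau_{11})$ with $\tau_{11}\in(0,1)\setminus\{\tfrac12\}$ remain unproved, including the $\tau_{11}>p$ cases on this line that the ``in particular'' clause highlights. Separately, at $(1,1)$ you have $\rho_{--}=0$ and $w_{11}=1$, so $L_-=s\neq0$; that corner needs no special treatment.

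The repair is to leave the perturbative regime, and your own one-block family already does it. Your formula $\sigma_{01}=\sin\varepsilon\,[s\cos\varepsilon(\rho_{++}-w_{00})+c\eta]$ holds for all $\varepsilon$. At $\varepsilon=\pi/2$ with $W_-=\mathbb{I}$, so that $A_{11}=A_{22}=0$ (a full swap of $\ket{00}$ and $\ket{+}$), it gives $\sigma_{01}=c\eta=c^2s(p-\tau_{11})$. This is nonzero everywhere off the diagonal, in particular on the whole line $p=\tfrac12$ except its midpoint. On the diagonal $p=\tau_{11}=q\in(\tfrac12,1]$, your $L_-=sq(2q-1)$ is nonzero.

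This is exactly the paper's route. It fixes $A_{11}=0$, obtaining $\sigma_{01}=sB_{11}B_{21}^*(2p\tau_{11}-s^2\tau_{11}-c^2p)+c^2s(p-\tau_{11})A_{12}B_{11}^*$, and notes that both coefficients vanish only at $(0,0)$ and $(\tfrac12,\tfrac12)$. With this replacement your proof closes. For $p>\tfrac12$ your first-order argument is already complete, since $L_+$ and $L_-$ cannot vanish together there.
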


\begin{proof}
For $p\ge\tfrac12$ we have $\Delta=1-2p\le0$, so $R(\rho_S)=0$ and
$\Delta R=R(\sigma_S)
=h_1\bigl[\Delta'+\sqrt{\Delta'^{2}+4|\sigma_{01}|^{2}}\bigr]$. This expression is strictly positive for any $|\sigma_{01}|>0$, regardless of the sign of $\Delta'$, because the square root strictly exceeds $|\Delta'|$. 
By Eq.~\eqref{eq:DDsigma01}, we can choose a unitary such that $A_{11}=0$, which gives, $\sigma_{01} = sB_{11}B_{21}^*(2p\tau_{11}-s^2\tau_{11}-c^2p)+c^2s(p-\tau_{11})A_{12}B_{11}^*$. Since the $A$ and $B$ blocks are independent, their admissible matrix elements, including both their magnitudes and phases, can be varied subject to the unitarity constraints. Thus, if a particular choice of $A$ and $B$ yields $\sigma_{01}=0$, this does not imply that $\sigma_{01}$ vanishes for all admissible choices. Unless the coefficients of both contributions vanish identically, the matrix elements can be chosen such that at least one contribution is nonzero. Moreover, when both contributions are nonzero, their magnitudes and relative phase can be varied to prevent cancellation. Hence, one can choose admissible $A$ and $B$ such that $\sigma_{01}\neq0$.

The two contributions can vanish identically only if both coefficients vanish simultaneously, i.e., $2p\tau_{11}-s^2\tau_{11}-c^2p=0$ and $p-\tau_{11}=0$. Setting $p=\tau_{11}$ in the first condition and using $s^2+c^2=1$ gives $p(2p-1)=0$. Thus, the only solutions are $(p,\tau_{11})=(0,0)$ and $(p,\tau_{11})=(\tfrac12,\tfrac12)$. The former lies outside the theorem's relevant domain. At $(p,\tau_{11})=(\tfrac12,\tfrac12)$, the initial state is maximally mixed, $\rho_{SE}=I_4/4$, which remains invariant under every unitary. Consequently, $\sigma_{01}$ vanishes identically for every admissible unitary, and no coherent contribution can be generated.

In Fig.~\ref{fig:prop4map}, we present the ergotropic gain in the double-degenerate regime for initially incoherent system and environment states with $p,\tau_{11}>1/2$. To determine the ergotropic gain, we optimize over all admissible energy-conserving unitaries of the form $U=W_+\oplus W_-$. For the optimization, we employed the gradient-free Nelder–Mead algorithm from the \texttt{scipy.optimize} Python package ~\cite{2020SciPy-NMeth}.  The plot clearly complements the above proposition: the gain is strictly positive everywhere except at the single point $p=\tau_{11}=\tfrac12$, where the joint state is maximally mixed and hence invariant under any unitary. The dotted line marks $p=\tau_{11}$; notably, ergotropic injection persists above this line, showing that an environment with a higher ground-state population than the system can still activate the system.

\end{proof}


\section{Summary Of Results}
\label{summary}
In Fig.~\ref{fig:flowchart}, we have given a flowchart that systemetically depicts our result in this work, Summary of the ergotropic injection regimes considered in this work. Under thermal operations with a qubit system and a qubit environment, no injection is possible. Two routes circumvent this. Enlarging the environment to a qutrit restores injection within the thermal-operation framework. Alternatively, one may retain qubits and pass to the broader class of energy-conserving operations, where the unitary commutes with the full Hamiltonian including any interaction term and the environment need not be thermal. Injection is then governed by the degeneracy structure of the total Hamiltonian: for a nondegenerate non-interacting Hamiltonian it remains impossible; in the central-block regime, whether degenerate ($J=0$, $h_1=h_2$) or nondegenerate ($J\neq0$, $J\neq J_c$), it requires an athermal environment; and at the interaction-induced double degeneracy $J=J_c$ it occurs for a thermal environment as well.
\section{Conclusion}
\label{sec:conclusion}We have investigated the minimal physical setting for injecting ergotropy into a quantum system through its interaction with an environment. We first established that, within the framework of thermal operations, a qubit environment was insufficient to inject ergotropy in a qubit system, irrespective of the environmental temperature and energy gap. By contrast, we showed that increasing the environmental dimension to a qutrit was sufficient to activate a positive ergotropic gain. Thus, the qutrit constituted the minimal environmental dimension capable of ergotropy injection into a qubit under thermal operations.

We then went beyond thermal operations while retaining strict energy conservation by allowing an interaction term in the total Hamiltonian and considering arbitrary initial environment states. Using a two-qubit isotropic XY model, we classified the admissible energy-conserving unitaries according to the degeneracy structure of the total Hamiltonian. In the central-block regime, we found that, for initially incoherent system and environment states, the interaction alone could not generate ergotropy, and environmental athermality was necessary for a positive ergotropic gain. We derived the optimal gain and showed that environmental coherence could further enhance it, whereas coherence initially present only in the system did not necessarily provide an advantage and could instead reduce the gain.

We further analyzed a distinct double-degeneracy regime and found that positive ergotropic gain could arise even when the environment was initially passive. In particular, we showed that ergotropy could be injected even when the environment was more passive than the initial state of the system. We also derived the optimal ergotropic gain in this regime. These results demonstrated that the ability of an environment to inject ergotropy was determined not solely by its initial energetic properties, but also by the degeneracy structure of the interacting system-environment Hamiltonian and the resulting set of energy-conserving unitaries.

Overall, our results established the minimal environmental dimension required for ergotropy injection under thermal operations and showed how this restriction could be circumvented in the minimal qubit-qubit setting by relaxing environmental thermality and allowing interactions in the total Hamiltonian. They further clarified the distinct roles played by environmental athermality, coherence, and Hamiltonian interaction in enabling and enhancing ergotropy injection. These findings provided a finite-dimensional setting for identifying the physical resources responsible for charging an open quantum system and highlighted the importance of the Hamiltonian structure, beyond the initial state of the environment alone, in determining the extractable work generated through system-environment interactions.
\section{Acknowledgment}
US acknowledges financial support from the Anusandhan National Research Foundation (ANRF), Government of India, under the Grant No. ANRF/ARG/2025/004617/PS.
\begin{appendix}
\onecolumngrid
    \section{Complete proof of Theorem 4}
\label{app:DDsup}

Throughout this appendix $J=J_c$, $c^{2}=\frac{h_1}{h_1+h_2}$, and
$s^{2}=\frac{h_2}{h_1+h_2}$. The proof runs in three steps: a convexity
argument that reduces the optimization over inputs to six corner
states (Sec.~\ref{app:DDsup:convex}); an exact evaluation of the gain
at each corner (Secs.~\ref{app:DDsup:half}--\ref{app:DDsup:11}); and
a comparison of the corners in each regime of $\frac{h_1}{h_2}$
(Sec.~\ref{app:DDsup:compare}).\\

Given an incoherent input system state $\rho_S=(1-p)\ketbra{0}{0}+p\ketbra{1}{1}$ and environment state $\rho_e=\tau_{00}\ketbra{0}{0}+\tau_{11}\ketbra{1}{1}$, the input state ergotropy is given by
$R(\rho_S)=2h_1\max(1-2p,0)$. Let the output state be $\sigma_S$ with excited state population $\sigma_{00}$ and off-diagonal term $\sigma_{01}$. Writing $\Delta'=2\sigma_{00}-1$ for the
output state, the quantity to be maximized is the ergotropic gain
\begin{equation}
  \Delta R=h_1\Bigl[\Delta'+\sqrt{\Delta'^{2}+4|\sigma_{01}|^{2}}
  -2\max(1-2p,0)\Bigr]
  \label{eq:app_DR}
\end{equation}

\subsection{Reduction to six corner inputs}
\label{app:DDsup:convex}

We first recall what \emph{affine} means. A quantity $X$ depends
affinely on a parameter $\lambda$ if $X(\lambda)=X_0+\lambda X_1$ with
$X_0,X_1$ independent of $\lambda$, i.e.\ linear plus a constant and
no higher powers.\\

\emph{(1) The input is affine in $p$ and in $\tau_{11}$ separately:} In the
computational basis $\{\ket{00},\ket{01},\ket{10},\ket{11}\}$ the
incoherent product input is
\begin{equation*}
  \rho_{SE}(p,\tau_{11})=\mathrm{diag}\bigl((1-p)(1-\tau_{11}),\,(1-p)\tau_{11},\,
  p(1-\tau_{11}),\,p\tau_{11}\bigr).
  \label{eq:app_input}
\end{equation*}
At fixed $\tau_{11}$, every entry is of the form $X_0+pX_1$, and at fixed
$p$ every entry is of the form $X_0'+\tau_{11}X_1'$. So $\rho_{SE}$ is
affine along every horizontal and every vertical line of the square
$[0,1]^{2}$. \\

\emph{(2) The output is affine along the same lines:} For fixed
admissible $U$,
$\sigma_S=\tr_E[U\rho_{SE}U^{\dagger}]$ is obtained from $\rho_{SE}$
by unitary conjugation followed by a partial trace, both linear maps.
Composing a linear map with an affine dependence leaves it affine, so the maps
$p\mapsto\sigma_S$ at fixed $q$, and $\tau_{11}\mapsto\sigma_S$ at fixed $p$,
are affine.\\

\emph{(3) Ergotropy is convex in the state:} From
$R(\rho)=\Tr(H_S\rho)-\min_{V}\Tr(H_SV\rho V^{\dagger})$, the first
term is linear in $\rho$ and the second is a minimum of linear
functionals of $\rho$. A pointwise minimum of linear functions is
concave, so its negative is convex, and $R=(\text{linear})+
(\text{convex})$ is convex; this follows directly
from the variational definition of Ref.~\cite{Allahverdyan2004}.\\

\emph{(4) The optimized output ergotropy is convex along those lines:}
Composing a convex function with an affine map preserves convexity, so
for each fixed $U$ the map $p\mapsto R(\sigma_S)$ at fixed $\tau_{11}$ (and
$\tau_{11}\mapsto R(\sigma_S)$ at fixed $p$) is convex. The admissible set
$\{U=W_+\oplus W_-\}\cong\mathrm{U}(2)\times\mathrm{U}(2)$ is compact
and $R(\sigma_S)$ is continuous in $U$, so the maximum is attained;
and a pointwise maximum of convex functions is convex. Hence
$F(p,\tau_{11})\equiv\max_U R(\sigma_S)$ is convex in $p$ at fixed $\tau_{11}$ and
convex in $\tau_{11}$ at fixed $p$. (For these standard
properties of convex functions, see e.g.~\cite{Boyd2004}).\\

\emph{(5) The reference term $p={1}/{2}$ as a corner candidate:} We
must subtract $R(\rho_S)=2h_1\max(1-2p,0)$, which is constant in $\tau_{11}$ but
\emph{convex} in $p$; subtracting a convex function does not preserve
convexity, so we split its two affine pieces:
\begin{equation}
  R(\rho_S)=
  \begin{cases}
    2h_1(1-2p), & 0\le p\le\frac12,\\
    0, & \frac12\le p\le1 .
  \end{cases}
\end{equation}
On each piece $R(\rho_S)$ is affine, so
$\Delta R^{*}(p,\tau_{11})\equiv F(p,\tau_{11})-R(\rho_S)$ is convex in $p$ separately
on $[0,\tfrac12]$ and on $[\frac12,1]$. The kink of $R(\rho_S)$ at
$p=\frac12$ is the sole reason $p=\frac12$ enters the candidate
list: it is the junction of the two convexity intervals, hence an
admissible endpoint for both. In $\tau_{11}$, at fixed $p$, no such splitting
is needed and $\Delta R^{*}$ is convex on all of $[0,1]$.\\

\emph{(6) Reduction:} A convex function on a segment attains its
maximum at an endpoint. Maximizing first in $\tau_{11}$ at fixed $p$ gives
$\tau_{11}\in\{0,1\}$; maximizing then in $p$ at fixed $\tau_{11}$, on each convexity
interval, gives $p\in\{0,\frac12,1\}$. Therefore
\begin{equation}
  \max_{p,\tau_{11}}\Delta R^{*}
  =\max\Bigl\{\Delta R^{*}(p,\tau_{11}):\,
  p\in\{0,\tfrac12,1\},\ \tau_{11}\in\{0,1\}\Bigr\}.
\end{equation}
The two corners with $p=0$ have $\rho_S=\ketbra{0}{0}$ and
$R(\rho_S)=2h_1$, which is the maximum ergotropy of a qubit with gap $2h_1$.
Since $R(\sigma_S)\le2h_1$ always, they give $\Delta R\le0$ and are
discarded. Four candidates remain:
\begin{equation}
  (p,\tau_{11})\in\bigl\{(1,0),\ (\tfrac12,0),\ (\tfrac12,1),\ (1,1)\bigr\}.
\end{equation}

We use the dressed basis $\{\ket{00},\ket{+},\ket{-},\ket{11}\}$ with
$\ket{+}=c\ket{01}+s\ket{10}$ and $\ket{-}=-s\ket{01}+c\ket{10}$,
equivalently
\begin{equation}
  \ket{01}=c\ket{+}-s\ket{-},\qquad \ket{10}=s\ket{+}+c\ket{-}.
  \label{eq:app_basis}
\end{equation}
Admissible unitaries are $U=W_+\oplus W_-$ with
\begin{align*}
  W_+&=\begin{pmatrix}x_0&x\\ y_0&y\end{pmatrix}
  \ \text{on }(\ket{00},\ket{+}),
  \\ \nonumber
  W_-&=\begin{pmatrix}u&\tilde u\\ v&\tilde v\end{pmatrix}
  \ \text{on }(\ket{-},\ket{11}),
\end{align*}
so that
\begin{align}
  U\ket{00}&=x_0\ket{00}+y_0\ket{+}, & U\ket{+}&=x\ket{00}+y\ket{+},\nonumber\\
  U\ket{-}&=u\ket{-}+v\ket{11}, &
  U\ket{11}&=\tilde u\ket{-}+\tilde v\ket{11}.
  \label{eq:app_images}
\end{align}
Unitarity of a $2\times2$ block fixes the moduli pairwise and makes
the rows orthogonal:
\begin{gather}
  |x_0|=|y|,\quad |y_0|=|x|,\quad x_0\bar y_0=-x\bar y,
  \label{eq:app_unitarityP}\\
  |\tilde v|=|u|,\quad |\tilde u|=|v|,\quad
  \tilde u\bar{\tilde v}=-u\bar v .
  \label{eq:app_unitarityM}
\end{gather}
These two identities are what eliminate the unused columns below. 


\subsection{The corners \texorpdfstring{$(\tfrac12,0)$ and
$(\tfrac12,1)$}{(1/2,0) and (1/2,1)}}
\label{app:DDsup:half}

Both corners have $p=\tfrac12$, hence $R(\rho_S)=0$, and differ only
in which environment state is occupied. Their algebra is identical up
to one sign, so we treat them together and set
\begin{equation}
  \varepsilon=+1\ \ (\tau_{11}=0),\qquad \varepsilon=-1\ \ (\tau_{11}=1).
\end{equation}

\noindent
\emph{Step 1:The two input branches:} For $\tau_{11}=0$ the input is
$\rho_{SE}=\tfrac12(\ketbra{00}{00}+\ketbra{10}{10})$ and, using
Eq.~\eqref{eq:app_basis},
\begin{equation}
  U\ket{10}=sx\ket{00}+sy\ket{+}+cu\ket{-}+cv\ket{11}.
  \label{eq:app_U10}
\end{equation}
For $\tau_{11}=1$ the input is
$\rho_{SE}=\tfrac12(\ketbra{01}{01}+\ketbra{11}{11})$ and
\begin{equation}
  U\ket{01}=cx\ket{00}+cy\ket{+}-su\ket{-}-sv\ket{11}.
  \label{eq:app_U01}
\end{equation}
The second branch is $U\ket{00}$ for $\tau_{11}=0$ and $U\ket{11}$ for $\tau_{11}=1$,
both given in Eq.~\eqref{eq:app_images}.\\

\noindent
\emph{Step 2: Excited population:} Take $\tau_{11}=0$ first. From
Eq.~\eqref{eq:app_U10}, the excited population
carried by $U\ket{10}$ is
\begin{equation}
s^{2}|x|^{2}+c^{2}s^{2}|y|^{2}+ s^{2}c^{2}|u|^{2} -2c^{2}s^{2}\Re(y\bar u)
  =s^{2}|x|^{2}+c^{2}s^{2}|Z|^{2},
\end{equation}
where the cross term comes from $\tr_E\ketbra{+}{-}$ and $Z\equiv y-u$. The branch $U\ket{00}$ contributes $|x_0|^{2}+c^{2}|y_0|^{2}$. Adding
the two with weight $\tfrac12$ and using $|x_0|=|y|$, $|y_0|=|x|$ from
Eq.~\eqref{eq:app_unitarityP},
\begin{align}
  \sigma_{00}
  &=\tfrac12\bigl[|y|^{2}+c^{2}|x|^{2}+s^{2}|x|^{2}\bigr]
  +\tfrac12c^{2}s^{2}|Z|^{2}\nonumber\\
  &=\tfrac12+\tfrac12c^{2}s^{2}|Z|^{2},
  \label{eq:app_half_hot_sig00}
\end{align}
because $|x|^{2}+|y|^{2}=1$ and $c^{2}+s^{2}=1$. For $\tau_{11}=1$ the same
bookkeeping applied to Eq.~\eqref{eq:app_U01}, and 
$|\tilde u|=|v|$, gives
\begin{equation}
  \sigma_{00}=\tfrac12-\tfrac12c^{2}s^{2}|Z|^{2}.
  \label{eq:app_half_cold_sig00}
\end{equation}
In both cases
\begin{equation}
  \Delta'=2\sigma_{00}-1=\varepsilon\,c^{2}s^{2}|Z|^{2}.
  \label{eq:app_half_Delta}
\end{equation}
The interpretation is direct: with an inverted environment ($\tau_{11}=0$) the
protocol imports excitation, with a ground-state environment ($\tau_{11}=1$)
it must pay population out; the magnitude is the same and is governed
by the single variable $|Z|$.\\

\noindent
\emph{Step 3: Coherence:} For $\tau_{11}=0$, Eq.~\eqref{eq:app_U10} give
$s^{3}x\bar y+c^{2}sx\bar u+c^{2}s(y-u)\bar v$ from that branch, and
$sx_0\bar y_0$ from $U\ket{00}$. Eliminating the latter with
$x_0\bar y_0=-x\bar y$ and using $s-s^{3}=sc^{2}$,
\begin{equation}
  \sigma_{01}=\frac{sc^{2}}{2}\bigl[\bar vZ-x\bar Z\bigr].
  \label{eq:app_half_hot_sig01}
\end{equation}
For $q=1$ the same steps, now eliminating $\tilde u\bar{\tilde v}$
with Eq.~\eqref{eq:app_unitarityM}, give
\begin{equation}
  \sigma_{01}=\frac{sc^{2}}{2}\bigl[x\bar Z-\bar vZ\bigr],
  \label{eq:app_half_cold_sig01}
\end{equation}
which differs only by an overall sign. Hence, in both cases
\begin{equation}
  |\sigma_{01}|=\frac{sc^{2}}{2}\bigl|\bar vZ-x\bar Z\bigr|
  \le\frac{sc^{2}}{2}\bigl(|x|+|v|\bigr)|Z| .
  \label{eq:app_half_sig01_bound}
\end{equation}

\noindent
\emph{Step 4: Fixing the free phases:} The second column of $W_+$ and
the first column of $W_-$ may be any unit vectors, so the four phases
$\arg x,\arg y,\arg u,\arg v$ are free. Only the relative phase of $y$
and $u$ enters $|Z|$; the phases of $x$ and $v$ enter only
Eq.~\eqref{eq:app_half_sig01_bound}. We may therefore choose them
independently: first set $\arg u=\arg y+\pi$, which gives
$|Z|=|y|+|u|$; then choose $\arg x$ and $\arg v$ so that the two terms
of Eq.~\eqref{eq:app_half_sig01_bound} add rather than cancel,
saturating the bound. With
\begin{equation}
  A\equiv|Z|,\qquad B\equiv|x|+|v|,
\end{equation}
Eqs.~\eqref{eq:app_half_Delta} and \eqref{eq:app_half_sig01_bound}
inserted into Eq.~\eqref{eq:app_DR} give
\begin{equation}
  \Delta R=h_1c^{2}s\,A\Bigl[\sqrt{s^{2}A^{2}+B^{2}}+\varepsilon sA\Bigr]
  \equiv h_1c^{2}s\;g_\varepsilon(A).
  \label{eq:app_half_gain}
\end{equation}

\noindent
\emph{Step 5: Gain Monotonicity with $A$:} Write
$\mathcal D=\sqrt{s^{2}A^{2}+B^{2}}\ge sA$. Then
$g_\varepsilon(A)=A(\mathcal D+\varepsilon sA)$ and
\begin{equation}
  g_\varepsilon'(A)=\mathcal D+2\varepsilon sA+\frac{s^{2}A^{2}}{\mathcal D}
  =\frac{(\mathcal D+\varepsilon sA)^{2}}{\mathcal D}\;\ge\;0 ,
  \label{eq:app_half_monotone}
\end{equation}
for both signs. The gain is therefore non-decreasing in $A$ at fixed
$B$, which is why the phase choice $|Z|=|y|+|u|$ above is optimal.\\

\noindent
\emph{Step 6: Reduction to one angle:} Put
\begin{equation}
  |y|=\cos\omega,\ |x|=\sin\omega,\;
  |u|=\cos\mu,\ |v|=\sin\mu,
\end{equation}
with $\omega,\mu\in[0,\tfrac{\pi}{2}]$, so that
\begin{equation}
  \vec P\equiv(A,B)
  =(\cos\omega+\cos\mu,\ \sin\omega+\sin\mu)
\end{equation}
is the sum of two unit vectors in the first quadrant, with
$|\vec P|^{2}=2+2\cos(\omega-\mu)\le4$. Equation
\eqref{eq:app_half_gain} is homogeneous equation of degree two in $\vec P$:
replacing $\vec P\to\lambda\vec P$ multiplies $\Delta R$ by
$\lambda^{2}$. Since $\Delta R\ge0$, along every fixed direction the
gain grows with $|\vec P|$, so the optimum lies on the outer boundary
$|\vec P|=2$, which is reached only at $\omega=\mu$. (Every
direction in the quadrant is attainable there, with
$\vec P=2(\cos\omega,\sin\omega)$.) Setting
\begin{equation}
  \nu=\cos^{2}\omega\in[0,1],
\end{equation}
we have $A=2\cos\omega$, $B=2\sin\omega$, and
$A^{2}(s^{2}\cos^{2}\omega+\sin^{2}\omega)=4\nu(1-c^{2}\nu)$, so
\begin{equation}
  \Delta R=4h_1c^{2}s\,\Phi_\varepsilon(\nu),
\quad
  \Phi_\varepsilon(\nu)=\sqrt{\nu-c^{2}\nu^{2}}+\varepsilon s\nu .
  \label{eq:app_half_scalar}
\end{equation}

\noindent
\emph{Step 7: Stationarity and root selection:} Differentiating
Eq.~\eqref{eq:app_half_scalar},
\begin{equation}
  \Phi_\varepsilon'(\nu)=\frac{1-2c^{2}\nu}{2\sqrt{\nu-c^{2}\nu^{2}}}+\varepsilon s
  =0
  \ \Longleftrightarrow\
  1-2c^{2}\nu=-2\varepsilon s\sqrt{\nu-c^{2}\nu^{2}} .
  \label{eq:app_half_stat}
\end{equation}
Because the right-hand side has the sign of $-\varepsilon$, a solution
requires
\begin{equation}
  \varepsilon\bigl(2c^{2}\nu-1\bigr)\ge0 .
  \label{eq:app_half_signtest}
\end{equation}
Squaring Eq.~\eqref{eq:app_half_stat} and using $s^{2}=1-c^{2}$ gives
the same quadratic for both signs,
\begin{equation}
  4c^{2}\nu^{2}-4\nu+1=0,
  \qquad
  \nu_\pm=\frac{1\pm s}{2c^{2}},
  \qquad 2c^{2}\nu_\pm=1\pm s .
  \label{eq:app_half_quad}
\end{equation}
The sign test \eqref{eq:app_half_signtest} now selects one root each:
\begin{itemize}
\item $\varepsilon=+1$ (corner $(\tfrac12,0)$): $2c^{2}\nu>1$ is needed, so
$\nu^{*}=\nu_{+}=\frac{1+s}{2c^{2}}$ and $\nu_-$ is spurious. This root lies in
$[0,1]$ iff $1+s\le2c^{2}=\frac{2(1-s)}{1+s}$, i.e.\ iff
\begin{equation}
  s\le\tfrac12
  \quad\Longleftrightarrow\quad
  h_1\ge3h_2 .
\end{equation}
\item $\varepsilon=-1$ (corner $(\tfrac12,1)$): $2c^{2}\nu<1$ is needed, so
$\nu^{*}=\nu_{-}=(1-s)/(2c^{2})$. This root is always admissible, since
$1-s\le2(1-s)(1+s)$ reduces to $1\le2(1+s)$.
\end{itemize}

\noindent
\emph{Step 8: Optimum values:} At either root,
\begin{equation}
  \nu^{*}-c^{2}\nu^{*2}=\nu^{*}\bigl(1-c^{2}\nu^{*}\bigr)
  =\frac{1+\varepsilon s}{2c^{2}}\cdot\frac{1-\varepsilon s}{2}
  =\frac{1-s^{2}}{4c^{2}}=\frac14 ,
\end{equation}
so $\sqrt{\nu^{*}-c^{2}\nu^{*2}}=\tfrac12$ and
Eq.~\eqref{eq:app_half_scalar} gives
\begin{equation}
  \Delta R=4h_1c^{2}s\Bigl[\frac12
  +\varepsilon s\,\frac{1+\varepsilon s}{2c^{2}}\Bigr]
  =2h_1\,s\bigl(1+\varepsilon s\bigr).
  \label{eq:app_half_value}
\end{equation}
Hence
\begin{equation}
  \max_U\Delta R\Bigl|_{(\frac12,1)}=2h_1s(1-s)
  \quad\text{for every gap ratio},
  \label{eq:app_cold_value}
\end{equation}
and
\begin{equation}
  \max_U\Delta R\Bigl|_{(\frac12,0)}=2h_1s(1+s),
  \quad\text{for }h_1\ge3h_2 .
  \label{eq:app_hot_value}
\end{equation}

\noindent
\emph{Step 9: The plateau branch $h_1<3h_2$:} When $s>\frac{1}{2}$ the
root $\nu_+$ exceeds unity, so $\Phi_+'$ has no zero in $(0,1)$; since
$\Phi_+'(\nu)\to+\infty$ as $\nu\to0^{+}$ and $\Phi_+'$ is continuous, it
stays positive and the maximum is at the endpoint $\nu=1$. There,
$\Phi_+(1)=\sqrt{1-c^{2}}+s=2s$, so
\begin{equation}
  \max_U\Delta R\Bigl|_{(\frac12,0)}=8h_1c^{2}s^{2}
  =2h_1\sin^{2}2\theta_c;
  \;\text{for } h_1<3h_2
  \label{eq:app_hot_value_window}
\end{equation}
At $\nu=1$ one has $\omega=\mu=0$, i.e.\ $|y|=|u|=1$ and $|x|=|v|=0$
with $y=-u$: the optimal unitary is diagonal in the dressed basis,
$W_+=\mathrm{diag}(e^{\iota\alpha},1)$ and
$W_-=\mathrm{diag}(-1,e^{\iota\mu})$. It generates no coherence
($\sigma_{01}=0$) and the entire gain is population transfer. The two
expressions match at the threshold: at $s=\tfrac12$,
$2s(1+s)=\tfrac32=8c^{2}s^{2}$.

\subsection{The corner \texorpdfstring{$(1,0)$}{(1,0)}}
\label{app:DDsup:10}

The input is the pure state $\ket{10}$ and $R(\rho_S)=0$. Applying
Eq.~\eqref{eq:app_U10} and returning to the computational basis with
Eq.~\eqref{eq:app_basis},
\begin{equation}
  U\ket{10}=sx\ket{00}+cs\,Z\ket{01}
  +\bigl(s^{2}y+c^{2}u\bigr)\ket{10}+cv\ket{11},
  \label{eq:app_10_state}
\end{equation}
where $Z=y-u$. The output is pure, so from the amplitudes directly
\begin{gather}
  \sigma_{00}=s^{2}|x|^{2}+c^{2}s^{2}|Z|^{2},
  \label{eq:app_10_sig00}\\
  \sigma_{01}=sx\bigl(s^{2}\bar y+c^{2}\bar u\bigr)
  +c^{2}s\,\bar v Z .
  \label{eq:app_10_sig01}
\end{gather}

\noindent
\emph{(a) Full transfer for $h_2\ge h_1$:} Here $s\ge c$. Choose
$u=-1$, $v=0$ (that is, $W_-\ket{-}=-\ket{-}$) and $y=c^{2}/s^{2}$,
which is legal precisely because $c^{2}/s^{2}\le1$, with
$|x|=\sqrt{1-y^{2}}$. Then $s^{2}y+c^{2}u=c^{2}-c^{2}=0$: the
$\ket{10}$ amplitude is annihilated and
\begin{equation}
  U\ket{10}=sx\ket{00}+cs\bigl(1+\tfrac{c^{2}}{s^{2}}\bigr)\ket{01}
  =\ket{0}_S\otimes\ket{\chi}_E ,
\end{equation}
with $\ket{\chi}$ normalized. The system is fully excited,
$\sigma_S=\ketbra{0}{0}$, $R(\sigma_S)=2h_1$, and
\begin{equation}
  \max_U\Delta R\Bigl|_{(1,0)}=2h_1
  \qquad (h_2\ge h_1),
  \label{eq:app_10_full}
\end{equation}
which is the largest ergotropy any qubit of gap $2h_1$ can hold and
therefore terminates the optimization in this regime. For $h_1>h_2$
the same construction would require $y=c^{2}/s^{2}>1$, which is not a legal
matrix element.\\

\noindent
\emph{(b) A ceiling for $h_1>h_2$:} From
$|\sigma_{01}|^{2}\le\sigma_{00}(1-\sigma_{00})$ one has
$\Delta'^{2}+4|\sigma_{01}|^{2}\le1$, hence
$\Delta R\le h_1(\Delta'+1)=2h_1\sigma_{00}$. In
Eq.~\eqref{eq:app_10_sig00} use $|x|^{2}=1-|y|^{2}$ and
$|Z|\le|y|+|u|\le|y|+1$; and writing $\Upsilon=|y|\in[0,1]$,
\begin{equation}
  \sigma_{00}\le G(\Upsilon)\equiv s^{2}(1-\Upsilon^{2})
  +c^{2}s^{2}(1+\Upsilon)^{2},
  \;
  G'(\Upsilon)=2s^{2}\bigl(c^{2}-s^{2}\Upsilon\bigr).
\end{equation}
For $c^{2}>s^{2}$, $G'>0$ on $[0,1]$, so $G$ is maximized at
$\Upsilon=1$ with $G(1)=4c^{2}s^{2}$. Therefore
\begin{equation}
  \max_U\Delta R\Bigl|_{(1,0)}\le 8h_1c^{2}s^{2}
  =2h_1\sin^{2}2\theta_c
  \qquad (h_1>h_2).
  \label{eq:app_10_bound}
\end{equation}
This bound is all that the comparison in
Sec.~\ref{app:DDsup:compare} needs.\\

\noindent
\emph{(c) Exact value for $h_1>h_2$:} The phases are fixed as in
Sec.~\ref{app:DDsup:half}: opposite phases for $y$ and $u$, which
maximizes both $\sigma_{00}$ and the second term of
Eq.~\eqref{eq:app_10_sig01}, and free phases for $x$ and $v$, which
align the two terms of $\sigma_{01}$. With the same parametrization
$|y|=\cos\omega$, $|x|=\sin\omega$, $|u|=\cos\mu$, $|v|=\sin\mu$
and $A=\cos\omega+\cos\mu$,
\begin{gather}
  \sigma_{00}=s^{2}\sin^{2}\omega+c^{2}s^{2}A^{2},\\
  |\sigma_{01}|=s\sin\omega\,\bigl|s^{2}\cos\omega
  -c^{2}\cos\mu\bigr|+c^{2}sA\sin\mu .
\end{gather}
Setting $\omega=\mu$ and $\nu=\cos^{2}\omega$, and abbreviating
\begin{equation}
  \varrho\equiv4c^{2}-1>0 ,
\end{equation}
these become
\begin{gather}
  \sigma_{00}=s^{2}(1+\varrho\nu),
  \qquad
  \Delta'=\bigl(1-2c^{2}\bigr)+2s^{2}\varrho\nu ,
  \label{eq:app_10_sym_pop}\\
  |\sigma_{01}|=s\,\varrho\sqrt{\nu(1-\nu)},
  \label{eq:app_10_sym_coh}
\end{gather}
where Eq.~\eqref{eq:app_10_sym_coh} used
$(c^{2}-s^{2})+2c^{2}=4c^{2}-1$. Writing
$\mathcal{R}=\sqrt{\Delta'^{2}+4s^{2}\varrho^{2}\nu(1-\nu)}$, the gain is
$\frac{\Delta R}{h_1}=\Delta'+\mathcal{R}$. Using the identity,
\begin{equation}
  \Delta'+\varrho(1-2\nu)=\bigl(1-2c^{2}+\varrho\bigr)-2\varrho\nu\bigl(1-s^{2}\bigr)
  =2c^{2}\bigl(1-\varrho\nu\bigr),
  \label{eq:app_10_identity}
\end{equation}
which gives
\begin{equation}
\begin{split}
  \dfrac{d}{d\nu}\frac{\Delta R}{h_1}
&=2s^{2}\varrho\Bigl[1+\frac{\Delta'+\varrho(1-2\nu)}{\mathcal{R}}\Bigr]\\
  &=\frac{2s^{2}\varrho}{\mathcal{R}}
  \Bigl[\mathcal{R}+2c^{2}(1-\varrho\nu)\Bigr].
  \label{eq:app_10_deriv}
  \end{split}
\end{equation}
Stationarity therefore requires $\mathcal{R}=2c^{2}(\varrho\nu-1)$, which is
possible only if $\varrho\nu\ge1$. Squaring and using
$\Delta'=(1-2c^{2})+2s^{2}\varrho\nu$ reduces, after cancelling an overall
factor $-\varrho$, to
\begin{equation}
  4c^{2}\varrho\,\nu^{2}-8c^{2}\nu+1=0,
  \qquad
  \nu=\frac{8c^{2}\pm4c}{8c^{2}\varrho}=\frac{2c\pm1}{2c\varrho},
\end{equation}
that is, with $\varrho=(2c-1)(2c+1)$,
\begin{equation}
  \nu_{+}=\frac{1}{2c(2c-1)},
  \qquad
  \nu_{-}=\frac{1}{2c(2c+1)} .
  \label{eq:app_10_roots}
\end{equation}
The condition $\varrho\nu\ge1$ selects between them immediately:
\begin{equation}
  \varrho\,\nu_{+}=\frac{2c+1}{2c}>1\ \ (\text{accepted}),
  \;
  \varrho\,\nu_{-}=\frac{2c-1}{2c}<1\ \ (\text{rejected}).
\end{equation}
The accepted root lies in $[0,1]$ iff $2c(2c-1)\ge1$, i.e.\
$4c^{2}-2c-1\ge0$, i.e.
\begin{equation}
  c\ \ge\ \frac{1+\sqrt5}{4}=\frac{\varphi}{2},
  \qquad \varphi=\frac{1+\sqrt5}{2},
  \label{eq:app_10_golden}
\end{equation}
equivalently $\xi\equiv h_1/h_2\ge \xi_\star=\frac{\varphi+1}{3-\varphi}
=1+\frac{2}{\sqrt5}\simeq1.894$. When it is admissible, using
$\mathcal{R}=-[\Delta'+\varrho(1-2\nu)]$ the value is
\begin{equation}
  \frac{\Delta R}{h_1}=\Delta'+\mathcal{R}=\varrho\bigl(2\nu_{+}-1\bigr)
  =\frac{(2c+1)^{2}(1-c)}{c},
\end{equation}
where $1+c-2c^{2}=(1-c)(1+2c)$. Expanding the numerator,
$(2c+1)^{2}(1-c)=1+3c-4c^{3}$, so
\begin{equation}
  \max_U\Delta R\Bigl|_{(1,0)}
  =h_1\Bigl[\frac{1-c}{c}+4s^{2}\Bigr]
  \qquad (\xi\ge \xi_\star).
  \label{eq:app_10_exact}
\end{equation}
When $c<\frac{\varphi}{2}$ the root $\nu_{+}$ exceeds unity, so by
Eq.~\eqref{eq:app_10_deriv} the derivative never vanishes on $[0,1)$
and the maximum sits at $1$. There $\sigma_{00}=s^{2}(1+\varrho)
=4c^{2}s^{2}$ and $\sigma_{01}=0$, so
$\Delta'=2\sin^{2}2\theta_c-1$, which is positive throughout
$h_2<h_1<\xi_\star h_2$, and
\begin{equation}
  \max_U\Delta R\Bigl|_{(1,0)}
  =2h_1\bigl(2\sin^{2}2\theta_c-1\bigr)
  \qquad (1<\xi<\xi_\star).
  \label{eq:app_10_window}
\end{equation}
The two branches join continuously: at $c=\frac{\varphi}{2}$ one has
$\nu_{+}=\frac{1}{\varphi(\varphi-1)}=1$, since $\varphi^{2}-\varphi=1$, and
both Eq.~\eqref{eq:app_10_exact} and Eq.~\eqref{eq:app_10_window}
return $\Delta R=\varphi h_1$. 

The antiphase choice for $y$ and $u$, and the alignment of the free phases of $x$ and $v$, maximize the gain: the objective depends on these phases only through the single relative phase $\chi'=\arg y-\arg u$, and a numerical maximization over $(\omega,\mu,\chi')$ across the full range of $h_1/h_2$ returns $\chi'=\pi$ and $\omega=\mu$ in every case, reproducing
Eqs.~\eqref{eq:app_10_exact} and~\eqref{eq:app_10_window} to machine precision.

\subsection{Optimality of the symmetric line at the corner
\texorpdfstring{$(1,0)$}{(1,0)}}
\label{app:DDsup:sym}

\noindent
Section~\ref{app:DDsup:10}(c) evaluated the gain on the line
$\omega=\mu$. We now show that this restriction is exact: every
interior critical point of the full two-angle problem lies on that
line, and the four boundary edges are dominated by points on it.\\

Write $X=\cos\omega=|y|$, $Y=\cos\mu=|u|$, $\tilde\omega=\sin\omega=\sqrt{1-X^{2}}$,
$\tilde \mu=\sin\mu=\sqrt{1-Y^{2}}$, and $\kappa'\equiv c^{2}-s^{2}>0$. Since the
input $\ket{10}$ is pure and $U$ unitary, the output is a pure
two-qubit state, so
$\Delta'^{2}+4|\sigma_{01}|^{2}=1-4D^{2}$ with $D$
the cross-determinant of its amplitudes. With $y,u$ in antiphase,
\begin{equation}
  \sigma_{00}=s^{2}\bigl[\tilde\omega^{2}+c^{2}(X+Y)^{2}\bigr],
  \qquad
  D=cs\bigl(\tilde\omega\tilde\mu-G\bigr),
  \qquad
  G\equiv(X+Y)\bigl(c^{2}Y-s^{2}X\bigr),
\end{equation}
the free phases of $x$ and $v$ having been used to minimize
$|D|$. The objective is
$\mathcal{M}\equiv \frac{\Delta R}{h_1}=\Delta'+\sqrt{1-4 D^{2}}$ with
$\Delta'=2\sigma_{00}-1$, increasing in $\Delta'$ and decreasing in
$|D|$.\\

\emph{(i) Interior critical points:} Since $\mathcal{M}$ depends on
$(X,Y)$ only through $\Delta'$ and $D$, a critical point requires
their gradients to be parallel, i.e.\ the Jacobian
$J\equiv \Delta'_X D_Y - \Delta'_Y D_X$ to vanish. Using
$\partial_X \tilde\omega=-\frac{X}{\tilde\omega}$, $\partial_Y \tilde\mu=-\frac{Y}{\tilde\mu}$ and
\begin{equation}
  \Delta'_X=4s^{2}\bigl(c^{2}Y-s^{2}X\bigr),\quad
  \Delta'_Y=4c^{2}s^{2}(X+Y),\quad
  D_X=cs\Bigl(-\tfrac{X\tilde\mu}{\tilde\omega}-G_X\Bigr),\quad
  D_Y=cs\Bigl(-\tfrac{Y\tilde\omega}{\tilde\mu}-G_Y\Bigr),
\end{equation}
one finds $J=\frac{4s^{3}c}{\tilde\omega\tilde\mu}F'$ with, writing $H=s^{2}X^{2}+c^{2}Y^{2}$,
\begin{equation}
  F'=c^{2}\bigl(X^{2}-Y^{2}\bigr)+XY-H\bigl(XY+\tilde\omega\tilde\mu\bigr),
\end{equation}
the $\Re(w\bar u)$-type cross terms canceling identically. Setting
$\Sigma=\omega+\mu$ and $\delta=\omega-\mu$, and using
$XY+\tilde\omega\tilde\mu=\cos\delta$, $XY=\tfrac12(\cos\delta+\cos\Sigma)$,
$X^{2}-Y^{2}=-\sin\Sigma\sin\delta$ and
$H=\tfrac12\bigl(1+\cos\Sigma\cos\delta+\kappa'\sin\Sigma\sin\delta\bigr)$,
the terms in $\cos\delta$ cancel and $F'$ factorizes exactly:
\begin{equation}
  F'=\sin\delta\cdot\Lambda',
  \qquad
  \Lambda'=\tfrac12\cos\Sigma\sin\delta
  -\sin\Sigma\Bigl[\tfrac{\kappa'}{2}\cos\delta+c^{2}\Bigr].
  \label{eq:app_sym_factor}
\end{equation}
For $(\omega,\mu)\in[0,\tfrac{\pi}{2}]^{2}$ one has $\Sigma\in[0,\pi]$
and $|\delta|\le\min(\Sigma,\pi-\Sigma)$, hence
$|\sin\delta|\le\sin\Sigma$. Since $c^{2}>\tfrac12$ here and
$\kappa'\cos\delta\ge0$,
\begin{equation}
  \sin\Sigma\Bigl[\tfrac{\kappa'}{2}\cos\delta+c^{2}\Bigr]
  \ge c^{2}\sin\Sigma>\tfrac12\sin\Sigma
  \ge\bigl|\tfrac12\cos\Sigma\sin\delta\bigr| ,
\end{equation}
so $\Lambda'<0$ whenever $\sin\Sigma>0$, while $\sin\Sigma=0$ forces
$\delta=0$. Therefore $F'=0$ iff $\sin\delta=0$, i.e.\ $\omega=\mu$.\\

\emph{(ii) Boundaries:} By the extreme value theorem, the maximum lies
on the symmetric line or on an edge. On $\omega=\tfrac{\pi}{2}$ and
$\mu=\tfrac{\pi}{2}$ the elementary bound $\mathcal{M}\le2\sigma_{00}$
gives $\mathcal{M}\le2s^{2}(1+c^{2})$ and $\mathcal{M}\le2s^{2}$
respectively, both below $\frac{1-c}{c}+4s^{2}$. For the two remaining
edges we compare each edge point with the symmetric point of equal
population: if $\sigma_{00}(X,Y)=s^{2}(1+\varrho\nu)$ for some
$\nu\in[0,1]$ and $|D_{\rm diag}(\nu)|\le|D|$, then
$\mathcal{M}\le\mathcal{M}_{\rm diag}(\nu)\le\mathcal{M}_{\rm sym}$,
since the two share $\Delta'$ and $\mathcal{M}$ decreases in
$|D|$. On the diagonal, $D_{\rm diag}=cs(1-\varrho\nu)$.\\

On $\mu=0$ ($Y=1$), put $N=1+X\in[1,2]$ and $m'=1-s^{2}N\in[\kappa',c^{2}]$.
Then $\sigma_{00}=1-m'^{2}$ and $D=\frac{c}{s}m'(1-m')$, matching
$\varrho\nu=\frac{c^{2}-m'^{2}}{s^{2}}\in[0,\varrho]$ and
$D_{\rm diag}=\frac{c}{s}|m'^{2}-\kappa'|$. The requirement
$|m'^{2}-\kappa'|\le m'(1-m')$ holds because for $m'^{2}\le\kappa'$ it reduces
to $\kappa'\le m'$, true on the stated range, while for $m'^{2}>\kappa'$ the
convex function $\Phi(m')=2m'^{2}-m'-\kappa'$ satisfies
$\Phi(\kappa')=2\kappa'(\kappa'-1)\le0$ and
$\Phi(c^{2})=(2c^{2}-1)(c^{2}-1)\le0$, hence $\Phi\le0$ throughout.\\

On $\omega=0$ ($X=1$), put $N=1+Y$, so
$\sigma_{00}=s^{2}c^{2}N^2$ and $D=csN|c^{2}N-1|$. If $c^{2}N^2 \le2$
then $\mathcal{M}\le2s^{2}c^{2}N^2\le4s^{2}<\mathcal{M}_{\rm sym}$.
If $c^{2}N^2>2$, matching gives $\varrho\nu=c^{2}N^2-1$ and
$D_{\rm diag}=cs(c^{2}N^2-2)$; since $c^{2}N^2>2$ and $N\le2$ force
$c^{2}N>1$, the requirement $c^{2}N^2-2\le N(c^{2}N^2-1)$ reduces to $N\le2$.\\

All four edges are therefore dominated, and with (i) the maximum is the
symmetric stationary point $\nu_{+}$ of Eq.~\eqref{eq:app_10_roots}.

\subsection{The corner \texorpdfstring{$(1,1)$}{(1,1)}}
\label{app:DDsup:11}

The input $\ket{11}$ lies entirely in $\mathcal H_-$, so only $W_-$
acts and $U\ket{11}=\tilde u\ket{-}+\tilde v\ket{11}$. This gives 
\begin{equation}
  \sigma_{00}=s^{2}|\tilde u|^{2},
  \qquad
  \sigma_{01}=-s\,\tilde u\bar{\tilde v}.
\end{equation}
Writing $\tilde m=|\tilde u|^{2}\in[0,1]$, so $|\tilde v|^{2}=1-\tilde m$,
\begin{equation}
  \Delta'=2\tilde ms^{2}-1,
  \qquad
  4|\sigma_{01}|^{2}=4s^{2} \tilde m(1-\tilde m),
\end{equation}
and $\Delta'^{2}+4|\sigma_{01}|^{2}=1-4 \tilde m^{2}s^{2}c^{2}$, so that
\begin{equation}
  \frac{\Delta R}{h_1}=2 \tilde ms^{2}-1+\sqrt{1-4\tilde m^{2}s^{2}c^{2}} .
  \label{eq:app_11_scalar}
\end{equation}
Its derivative is $2s^{2}\bigl[1-2\tilde mc^{2}/\sqrt{1-4\tilde m^{2}s^{2}c^{2}}
\bigr]$, and the subtracted ratio increases with $m$, so
Eq.~\eqref{eq:app_11_scalar} has an interior maximum where
$\sqrt{1-4 \tilde m^{2}s^{2}c^{2}}=2\tilde mc^{2}$, i.e.\ $4\tilde m^{2}c^{2}=1$,
\begin{equation}
 \tilde m^{*}=\frac{1}{2c},
\end{equation}
admissible iff $\tilde m^{*}\le1$, i.e.\ $c\ge\tfrac12$, i.e.\
$h_2\le3h_1$. Substituting,
\begin{equation}
  \max_U\Delta R\Bigl|_{(1,1)}
  =h_1\Bigl[\frac{s^{2}}{c}-1+c\Bigr]
  =h_1\,\frac{1-c}{c}
  \qquad (h_2\le3h_1),
  \label{eq:app_11_value}
\end{equation}
in agreement with Corollary~\ref{thm:DDzeroT}. For $c<\tfrac12$ the
maximum sits at $\tilde m=1$ and gives
$\Delta R=2h_1(1-2c^{2})$. Since the comparison below only concerns
$h_1>h_2$, where $c^{2}>\tfrac12$, only
Eq.~\eqref{eq:app_11_value} is needed.

\subsection{Comparison of the corners}
\label{app:DDsup:compare}

\begin{table}[t]
\caption{Maximal gain at each corner of the incoherent square, from
Appendix~\ref{app:DDsup}. Here $\xi=h_1/h_2$,
$\xi_\star=(\varphi+1)/(3-\varphi)=1+2/\sqrt5\simeq1.894$ with
$\varphi=(1+\sqrt5)/2$, and $\sin^{2}2\theta_c=4c^{2}s^{2}$. The
largest entry in each regime is the corresponding branch of
Eq.~\eqref{eq:DDsup}.}
\label{tab:DDcorners}
\begin{ruledtabular}
\begin{tabular}{lll}
$(p,q)$ & $\max_U\Delta R$ & valid when\\
\hline
$(0,0),\,(0,1)$ & $0$ & all ratios\\[2pt]
$(\tfrac12,1)$  & $2h_1s(1-s)$ & all ratios\\[2pt]
$(1,1)$         & $h_1(1-c)/c$ & $h_2\le3h_1$\\
                & $2h_1(1-2c^{2})$ & $h_2>3h_1$\\[2pt]
$(1,0)$         & $2h_1$ & $h_2\ge h_1$\\
                & $2h_1\bigl(2\sin^{2}2\theta_c-1\bigr)$ & $1<\xi<\xi_\star$\\
                & $h_1\bigl[(1-c)/c+4s^{2}\bigr]$ & $\xi\ge \xi_\star$\\[2pt]
$(\tfrac12,0)$  & $2h_1\sin^{2}2\theta_c$ & $h_1<3h_2$\\
                & $2h_1s(1+s)$ & $h_1\ge3h_2$\\
\end{tabular}
\end{ruledtabular}
\end{table}

\noindent
\emph{Regime I: $h_2\ge h_1$:} The corner $(1,0)$ attains $2h_1$ by
Eq.~\eqref{eq:app_10_full}, the largest value any state can have, so
it is the maximum and the first branch of Eq.~\eqref{eq:DDsup}
follows.\\

For the remaining two regimes $h_1>h_2$, so $c^{2}>\tfrac12$ and
$s<\frac{1}{\sqrt2}$; we show that $(\tfrac12,0)$ dominates the other three
corners. Two elementary facts are used repeatedly:
\begin{equation}
  \frac{1-c}{c}=\frac{(1-c)(1+c)}{c(1+c)}=\frac{s^{2}}{c(1+c)}<s^{2},
  \label{eq:app_cmp_11}
\end{equation}
because $c(1+c)>1$ for $c>\frac{1}{\sqrt2}$; and $8s^{2}c^{2}\ge4s^{2}$,
because $c^{2}\ge\tfrac12$.\\

\noindent
\emph{Regime II: the window $h_2<h_1<3h_2$:} Here
$\tfrac12<s<\frac{1}{\sqrt2}$ and the value to beat is
$8h_1s^{2}c^{2}=2h_1\sin^{2}2\theta_c$, attained at $(\tfrac12,0)$ by
Eq.~\eqref{eq:app_hot_value_window}.
\begin{itemize}
\item Against $(1,0)$: the ceiling \eqref{eq:app_10_bound} is exactly
this number, so $(1,0)$ cannot exceed it. (Its exact value
\eqref{eq:app_10_window} is strictly smaller whenever
$\sin^{2}2\theta_c<1$, i.e.\ whenever $h_1\neq h_2$, since
$2X-1\le X$ for $X\le1$.)
\item Against $(\tfrac12,1)$: we need
$2s(1-s)\le8s^{2}c^{2}=8s^{2}(1-s)(1+s)$. Dividing by
$2s(1-s)>0$ leaves $1\le4s(1+s)$, true because $s>\tfrac12$ gives
$4s(1+s)>3$.
\item Against $(1,1)$: by Eq.~\eqref{eq:app_cmp_11},
$\frac{1-c}{c}<s^{2}<4s^{2}\le8s^{2}c^{2}$.
\end{itemize}

\noindent
\emph{Regime III: $h_1\ge3h_2$:} Here $s\le\tfrac12$,
$c\ge\frac{\sqrt3}{2}$, and the value to beat is $2h_1s(1+s)$, attained at
$(\tfrac12,0)$ by Eq.~\eqref{eq:app_hot_value}.
\begin{itemize}
\item Against $(1,0)$: with the ceiling \eqref{eq:app_10_bound},
\begin{equation}
  2s(1+s)-8s^{2}c^{2}
  =2s(1+s)(1-2s)^{2}\ \ge0 ,
  \label{eq:app_cmp_excess}
\end{equation}
using $c^{2}=(1-s)(1+s)$, with equality only at $s=\tfrac12$, i.e.\
at the threshold $h_1=3h_2$. 
\item Against $(\tfrac12,1)$: $2s(1+s)>2s(1-s)$ for $s>0$.
\item Against $(1,1)$: by Eq.~\eqref{eq:app_cmp_11},
$\frac{1-c}{c}<s^{2}<2s(1+s)$.
\end{itemize}

Collecting the three regimes, the supremum over incoherent inputs is
$2h_1$ for $h_2\ge h_1$, $2h_1\sin^{2}2\theta_c$ in the window
$h_2<h_1<3h_2$, and $2h_1s(1+s)$ for $h_1\ge3h_2$, which is
Eq.~\eqref{eq:DDsup}. The stated optimal protocols are those of
Eqs.~\eqref{eq:app_10_full}, \eqref{eq:app_hot_value_window} and
\eqref{eq:app_hot_value}.
\qed

\end{appendix}

\bibliography{ref1}

@Article{Kosloff2013,
AUTHOR = {R. Kosloff},
TITLE = {Quantum Thermodynamics: A Dynamical Viewpoint},
JOURNAL = {Entropy},
VOLUME = {15},
YEAR = {2013},
NUMBER = {6},
PAGES = {2100--2128},
URL = {https://www.mdpi.com/1099-4300/15/6/2100},
ISSN = {1099-4300},
DOI = {10.3390/e15062100}
}

@article{Goold2016,
author = "P. Skrzypczyk and L. del Rio and A. Riera and M. Huber and J. Goold",
title = "{The role of quantum information in thermodynamics{\textemdash}a topical review}",
doi = "10.1088/1751-8113/49/14/143001",
journal = "J. Phys. A",
volume = "49",
number = "14",
pages = "143001",
year = "2016"
}

@article{Vinjanampathy2016,   
title={Quantum thermodynamics},
author={S. Vinjanampathy and J. Anders},
journal={Contemporary Physics},
volume={57},
number={4},
pages={545--579},
year={2016},
publisher={Taylor \& Francis}, 
doi={10.1080/00107514.2016.1201896}
}

@book{BinderBook2018,
editor = "F. Binder and L. A. Correa and C. Gogolin and J. Anders and G. Adesso",
title = "{Thermodynamics in the Quantum Regime: Fundamental Aspects and New Directions}",
doi = "10.1007/978-3-319-99046-0",
isbn = "978-3-319-99045-3, 978-3-319-99046-0",
publisher = "Springer",
series = "Fundamental Theories of Physics",
volume = "195",
year = "2018"
}

@book{DeffnerCampbell2019,   
author={S. Deffner and S. Campbell},
title={Quantum Thermodynamics: An introduction to the thermodynamics of quantum information},   
publisher={Morgan \& Claypool Publishers},
year={2019}, 
doi={10.1088/2053-2571/ab21c6}
}

@article{AlickiFannes2013,   
title = {Entanglement boost for extractable work from ensembles of quantum batteries},
  author = {R. Alicki and M. Fannes},
  journal = {Phys. Rev. E},
  volume = {87},
  issue = {4},
  pages = {042123},
  numpages = {4},
  year = {2013},
  month = {Apr},
  publisher = {American Physical Society},
  doi = {10.1103/PhysRevE.87.042123},
  url = {https://link.aps.org/doi/10.1103/PhysRevE.87.042123}
}

@article{Campaioli2024,  
title={Colloquium: quantum batteries},
  author={F. Campaioli and S. Gherardini and J. Q. Quach and M. Polini and G. M. Andolina},
  journal={Reviews of Modern Physics},
  volume={96},
  number={3},
  pages={031001},
  year={2024},
  publisher={APS}, 
  doi={10.1103/RevModPhys.96.031001}
  }

@article{Binder2015,   
title={Quantacell: powerful charging of quantum batteries},
   author={F. C. Binder and S. Vinjanampathy and K. Modi and J. Goold},
  journal={New Journal of Physics},
  volume={17},
  number={7},
  pages={075015},
  year={2015},
  publisher={IOP Publishing}, doi={10.1088/1367-2630/17/7/075015}
  }

@article{Campaioli2017,   
title={Enhancing the charging power of quantum batteries},
author={F. Campaioli and F. A. Pollock and F. C. Binder and L. C{\'e}leri and J. Goold and S. Vinjanampathy and K. Modi},
  journal={Physical review letters},
  volume={118},
  number={15},
  pages={150601},
  year={2017},
  publisher={APS}, doi={10.1103/PhysRevLett.118.150601}
  }

@article{Ferraro2018,     
title={High-power collective charging of a solid-state quantum battery},
author={D. Ferraro and M. Campisi and G. M. Andolina and V. Pellegrini and M. Polini},
  journal={Physical review letters},
  volume={120},
  number={11},
  pages={117702},
  year={2018},
  publisher={APS}, 
  doi={10.1103/PhysRevLett.120.117702}
  }

@article{Barra2022,   
title={Quantum batteries at the verge of a phase transition},
   author={F. Barra and K. V. Hovhannisyan and A. Imparato},
  journal={New Journal of Physics},
  volume={24},
  number={1},
  pages={015003},
  year={2022},
  publisher={IOP Publishing},
  doi={10.1088/1367-2630/ac43ed}
  }

@article{GhoshMal2021,   
title={Fast charging of a quantum battery assisted by noise},
  author={S. Ghosh and T. Chanda and S. Mal and A. Sen},
  journal={Physical Review A},
  volume={104},
  number={3},
  pages={032207},
  year={2021},
  publisher={APS}, 
  doi={10.1103/PhysRevA.104.032207}
  }

@article{GhoshSen2022,  
title={Dimensional enhancements in a quantum battery with imperfections},
  author={S. Ghosh and A. Sen},
  journal={Physical Review A},
  volume={105},
  number={2},
  pages={022628},
  year={2022},
  publisher={APS}, 
  doi={10.1103/PhysRevA.105.022628}
  }

@article{Konar2022,   
title = {Quantum battery with non-Hermitian charging},
 author = {T. K. Konar and L. G. C. Lakkaraju and A. Sen (De)},
  journal = {Phys. Rev. A},
  volume = {109},
  issue = {4},
  pages = {042207},
  numpages = {12},
  year = {2024},
  month = {Apr},
  publisher = {American Physical Society},
  doi = {10.1103/PhysRevA.109.042207},
  url = {https://link.aps.org/doi/10.1103/PhysRevA.109.042207}
}

@article{Bhattacharyya2024prl,   
title = {Noncompletely Positive Quantum Maps Enable Efficient Local Energy Extraction in Batteries},
  author = {A. Bhattacharyya and K. Sen and U. Sen},
  journal = {Phys. Rev. Lett.},
  volume = {132},
  issue = {24},
  pages = {240401},
  numpages = {7},
  year = {2024},
  month = {Jun},
  publisher = {American Physical Society},
  doi = {10.1103/PhysRevLett.132.240401},
  url = {https://link.aps.org/doi/10.1103/PhysRevLett.132.240401}
}

@article{Chaki2026jpa,
doi = {10.1088/1751-8121/ae3ff5},
url = {https://doi.org/10.1088/1751-8121/ae3ff5},
year = {2026},
month = {feb},
publisher = {IOP Publishing},
volume = {59},
number = {6},
pages = {065303},
author = {P. Chaki and A. Bhattacharyya and K. Sen and U. Sen},
title = {Role of energy-invariant assistants in energy extraction from quantum batteries},
journal = {Journal of Physics A: Mathematical and Theoretical}
}

@article{Chaki2025pra,   
title = {Auxiliary-assisted energy distillation from quantum batteries},
  author = {P. Chaki and A. Bhattacharyya and K. Sen and U. Sen},
  journal = {Phys. Rev. A},
  volume = {112},
  issue = {5},
  pages = {052446},
  numpages = {16},
  year = {2025},
  month = {Nov},
  publisher = {American Physical Society},
  doi = {10.1103/cyrc-ms34},
  url = {https://link.aps.org/doi/10.1103/cyrc-ms34}
}

@article{Shukla2025,
  title={System versus charger in performance optimization of quantum batteries},
 author={R. K. Shukla and R. Kumar and U. Sen and S. K. Mishra},
   journal={arXiv preprint},
  eprint = "2505.08029",
  primaryClass = "quant-ph",
  year={2025}
  }

@article{Sarkar2025,
 author = "A. Sarkar and P. Chaki and P. Ghosh and U. Sen",
    title = "{Fluctuation in energy extraction from quantum batteries: How open should the system be to control it?}",
    eprint = "2505.16851",
    journal={arXiv preprint},
    primaryClass = "quant-ph",
    month = "5",
    year = "2025"
    }

@article{Chaki2026,
title = {Positive and non-positive measurements in energy distillation from quantum batteries},
journal = {Physics Letters A},
volume = {593},
pages = {132050},
year = {2026},
issn = {0375-9601},
doi = {https://doi.org/10.1016/j.physleta.2026.132050},
url = {https://www.sciencedirect.com/science/article/pii/S0375960126007243},
author = {P. Chaki and A. Bhattacharyya and K. Sen and U. Sen}
}

@article{Dou2022dicke, 
author={F.Q. Dou and Y.Q. Lu and Y.J. Wang and J.A. Sun}, 
title={Extended {Dicke} quantum battery with interatomic interactions and driving field}, 
journal={Phys. Rev. B}, 
volume={105}, 
pages={115405}, 
year={2022}, 
doi={10.1103/PhysRevB.105.115405}
}

@article{Scovil1959,
  title = {Three-Level Masers as Heat Engines},
author = {H. E. D. Scovil and E. O. Schulz-DuBois},
   journal = {Phys. Rev. Lett.},
  volume = {2},
  issue = {6},
  pages = {262--263},
  numpages = {0},
  year = {1959},
  month = {Mar},
  publisher = {American Physical Society},
  doi = {10.1103/PhysRevLett.2.262},
  url = {https://link.aps.org/doi/10.1103/PhysRevLett.2.262}
}

@article{Alicki1979,
doi = {10.1088/0305-4470/12/5/007},
url = {https://doi.org/10.1088/0305-4470/12/5/007},
year = {1979},
month = {may},
volume = {12},
number = {5},
pages = {L103},
author = {R. Alicki},
title = {The quantum open system as a model of the heat engine},
journal = {Journal of Physics A: Mathematical and General}
}

@article{Kieu2004,
  title = {The Second Law, Maxwell's Demon, and Work Derivable from Quantum Heat Engines},
  author = {T. D. Kieu},
  journal = {Phys. Rev. Lett.},
  volume = {93},
  issue = {14},
  pages = {140403},
  numpages = {4},
  year = {2004},
  month = {Sep},
  publisher = {American Physical Society},
  doi = {10.1103/PhysRevLett.93.140403},
  url = {https://link.aps.org/doi/10.1103/PhysRevLett.93.140403}
}

@article{AllahverdyanJohal2008,
  title = {Work extremum principle: Structure and function of quantum heat engines},
 author = {A. E. Allahverdyan and R. S. Johal and G. Mahler},
   journal = {Phys. Rev. E},
  volume = {77},
  issue = {4},
  pages = {041118},
  numpages = {17},
  year = {2008},
  month = {Apr},
  publisher = {American Physical Society},
  doi = {10.1103/PhysRevE.77.041118},
  url = {https://link.aps.org/doi/10.1103/PhysRevE.77.041118}
}

@article{bhattacharjee2021,
  title={Quantum thermal machines and batteries},
 author={S. Bhattacharjee and A. Dutta},
  journal={The European Physical Journal B},
  volume={94},
  number={12},
  pages={239},
  year={2021},
  publisher={Springer},
  doi={10.1140/epjb/s10051-021-00235-3}
}

@article{Biswas2022,  
title={Extraction of ergotropy: Free energy bound and application to open cycle engines},
   author={T. Biswas and M. {\L}obejko and P. Mazurek and K. Ja{\l}owiecki and M. Horodecki},
  journal={Quantum},
  volume={6},
  pages={841},
  year={2022},
  publisher={Verein zur F{\"o}rderung des Open Access Publizierens in den Quantenwissenschaften}, 
  doi={10.22331/q-2022-10-17-841}
  }

@article{Saha2026,
doi = {10.1088/1367-2630/ae6096},
url = {https://doi.org/10.1088/1367-2630/ae6096},
year = {2026},
month = {apr},
publisher = {IOP Publishing},
volume = {28},
number = {4},
pages = {044513},
author = {D. Saha and A. Bhattacharyya and K. Sen and U. Sen},
title = {Energizing a quantum system using quantum engine–inspired architectures: an optimization study},
journal = {New Journal of Physics}
}

@article{Allahverdyan2004, 
title={Maximal work extraction from finite quantum systems},
author={A. E. Allahverdyan and R. Balian and Th. M. Nieuwenhuizen},
journal={EPL (Europhysics Letters)},
volume={67},
number={4},
pages={565--571},
year={2004},
doi={10.1209/epl/i2004-10101-2}
}

@article{Pusz1978,   
title={Passive states and KMS states for general quantum systems},
  author={W. Pusz and S. L. Woronowicz},
  journal={Communications in Mathematical Physics},
  volume={58},
  number={3},
  pages={273--290},
  year={1978},
  publisher={Springer}, 
  doi={10.1007/BF01614224}
  }

@article{Lenard1978,  
title={Thermodynamical proof of the Gibbs formula for elementary quantum systems},
    author={A. Lenard},
  journal={Journal of Statistical Physics},
  volume={19},
  number={6},
  pages={575--586},
  year={1978},
  publisher={Springer}, 
  doi={10.1007/BF01011769}
  }

@article{Brown2016,   
title={Passivity and practical work extraction using Gaussian operations},
  author={E. G. Brown and N. Friis and M. Huber},
  journal={New Journal of Physics},
  volume={18},
  number={11},
  pages={113028},
  year={2016},
  publisher={IOP Publishing}, 
  doi={10.1088/1367-2630/18/11/113028}
  }

@article{Skrzypczyk2015,  
title = {Passivity, complete passivity, and virtual temperatures},
 author = {P. Skrzypczyk and R. Silva and N. Brunner},
  journal = {Phys. Rev. E},
  volume = {91},
  issue = {5},
  pages = {052133},
  numpages = {4},
  year = {2015},
  month = {May},
  publisher = {American Physical Society},
  doi = {10.1103/PhysRevE.91.052133},
  url = {https://link.aps.org/doi/10.1103/PhysRevE.91.052133}
}

@article{PerarnauLlobet2015passive,   
title={Most energetic passive states},
  author={M. Perarnau-Llobet and K. V. Hovhannisyan and M. Huber and P. Skrzypczyk and J. Tura and A. Acin},
   journal={Physical Review E},
  volume={92},
  number={4},
  pages={042147},
  year={2015},
  publisher={APS}, 
  doi={10.1103/PhysRevE.92.042147}
  }

@article{KSen2021,   
title={Local passivity and entanglement in shared quantum batteries},
   author={K. Sen and U. Sen},
  journal={Physical Review A},
  volume={104},
  number={3},
  pages={L030402},
  year={2021},
  publisher={APS}, 
  doi={10.1103/PhysRevA.104.L030402}
  }

@article{Oppenheim2002,
  title = {Thermodynamical Approach to Quantifying Quantum Correlations},
    author = {J. Oppenheim and M. Horodecki and P. Horodecki and R. Horodecki},
  journal = {Phys. Rev. Lett.},
  volume = {89},
  issue = {18},
  pages = {180402},
  numpages = {4},
  year = {2002},
  month = {Oct},
  publisher = {American Physical Society},
  doi = {10.1103/PhysRevLett.89.180402},
  url = {https://link.aps.org/doi/10.1103/PhysRevLett.89.180402}
}

@article{Andolina2019,
  title={Extractable work, the role of correlations, and asymptotic freedom in quantum batteries},
  author={G. M. Andolina and M. Keck and A. Mari and M. Campisi and V. Giovannetti and M. Polini},
  journal={Physical review letters},
  volume={122},
  number={4},
  pages={047702},
  year={2019},
  publisher={APS}, 
  doi={10.1103/PhysRevLett.122.047702}
  }

@article{Hovhannisyan2013,   
title={Entanglement generation is not necessary for optimal work extraction},
  author={K. V. Hovhannisyan and M. Perarnau-Llobet and M. Huber and A. Ac{\'\i}n},
  journal={Physical review letters},
  volume={111},
  number={24},
  pages={240401},
  year={2013},
  publisher={APS}, 
  doi={10.1103/PhysRevLett.111.240401}
  }

@article{Funo2013,   
title={Thermodynamic work gain from entanglement},
  author={K. Funo and Y. Watanabe and M. Ueda},
  journal={Physical Review A—Atomic, Molecular, and Optical Physics},
  volume={88},
  number={5},
  pages={052319},
  year={2013},
  publisher={APS}, 
  doi={10.1103/PhysRevA.88.052319}
  }

@article{PerarnauLlobet2015,   
title={Extractable work from correlations},
  author={M. Perarnau-Llobet and K. V. Hovhannisyan and M. Huber and P. Skrzypczyk and N. Brunner and A. Ac{\'\i}n},
  journal={Physical Review X},
  volume={5},
  number={4},
  pages={041011},
  year={2015},
  publisher={APS}, 
  doi={10.1103/PhysRevX.5.041011}
  }

@article{Korzekwa2016,   
title={The extraction of work from quantum coherence},
  author={D. Jennings and J. Oppenheim and M. Lostaglio and K. Korzekwa},
  journal={New J. Phys.},
  volume={18},
  year={2016}, 
  doi={10.1088/1367-2630/18/2/023045}
  }

@article{Mukherjee2016, 
author={A. Mukherjee and A. Roy and S. S. Bhattacharya and M. Banik}, 
title={Presence of quantum correlations results in a nonvanishing ergotropic gap}, 
journal={Phys. Rev. E}, 
volume={93}, 
pages={052140}, 
year={2016}, 
doi={10.1103/PhysRevE.93.052140}
}

@article{Francica2017,   
title={Daemonic ergotropy: enhanced work extraction from quantum correlations: Enhanced work extraction from quantum correlations},
  author={G. Francica and J. Goold and F. Plastina and M. Paternostro},
  journal={npj Quantum Information},
  volume={3},
  number={1},
  pages={12},
  year={2017},
  publisher={Nature Publishing Group UK London}, 
  doi={10.1038/s41534-017-0012-8}
  }

@article{Alimuddin2019,   
title={Bound on ergotropic gap for bipartite separable states},
  author={M. Alimuddin and T. Guha and P. Parashar},
  journal={Physical Review A},
  volume={99},
  number={5},
  pages={052320},
  year={2019},
  publisher={APS}, 
  doi={10.1103/PhysRevA.99.052320}
  }

@article{Francica2020,   
title={Quantum coherence and ergotropy},
  author={G. Francica and F. C. Binder and G. Guarnieri and M. T. Mitchison and J. Goold and F. Plastina},
  journal={Physical Review Letters},
  volume={125},
  number={18},
  pages={180603},
  year={2020},
  publisher={APS}, 
  doi={10.1103/PhysRevLett.125.180603}
  }

@article{Touil2022,  
title={Ergotropy from quantum and classical correlations},
  author={A. Touil and B. {\c{C}}akmak and S. Deffner},
  journal={Journal of Physics A: Mathematical and Theoretical},
  volume={55},
  number={2},
  pages={025301},
  year={2022},
  publisher={IOP Publishing}, 
  doi={10.1088/1751-8121/ac3eba}
  }

@article{SalviaGiovannetti2022,   
title={Extracting work from correlated many-body quantum systems},
  author={R. Salvia and V. Giovannetti},
  journal={Physical Review A},
  volume={105},
  number={1},
  pages={012414},
  year={2022},
  publisher={APS}, doi={10.1103/PhysRevA.105.012414}
  }

@article{Mondal2025,   
title={Isocoherent Work Extraction from Quantum Batteries: Basis-Dependent Response},
  author={S. Mondal and D. Saha and U. Sen},
  eprint = "2507.16610",
  primaryClass = "quant-ph",
  journal={arXiv preprint},
  year={2025}
  }

@article{Ahuja2025,
    author = "S. Ahuja and T. K. Konar and A. Sen De",
    title = "{Enhancing work-extraction in quantum batteries via correlated reservoirs}",
    eprint = "2509.25109",
    primaryClass = "quant-ph",
    month = "9",
    year = "2025",
    journal={arXiv preprint}
    }

@article{Quach2022,   
title={Superabsorption in an organic microcavity: Toward a quantum battery},
  author={J. Q. Quach and K. E. McGhee and L. Ganzer and D. M. Rouse and B. W. Lovett and E. M. Gauger and J. Keeling and G. Cerullo and D. G. Lidzey and T. Virgili},
  journal={Science advances},
  volume={8},
  number={2},
  pages={eabk3160},
  year={2022},
  publisher={American Association for the Advancement of Science}, 
  doi={10.1126/sciadv.abk3160}
  }

@article{Hu2022,   
title={Optimal charging of a superconducting quantum battery},
  author={C. K. Hu and J. Qiu and P. J. P. Souza and J. Yuan and Y. Zhou and L. Zhang and J. Chu and X. Pan and L. Hu and J. Li and others},
  journal={Quantum Science \& Technology},
  volume={7},
  number={4},
  pages={045018},
  year={2022},
  publisher={IOP Publishing}, 
  doi={10.1088/2058-9565/ac8444}
  }

@article{Joshi2022,   
title = {Experimental investigation of a quantum battery using star-topology NMR spin systems},
  author = {J. Joshi and T. S. Mahesh},
  journal = {Phys. Rev. A},
  volume = {106},
  issue = {4},
  pages = {042601},
  numpages = {8},
  year = {2022},
  month = {Oct},
  publisher = {American Physical Society},
  doi = {10.1103/PhysRevA.106.042601},
  url = {https://link.aps.org/doi/10.1103/PhysRevA.106.042601}
}

@article{Huang2023,
  title = {Demonstration of the charging progress of quantum batteries},
  author = {X. Huang and K. Wang and L. Xiao and L. Gao and H. Lin and P. Xue},
  journal = {Phys. Rev. A},
  volume = {107},
  issue = {3},
  pages = {L030201},
  numpages = {6},
  year = {2023},
  month = {Mar},
  publisher = {American Physical Society},
  doi = {10.1103/PhysRevA.107.L030201},
  url = {https://link.aps.org/doi/10.1103/PhysRevA.107.L030201}
}

@article{Niu2024,
  title = {Experimental Investigation of Coherent Ergotropy in a Single Spin System},
  author = {Z. Niu and Y. Wu and Y. Wang and X. Rong and J. Du},
  journal = {Phys. Rev. Lett.},
  volume = {133},
  issue = {18},
  pages = {180401},
  numpages = {6},
  year = {2024},
  month = {Oct},
  publisher = {American Physical Society},
  doi = {10.1103/PhysRevLett.133.180401},
  url = {https://link.aps.org/doi/10.1103/PhysRevLett.133.180401}
}

@article{Farina2019,   
title={Charger-mediated energy transfer for quantum batteries: An open-system approach},
  author={D. Farina and G. M. Andolina and A. Mari and M. Polini and V. Giovannetti},
  journal={Physical Review B},
  volume={99},
  number={3},
  pages={035421},
  year={2019},
  publisher={APS}, doi={10.1103/PhysRevB.99.035421}
  }

@article{Barra2019,   
title = {Dissipative Charging of a Quantum Battery},
  author = {F. Barra},
  journal = {Phys. Rev. Lett.},
  volume = {122},
  issue = {21},
  pages = {210601},
  numpages = {6},
  year = {2019},
  month = {May},
  publisher = {American Physical Society},
  doi = {10.1103/PhysRevLett.122.210601},
  url = {https://link.aps.org/doi/10.1103/PhysRevLett.122.210601}
}

@article{Salvia2025,   
title = {Optimal local work extraction from bipartite quantum systems in the presence of Hamiltonian couplings},
  author = {R. Salvia and G. De Palma and V. Giovannetti},
  journal = {Phys. Rev. A},
  volume = {107},
  issue = {1},
  pages = {012405},
  numpages = {10},
  year = {2023},
  month = {Jan},
  publisher = {American Physical Society},
  doi = {10.1103/PhysRevA.107.012405},
  url = {https://link.aps.org/doi/10.1103/PhysRevA.107.012405}
}

@article{Seah2021,   
title = {Quantum Speed-Up in Collisional Battery Charging},
  author = {S. Seah and M. Perarnau-Llobet and G. Haack and N. Brunner and S. Nimmrichter},
  journal = {Phys. Rev. Lett.},
  volume = {127},
  issue = {10},
  pages = {100601},
  numpages = {6},
  year = {2021},
  month = {Aug},
  publisher = {American Physical Society},
  doi = {10.1103/PhysRevLett.127.100601},
  url = {https://link.aps.org/doi/10.1103/PhysRevLett.127.100601}
}

@article{Ciccarello2022, 
title = {Quantum collision models: Open system dynamics from repeated interactions},
journal = {Physics Reports},
volume = {954},
pages = {1-70},
year = {2022},
issn = {0370-1573},
doi = {https://doi.org/10.1016/j.physrep.2022.01.001},
url = {https://www.sciencedirect.com/science/article/pii/S0370157322000035},
author = {F. Ciccarello and S. Lorenzo and V. Giovannetti and G. M. Palma}
}

@article{Cakmak2020,   
title = {Ergotropy from coherences in an open quantum system},
  author = {B. \ifmmode \mbox{\c{C}}\else \c{C}\fi{}akmak},
  journal = {Phys. Rev. E},
  volume = {102},
  issue = {4},
  pages = {042111},
  numpages = {11},
  year = {2020},
  month = {Oct},
  publisher = {American Physical Society},
  doi = {10.1103/PhysRevE.102.042111},
  url = {https://link.aps.org/doi/10.1103/PhysRevE.102.042111}
}

@article{Simon2025, 
  title = {Correlations Enable Lossless Ergotropy Transport},
  author = {R. P. A. Simon and J. Anders and K. V. Hovhannisyan},
  journal = {Phys. Rev. Lett.},
  volume = {134},
  issue = {1},
  pages = {010408},
  numpages = {9},
  year = {2025},
  month = {Jan},
  publisher = {American Physical Society},
  doi = {10.1103/PhysRevLett.134.010408},
  url = {https://link.aps.org/doi/10.1103/PhysRevLett.134.010408}
}

@article{Hovhannisyan2024,  
title={Concentration of ergotropy in many-body systems},
  author={K. V. Hovhannisyan and R. Simon and J. Anders},
  journal={arXiv preprint},
  eprint = "2412.19801",
  primaryClass = "quant-ph",
  year={2024}
  }

@article{Malavazi2025, 
title = {Two-Time Weak-Measurement Protocol for Ergotropy Protection in Open Quantum Batteries},
    author = {A. H. A. Malavazi and R. Sagar and B. Ahmadi and P. R. Dieguez},
  journal = {PRX Energy},
  volume = {4},
  issue = {2},
  pages = {023011},
  numpages = {28},
  year = {2025},
  month = {Jun},
  publisher = {American Physical Society},
  doi = {10.1103/bv4w-jr6q},
  url = {https://link.aps.org/doi/10.1103/bv4w-jr6q}
}

@article{Scharlau2018,  title={Quantum Horn's lemma, finite heat baths, and the third law of thermodynamics},
 author={J. Scharlau and M. P. Mueller},
  journal={Quantum},
  volume={2},
  pages={54},
  year={2018},
  publisher={Verein zur F{\"o}rderung des Open Access Publizierens in den Quantenwissenschaften}, 
  doi={10.22331/q-2018-02-22-54}
  }

@article{Leibfried2003,
title = {Quantum dynamics of single trapped ions},
  author = {D. Leibfried and R. Blatt and C. Monroe and D. Wineland},
  journal = {Rev. Mod. Phys.},
  volume = {75},
  issue = {1},
  pages = {281--324},
  numpages = {0},
  year = {2003},
  month = {Mar},
  publisher = {American Physical Society},
  doi = {10.1103/RevModPhys.75.281},
  url = {https://link.aps.org/doi/10.1103/RevModPhys.75.281}
}

@article{Doherty2013, 
title = "The nitrogen-vacancy colour centre in diamond",
author = "M. W. Doherty and N. B. Manson and P. Delaney and F. Jelezko and J. Wrachtrup and L. C. L. Hollenberg",
year = "2013",
month = jul,
day = "1",
doi = "10.1016/j.physrep.2013.02.001",
language = "English",
volume = "528",
pages = "1--45",
journal = "Physics Reports",
issn = "0370-1573",
publisher = "Elsevier B.V.",
number = "1",
}

@article{Kjaergaard2020,   
title={Superconducting qubits: Current state of play},
  author={M. Kjaergaard and M. E. Schwartz and J. Braum{\"u}ller and P. Krantz and J. I.-J. Wang and S. Gustavsson and W. D. Oliver},
  journal={Annual Review of Condensed Matter Physics},
  volume={11},
  number={1},
  pages={369--395},
  year={2020},
  publisher={Annual Reviews}, doi={10.1146/annurev-conmatphys-031119-050605}
  }

@article{Blais2021, 
title = {Circuit quantum electrodynamics},
  author = {A. Blais and A. L. Grimsmo and S. M. Girvin and A. Wallraff},
  journal = {Rev. Mod. Phys.},
  volume = {93},
  issue = {2},
  pages = {025005},
  numpages = {72},
  year = {2021},
  month = {May},
  publisher = {American Physical Society},
  doi = {10.1103/RevModPhys.93.025005},
  url = {https://link.aps.org/doi/10.1103/RevModPhys.93.025005}
}

@article{Janzing2000,   
title={Thermodynamic cost of reliability and low temperatures: Tightening Landauer's principle and the second law},
  author={D. Janzing and P. Wocjan and R. Zeier and R. Geiss and Th. Beth},
  journal={International Journal of Theoretical Physics},
  volume={39},
  number={12},
  pages={2717--2753},
  year={2000},
  publisher={Springer}, 
  doi={10.1023/A:1026422630734}
  }

@article{Horodecki2013,   
title={Fundamental limitations for quantum and nanoscale thermodynamics},
  author={M. Horodecki and J. Oppenheim},
  journal={Nature communications},
  volume={4},
  number={1},
  pages={2059},
  year={2013},
  publisher={Nature Publishing Group UK London}, 
  doi={10.1038/ncomms3059}
  }

@article{Brandao2013,
  title     = {Resource Theory of Quantum States Out of Thermal Equilibrium},
  author    = {Brand{\~a}o, F. G. S. L. and Horodecki, M. and Oppenheim, J. and Renes, J. M. and Spekkens, R. W.},
  journal   = {Phys. Rev. Lett.},
  volume    = {111},
  issue     = {25},
  pages     = {250404},
  numpages  = {5},
  year      = {2013},
  month     = {Dec},
  publisher = {American Physical Society},
  doi       = {10.1103/PhysRevLett.111.250404},
  url       = {https://link.aps.org/doi/10.1103/PhysRevLett.111.250404}
}

@article{Brandao2015,   
title={The second laws of quantum thermodynamics},
  author={F. Brand{\~a}o and M. Horodecki and N. Ng and J. Oppenheim and S. Wehner},
  journal={Proceedings of the National Academy of Sciences},
  volume={112},
  number={11},
  pages={3275--3279},
  year={2015},
  publisher={National Academy of Sciences}, 
  doi={10.1073/pnas.1411728112}
  }

@article{Gour2015,   
title={The resource theory of informational nonequilibrium in thermodynamics},
  author={G. Gour and M. P. M{\"u}ller and V. Narasimhachar and R. W. Spekkens and N. Y. Halpern},
  journal={Physics Reports},
  volume={583},
  pages={1--58},
  year={2015},
  publisher={Elsevier}, 
  doi={10.1016/j.physrep.2015.04.003}
  }

@article{Lostaglio2019,   
title={An introductory review of the resource theory approach to thermodynamics},
  author={M. Lostaglio},
  journal={Reports on Progress in Physics},
  volume={82},
  number={11},
  pages={114001},
  year={2019},
  publisher={IOP Publishing}, 
  doi={10.1088/1361-6633/ab46e5}
  }

@article{LostaglioPRX2015,   
title = {Quantum Coherence, Time-Translation Symmetry, and Thermodynamics},
  author = {M. Lostaglio and K. Korzekwa and D. Jennings and T. Rudolph},
  journal = {Phys. Rev. X},
  volume = {5},
  issue = {2},
  pages = {021001},
  numpages = {11},
  year = {2015},
  month = {Apr},
  publisher = {American Physical Society},
  doi = {10.1103/PhysRevX.5.021001},
  url = {https://link.aps.org/doi/10.1103/PhysRevX.5.021001}
}

@article{Lostaglio2015,   
title={Description of quantum coherence in thermodynamic processes requires constraints beyond free energy},
  author={M. Lostaglio and D. Jennings and T. Rudolph},
  journal={Nature communications},
  volume={6},
  number={1},
  pages={6383},
  year={2015},
  publisher={Nature Publishing Group UK London}, 
  doi={10.1038/ncomms7383}
  }

@article{Cwiklinski2015,
  title     = {Limitations on the Evolution of Quantum Coherences: Towards Fully Quantum Second Laws of Thermodynamics},
  author    = {{\'{C}}wikli{\'{n}}ski, P. and Studzi{\'{n}}ski, M. and Horodecki, M. and Oppenheim, J.},
  journal   = {Phys. Rev. Lett.},
  volume    = {115},
  issue     = {21},
  pages     = {210403},
  numpages  = {5},
  year      = {2015},
  month     = {Nov},
  publisher = {American Physical Society},
  doi       = {10.1103/PhysRevLett.115.210403},
  url       = {https://link.aps.org/doi/10.1103/PhysRevLett.115.210403}
}

@article{Jaynes1963,
  title={Comparison of quantum and semiclassical radiation theories with application to the beam maser},
  author={E. T. Jaynes and F. W. Cummings},
  journal={Proceedings of the IEEE},
  volume={51},
  number={1},
  pages={89--109},
  year={1963},
  publisher={IEEE},
  doi     = {10.1109/PROC.1963.1664}
}

@article{LSM1961, 
author={E. Lieb and T. Schultz and D. Mattis}, 
title={Two soluble models of an antiferromagnetic chain}, 
journal={Ann. Phys.}, 
volume={16}, 
pages={407}, 
year={1961}, 
doi={10.1016/0003-4916(61)90115-4}
}

@book{Bengtsson2006,
    author = "I. Bengtsson and K. Zyczkowski",
    title = "{Geometry of Quantum States}",
    doi = "10.1017/cbo9780511535048",
    month = "5",
    year = "2006",
  publisher={Cambridge university press}
}

@book{NielsenChuang2010,    
author = "M. A. Nielsen and I. L. Chuang",
    title = "{Quantum Computation and Quantum Information}",
    doi = "10.1017/cbo9780511976667",
    isbn = "978-0-521-63503-5",
    publisher = "Cambridge University Press",
    month = "6",
    year = "2012"
}

@book{Boyd2004,
  title="{Convex Optimization}",
  author="S. Boyd and L. Vandenberghe",
  year="2004",
  publisher="Cambridge university press",
  place = "Cambridge",
  doi = "10.1017/CBO9780511804441"
}

@ARTICLE{2020SciPy-NMeth,
author  = {P. Virtanen and R. Gommers and others},
  title   = {{{SciPy} 1.0: Fundamental Algorithms for Scientific
            Computing in Python}},
  journal = {Nature Methods},
  year    = {2020},
  volume  = {17},
  pages   = {261--272},
  adsurl  = {https://rdcu.be/b08Wh},
  doi     = {10.1038/s41592-019-0686-2},
}
\end{document}